\documentclass{CSML}

\def\dOi{12(3:13)2016}
\lmcsheading%
{\dOi}
{1--31}
{}
{}
{Apr.~\phantom05, 2016}
{Sep.~28, 2016}
{}

\ACMCCS{[{\bf Theory of computation}]: Logic}
\subjclass{F.4.1}





\usepackage{amssymb}
\usepackage{amsmath}
\usepackage{mathrsfs}

\usepackage{color}

\usepackage{microtype}
\usepackage{bussproofs}
\usepackage{upgreek}

\usepackage{hyperref}
\usepackage{graphicx}

\bibliographystyle{plain}

\begin{document}
\title[Curry-Howard Correspondence for Dummett's Logic]{On Natural Deduction for Herbrand Constructive Logics I: Curry-Howard Correspondence for Dummett's Logic LC}






\author[F.~Aschieri]{Federico Aschieri}
\address{Institut f\"ur Diskrete Mathematik und Geometrie\\Technische Universit\"at Wien\\ Wiedner Hauptstra\ss e 8-10/104, 1040, Vienna, Austria}
\thanks{This work was funded by the Austrian Science Fund FWF Lise Meitner grant M 1930--N35}




\keywords{natural deduction, Dummett logic,
Curry--Howard, normalization, Herbrand theorem}




\newcommand*{\Cdot}[1][1.5]{%
  \mathpalette{\CdotAux{#1}} .%
}
\newdimen\CdotAxis
\newcommand*{\CdotAux}[3]{%
  {%
    \settoheight\CdotAxis{$#2\vcenter{}$}%
    \sbox0{%
      \raisebox\CdotAxis{%
        \scalebox{#1}{%
          \raisebox{-1.1pt}{%
            $\mathsurround=0pt #2#3$%
          }%
        }%
      }%
    }%
    \dp0=0pt %
    \sbox2{$#2\bullet$}%
    \ifdim\ht2<\ht0 %
      \ht0=\ht2 %
    \fi
    \sbox2{$\mathsurround=0pt #2#3$}%
    \hbox to \wd2{\hss\usebox{0}\hss}%
  }%
}

\newcommand{\comment}[1]{}

\newcommand{\AD}                       { {\mathsf{D^{\star}}} }
\newcommand{\redn}              {\succ}
\newcommand{\witn}[2]               {{\mathtt{W}^{\exists {#2} {#1}}}}
\newcommand{\red}           {\Vdash}

\newcommand{\Hypn}[2]                   {{\mathtt{H}^{\forall {#2} \mathsf{#1}}}}

\newcommand{\D}                       { {\mathsf{D}} }
\newcommand{\HA}                       { {\mathsf{HA}} }
\newcommand{\IL}                      {{\mathsf{IL}}}
\newcommand{\LC}     {{\mathsf{LC}}}
\newcommand{\ALC}     {{\mathsf{LC}^{\star}}}
\newcommand{\LCS}     {{\mathsf{LC}_{2}}}
\newcommand{\ALCS}     {{\mathsf{LC}_{2}^{\star}}}
\newcommand{\LJ}{\mathsf{LJ}}
\newcommand{\Language}                 {\mathcal{L}}
\newcommand{\E}[3]                   {{ #2 \parallel_{#1} #3}}
\newcommand{\Ez}[3]                   {{ #2\, |_{#1}\, #3}}
\newcommand{\proj}                     { {\mathsf{p}} }
\newcommand{\inj}                   {{{\upiota}}}
\newcommand{\emp}[1]    {{\mathsf{P}_{#1}}}
\newcommand{\hyp}                  {{\mathtt{H}_{\emp{}}}}
\newcommand{\Hyp}[3]                   {{\mathtt{H}_{#1}^{\forall {#3} {#2}}}}
\newcommand{\Hypz}[2]               {{\mathtt{H}_{#1}^{{#2}}}}
\newcommand{\wit}                 {{\mathsf{wit}_{\emp{}}}}
\newcommand{\Wit}[3]               {{\mathtt{W}_{#1}^{\exists {#3} {#2} }}}
\newcommand{\Witz}[2]               {{\mathtt{W}_{#1}^{{#2}}}}
\newcommand{\real}              {\, \Vdash\, }
\newcommand{\econt}[1]           {\mathcal{EM}[#1]}
\newcommand{\cruno}           {{\textbf{(CR1)}}}
\newcommand{\crdue}           {{\textbf{(CR2)}}}
\newcommand{\crtre}           {{\textbf{(CR3)}}}
\newcommand{\crquattro}           {{\textbf{(CR4)}}}
\newcommand{\crcinque}           {{\textbf{(CR5)}}}
\newcommand{\num}[1]{\overline{#1}}
\newcommand{\nSystemT}{\mathsf{T}^{\star}}
\newcommand{\nlambda}{\mathsf{\Lambda}^{\star}}
\newcommand{\SystemT}{\mathcal{T}}
\newcommand{\SystemTG}                  {\mathsf{T}}
\newcommand{\ifn}{{\mathsf{if}}}
\newcommand{\ifthen} [3]{ {\mathsf{if}\ {#1}\ \mathsf{then}\ {#2}\ \mathsf{else}\ {#3} } }
\newcommand{\Nat}                      { {\tt N} }
\newcommand{\Bool}                     { {\tt Bool} }
\newcommand{\sn}{\mathsf{HN}}
\newcommand{\snn}{\mathsf{HN}^{\star}}
\newcommand{\rec}                          {{\mathsf{R}}}
\newcommand{\True}                     { {\tt{True}} }
\newcommand{\False}                    { {\tt{False}} }
\newcommand{\suc}{\mathsf{S}}
\newcommand{\itr}{\mathsf{It}}
\newcommand{\eq}             {{\mathsf{eq}}}
\newcommand{\add}             {{\mathsf{add}}}
\newcommand{\mult}             {{\mathsf{mult}}}
\newcommand{\con}[1]          {{\mathsf{#1}}}
\newcommand{\seq}[1]       {{\vec{#1}}}

\newcommand{\pair}[2]{\langle #1,#2\rangle}
\newcommand{\prj}[2]{{[#2]}\pi_{#1}}
\newcommand{\const}{\mathsf{c}}
\newcommand{\trans}[1] {{#1}^{*}}
\newcommand{\nf}{\mathsf{NF}}
\newcommand{\hnf}{\mathsf{HNF}}
\newcommand{\nfn}{\mathsf{NF}^{*}}
\newcommand{\postnf}{\mathsf{NF}}
\newcommand{\gn}{\mathsf{GN}}
\newcommand{\prel}{\,\mathscr{P}\,}
\newcommand{\seqt}[3]{#1_{1}^{#3}\ldots #1_{#2}^{#3}}
\newcommand{\subst} [1]         { {\overline{#1}} }
\newcommand{\succrom}{\crom\hspace{-0.2ex}\mathbf{|}\hspace{-0.6ex}\mathbf{|}}
\newcommand{\usn}{\mathsf{UN}}
\newcommand{\Esucc}[3]{#2\, \succrom_{#1}\, #3}
\newcommand{\Ecrom}[3]{#2 \parallel_{#1} #3}
\newcommand{\crom}{\mathbf{|}\hspace{-0.6ex}\mathbf{|}\hspace{-0.6ex}\mathbf{|}\hspace{-0.6ex}\mathbf{|}\hspace{-0.6ex}\mathbf{|}\hspace{-0.6ex}\mathbf{|}\hspace{-0.6ex}\mathbf{|}\hspace{-0.6ex}\mathbf{|}}
\newcommand{\infi}[1]    {{\mathscr{#1}}}
\newcommand{\abort}    {{\mathcal{A}}}
\newcommand{\cand}[1] {\smash{\overset{\centerdot}{\mathcal{#1}}}}

\newcommand{\DL}   {{\mathsf{IL}+\mathsf{D}}}
\newcommand{\dott}  { {{\,\Cdot[2.4]\,}}}
\newcommand{\efq}[2]{{\mathsf{efq}_{#1}(#2)}}
\newcommand{\ILS}{{\mathsf{IL}^{2}}}
\newcommand{\sml}[1] {{\scriptscriptstyle #1}}


\newcommand{\linea}{\leavevmode\hrule\mbox{}}
\newcommand{\dlinea}{\leavevmode\hrule\vspace{1pt}\hrule\mbox{}}

\newcommand{\stefano}         {\textcolor{red}{\blacksquare}}
\newcommand{\evaluates}       { {\;\equiv\;} }


\newcommand{\str}[2]{{#1}_{#2}}
\newcommand{\pvar}[1]{\mathcal{#1}}






\begin{abstract}
Dummett's logic $\LC$ is intuitionistic logic extended with Dummett's axiom: for every two statements the first implies the second or the second implies the first. 
We present a natural deduction and a Curry-Howard correspondence for first-order and second-order Dummett's logic. We add to the lambda calculus an operator which represents, from the viewpoint of programming, a mechanism for representing parallel computations and communication between them, and from the viewpoint of logic, Dummett's axiom.  We prove that our typed calculus is normalizing and show that proof terms for existentially quantified formulas reduce to a list of individual terms forming an Herbrand disjunction. 
\end{abstract}
\maketitle



\section{Introduction} 
\label{section-introduction} 
{


We call \emph{Herbrand constructive} any intermediate logic -- a logic stronger than intuitionistic but weaker than classical -- which enjoys a strong form of Herbrand's theorem: for \emph{every} provable formula $\exists \alpha\, A$, the logic proves as well an Herbrand disjunction
 $$A[m_{1}/\alpha]\lor \ldots \lor A[m_{k}/\alpha]$$
Of course intuitionistic logic is trivially Herbrand constructive, but classical logic is not: $A$ is arbitrary! In between, there are several interesting logics which do have the property. Yet for Herbrand constructive logics there are no known natural deduction formulations with associated Curry-Howard correspondences, except in trivial cases.  We launch here a new series of papers to fill this void. 

We begin with Dummett's first-order and second-order logic $\mathsf{LC}$: intuitionistic logic extended with the so-called Dummett linearity axiom
$$(A\rightarrow B)\lor (B\rightarrow A)$$
$\LC$ was introduced by Dummett \cite{Dummett} as an example, in the propositional case, of a many-valued logic with a countable set of truth values. Its propositional fragment is also called G\"odel-Dummett logic, because it is based on the truth definition given in G\"odel's seminal paper on many-valued logics \cite{Godel}. In this case, the logic can be formalized by Corsi's sequent calculus \cite{Corsi} or by the more elegant  hypersequent calculus devised by Avron \cite{Avron}, \cite{Agata}. Surprisingly, Avron's hypersequent calculus does not work for first-order $\LC$: only recently Tiu \cite{Tiu} provided a more involved version of it, which indeed corresponds to $\LC$ at the first-order.

\subsection{Hyper Natural Deduction?}

In all this story, natural deduction is the great absent. Since it is one of the most celebrated logical deduction systems, the question is: how is that possible?

 The first issue is that $\LC$ is evidently a non-constructive system: for example, it proves the excluded middle for all negated formulas: $\lnot A\lor \lnot\lnot A$;  and Dummett's axiom poses even more problems. As it is well known, natural deduction was put aside by its own inventor, Gentzen, precisely for the reason that he was not able to prove a meaningful normalization theorem for classical natural deduction, whilst he \emph{was} for the intuitionistic case \cite{vonPlato}. It indeed took a surprisingly long time to discover suitable reduction rules for classical natural deduction systems with all connectives  \cite{deGroote}, \cite{AschieriZH}. Even this accomplishment, however, is still not enough:  although Dummett's axiom is classically provable, the known classical natural deduction systems fail to provide a refined computational interpretation of $\LC$. The trouble is that $\LC$ proofs are not closed under classical reductions, because during the normalization process instances of Dummett's axiom are replaced by reductio ad absurdum in $\lambda\mu$-calculus \cite{Parigot} and in \cite{DanosKrivine}, and by excluded middle in \cite{AschieriZH}.
  
 The second issue is that existential quantifiers are witnessed by multiple terms and so a parallel computational mechanism is desirable. No Curry-Howard correspondence offered a suitable one until very recently \cite{AschieriZH}. 
 
 Sequent calculus solves these issues by means of \emph{structural rules}. Classical logic is rendered by allowing more formulas on the righthand side of a sequent; Dummett's $\LC$ is rendered by allowing sequences of sequents and a communication mechanism between them. On the contrary, natural deduction usually solves the same issues by means of new \emph{reduction rules}. When one wants to add some new axiom to intuitionistic natural deduction, it is enough to add it straight away or as a rule, and all the ingenuity of the construction lies in the proof transformations associated to the axiom. 
 
Inspired by hypersequents, Baaz, Ciabattoni and Ferm\"uller \cite{HyperAgata} did not follow the latter path and changed instead the very structure of natural deduction into an hyper version. The resulting  logical calculus is  an \emph{hyper natural deduction} corresponding to G\"odel-Dummett first-order logic (which is not to be confused with first-order $\LC$ and can be axiomatized by adding to $\LC$ the axiom scheme $\forall \alpha\, (A\lor B)\rightarrow \forall \alpha\, A\,\lor\, B$, where $\alpha$ does not occur in $B$). The Normal Form Theorem, however, is only obtained by translation into the hypersequent calculus, followed by cut-elimination and backward translation: no reduction rules for hyper deductions were provided.   This last task was carried out by  Beckmann and Preining \cite{Beckmann}, who formulated a \emph{propositional} hyper natural deduction with a proof normalization procedure. Unfortunately, the structural rules are so complicated that the adjective ``natural'' does not fit any more. Another attempt along the ``hyper line'' has  been made by Hirai \cite{Hirai}, with the addition of an associated lambda calculus. One cannot speak of a Curry-Howard correspondence, however, because Subject reduction does not hold: there is no match between computational steps and proof reductions.  
 
 \subsection{Natural Deduction Again}
 Although hyper natural deduction is a legitimate proof system in its own right, the ``hyper approach'' is not the one we follow. 
 For two reasons. 
 
 The first reason is that we will show that natural deduction works perfectly as it is. There is no need to change its structure and, to render Dummett's axiom, it sufficient to add the inference rule
\begin{prooftree}
\AxiomC{$[A\rightarrow B]$}
\noLine
\UnaryInfC{$\vdots$}
\noLine
\UnaryInfC{$C$}
\AxiomC{$[B\rightarrow A]$}
\noLine
\UnaryInfC{$\vdots$}
\noLine
\UnaryInfC{$C$}
\RightLabel{$\D$}
\BinaryInfC{$ C$}
\end{prooftree}
which allows to conclude unconditionally $C$ from two different deductions of $C$: one from the hypothesis $A\rightarrow B$ and one from the  hypothesis $B\rightarrow A$.
We shall define simple reduction rules for proofs ending with this inference and we shall show that they are all we need to extract witnesses for existentially quantified formulas.

The second reason is that natural deduction should stay \emph{natural}. This is the very motivation that led to its discovery. Indeed, Gentzen starts his celebrated work \cite{Gentzen} on natural deduction and sequent calculus complaining that the proof systems known at the time were far removed from the actual mathematical reasoning. And his main goal was to set up a formalism with the aim of ``\emph{reproducing as precisely as possible the real logical reasoning in mathematical proofs}''. To avoid betraying natural deduction's philosophical motivations,  there is no alternative but to add an inference rule that naturally mirrors the kind of reasoning corresponding to Dummett's axiom, which is our approach. 

\subsection{Realizability}

One of the most attractive features of intuitionistic natural deduction is that, in a very precise sense, it does not need a truth-based semantics. Logical inferences are divided into two groups: introduction rules and elimination rules. And as Gentzen \cite{Gentzen} himself famously suggested, introduction rules \emph{define}, so to speak,  the meaning of the logical constants they introduce; elimination rules, on the other hand, are nothing but \emph{consequences} of these definitions. In other words, introduction rules are self-justifyng, because they fix themselves the meaning of their conclusions, whereas elimination rules are sound in virtue of the meaning fixed by the introductions. For example, the rule 
\begin{prooftree}
 \AxiomC{$[A]$}
\noLine
\UnaryInfC{$\vdots$}
\noLine
\UnaryInfC{$B$}
\UnaryInfC{$A\rightarrow B$}
\end{prooftree}
says that the grounds for asserting $A\rightarrow B$ consist in a proof of $B$ from the hypothesis $A$;  therefore,  the elimination
$$\begin{array}{c}    A\rightarrow B\ \ \  \ \ A \\ \hline
 B
\end{array} $$
is automatically justified: if we have a proof of $A$ we can plug it into the proof of $B$ from $A$, whose existence is warranted by the meaning of  $A\rightarrow B$, and obtain a proof of $B$. The reverse approach works as well: we may consider elimination rules as meaning constitutive and treat introduction rules as consequences of the meaning fixed by eliminations. In other words, meaning is determined by \emph{how we use} a statement, by what we can directly \emph{obtain} from the statement; we shall adopt this \emph{pragmatist} standpoint, elaborated by Dummett himself \cite{DummettL}.

This idea of internal justification, as it is,  cannot be generalized straight away for extensions of intuitionistic logic: new inferences tend to break the harmony between introductions and eliminations. It is at this point that Brouwer's view of logic comes into play.  According to Brouwer \cite{Brouwer}, the string of ``logical'' steps appearing in a mathematical proof is in reality a sequence of mathematical constructions. What we perceive as inference rules are instead transformations of constructions for the premises into constructions for the conclusion. This insight finds a precise formalization by means of the Curry-Howard isomorphism: a proof is indeed isomorphic to an effective construction, in fact, \emph{it is}, in and of itself,  {a construction}. 

Since proofs are constructions, the role of semantics is just \emph{explaining what these constructions do}.  Hence, a proof-theoretic semantics of an intermediate logic is in principle always possible and is made of two ingredients: a formalization of \emph{proofs as programs} and a semantical description of what these programs achieve with their calculations. The first is obtained through the decoration of deduction trees with lambda terms, the second is the task of \emph{realizability}.

Realizability was introduced by Kleene \cite{Kleene} to computationally interpret intuitionistic first-order Arithmetic, but it is Kreisel's \cite{Kreiselm} later version with typed terms which embodies the modern perspective on the subject. Though it was initially conceived just for intuitionistic theories, realizability can be extended to intuitionistic Arithmetic with Markov's principle \cite{AZMarkov}, to intuitionistic Arithmetic with the simplest excluded middle $\mathsf{EM}_{1}$ \cite{ABB} and even all the way up to the strongest classical theories \cite{AschieriCSL, AschieriHAS, Krivine}. Realizability  replaces the notion of truth with the notion of \emph{constructive evidence}. A formula holds if it is \emph{realized} by some typed program, providing some constructive information about the formula.

 In the following, we shall build a realizability interpretation for Dummett's $\LC$, inspired by Krivine's realizability \cite{Krivine, DanosKrivine}. By construction, every realizer always terminates its computations and, in particular, whenever it realizes an existentially quantified formula $\exists\alpha\, A$, it reduces to a term of the shape
$$(m_{0}, v_{0})\parallel_{a_{1}} (m_{1}, v_{1})\parallel_{a_{2}}\ldots \parallel_{a_{k}} (m_{k}, v_{k})$$
with the property that 
$$\LC \vdash A[m_{1}/\alpha]\lor \dots \lor A[m_{k}/\alpha]$$
The circle is closed by a soundness theorem, the Adequacy Theorem: every formula provable in $\LC$ is realized by a closed program, which immediately implies the Normalization Theorem -- every proof reduces to a normal form -- and that $\LC$ is Herbrand constructive. Therefore, to extract an Herbrand disjunction it suffices to reduce any proof of any existentially quantified formula to a normal form, according to a very simple set of reduction rules. 

\subsection{Reduction Rules}

To find simple and terminating reduction rules for a natural deduction system is always tricky, but once the job is done, the reductions often look so natural that they appear inevitable. 
It is the effort of removing obstacles toward a good normal form what inevitably leads to these  reductions, as the flow of a river leads to the sea. 
In the case of $\LC$, the main obstacles toward witness extraction for a formula $\exists \alpha\, A$ are  configurations in which one of the hypotheses introduced by the Dummett inference blocks the reduction. For example, let us consider this proof shape:
\vspace{-2ex}
\begin{prooftree}
\AxiomC{$[A\rightarrow B]$}
\AxiomC{$$}
\noLine
\UnaryInfC{$\vdots$}
\noLine
\UnaryInfC{$A$}
\BinaryInfC{$B$}
\doubleLine\RightLabel{$\mathsf{EL}$}
\UnaryInfC{$\exists \alpha\, C$}
\noLine
\AxiomC{$[B\rightarrow A]$}
\UnaryInfC{$\mathcal{D}$}
\noLine
\UnaryInfC{$\exists \alpha\, C$}
\RightLabel{$\D$}
\BinaryInfC{$\exists\alpha\, C$}
\end{prooftree}
where $\exists \alpha\, C$ has been obtained from $B$ by a series of elimination rules. It is clear that no witness can be retrieved in the left branch of the proof above, because there is just a proof of $A$ and, magically, a ``void'' proof of $B$ obtained by modus ponens from $A$ and the arbitrary hypothesis $A\rightarrow B$. But can't we  just send the proof of $A$ to the right branch of the Dummett rule and obtain a direct proof of $\exists \alpha\, C$, like this? 
\vspace{-3ex}
\begin{prooftree}
\AxiomC{$$}
\noLine
\UnaryInfC{$\vdots$}
\noLine
\UnaryInfC{$A$}
\UnaryInfC{$B\rightarrow A$}\
\noLine
\UnaryInfC{$\mathcal{D}$}
\noLine
\UnaryInfC{$\exists \alpha\, C$}
\end{prooftree}
No! In fact, the proof of $A$ too might depend on the hypothesis  $A\rightarrow B$, so that the original proof could be
\begin{prooftree}
\AxiomC{$[A\rightarrow B]$}
\AxiomC{$[A\rightarrow B]$}
\noLine
\UnaryInfC{$\vdots$}
\noLine
\UnaryInfC{$A$}
\insertBetweenHyps{\hskip -3pt}
\BinaryInfC{$B$}
\doubleLine\RightLabel{$\mathsf{EL}$}
\UnaryInfC{$\exists \alpha\, C$}
\noLine
\AxiomC{$[B\rightarrow A]$}
\UnaryInfC{$\mathcal{D}$}
\noLine
\UnaryInfC{$\exists \alpha\, C$}
\RightLabel{$\D$}
\BinaryInfC{$\exists\alpha\, C$}
\end{prooftree}
and thus the previous transformation is unsound. But the idea of sending the proof of $A$ to the right branch can work if the right branch is in turn moved on the left like this
\begin{prooftree}
\AxiomC{$[A\rightarrow B]$}
\noLine
\UnaryInfC{$\vdots$}
\noLine
\UnaryInfC{$A$}
\UnaryInfC{$B\rightarrow A$}\
\noLine
\UnaryInfC{$\mathcal{D}$}
\noLine
\UnaryInfC{$\exists \alpha\, C$}
\noLine
\AxiomC{$[B\rightarrow A]$}
\UnaryInfC{$\mathcal{D}$}
\noLine
\UnaryInfC{$\exists \alpha\, C$}
\RightLabel{$\D$}
\BinaryInfC{$\exists\alpha\, C$}
\end{prooftree}
The reductions that we shall give  generalize this transformation in order to work in every situation. 
\subsection{Curry-Howard Correspondence}

It is more convenient to express proof reductions in terms of program reductions, because for that purpose the lambda notation is superior to the proof tree notation. For this reason, we shall define a lambda calculus isomorphic to natural deduction for $\LC$ and  then define an \emph{head reduction strategy} for lambda terms, inspired by Krivine's strategy \cite{Krivine}. The termination of head reduction will just be a consequence of soundness of $\LC$ with respect to realizability, while the perfect match between program reductions and proof reductions will as usual be consequence of the Subject Reduction Theorem. The decoration of intuitionistic inferences with programs is standard and Dummett's rule will be decorated in the following way
\begin{prooftree} 
\AxiomC{$[a^{\scriptscriptstyle A\rightarrow B}: A\rightarrow B]$}
\noLine
\UnaryInfC{$\vdots$}
\noLine
\UnaryInfC{$u: C$}
\AxiomC{$[a^{\scriptscriptstyle B\rightarrow A}: B\rightarrow A]$}
\noLine
\UnaryInfC{$\vdots$}
\noLine
\UnaryInfC{$v: C$}
\RightLabel{$\D$}
\BinaryInfC{$u \parallel_{a} v: C$}
\end{prooftree}
The parallel operator $\parallel_{a}$ is inspired by the exception operator studied in \cite{AschieriZH} and keeps using the variable $a$ for communication purposes. The variable $a$ has the task of sending terms from $u$ to $v$ and viceversa, as well as allowing $u$ to call the process $v$ whenever it needs it and viceversa.


}
\subsection{Plan of the Paper}

 In Section \S \ref{section-system} we introduce a Curry-Howard interpretation of intuitionistic first-order natural deduction extended with the Dummett rule $\D$. We first describe the calculus together with its computational rules and then discuss its proof theoretical interpretation. \\ In Section \S \ref{section-systemNEM} we prove the Normalization Theorem and the soundness of realizability with respect to $\LC$. \\In Section \S \ref{sec:nfhd}, we prove that $\LC$ is Herbrand constructive and in particular
 that from any closed term having as type an existentially quantified formula, one can  extract a corresponding Herbrand disjunction.\\
 In Section \S \ref{section-secondorder} we extend the previous results to the second-order $\LCS$, achieving its first computational interpretation, for there is no known cut-elimination procedure for second-order hypersequent calculus.

\section{The System $\LC$}
\label{section-system}
In this section we describe a standard natural deduction system for intuitionistic first-order logic, with a term assignment based on the Curry-Howard correspondence (e.g. see \cite{Sorensen}), and add on top of it an operator which formalizes Dummett's axiom. First, we shall describe the lambda terms and their computational behavior, proving as main result the Subject Reduction Theorem, stating that the reduction rules preserve the type. Then, we shall analyze the logical meaning of the reductions and present them as pure proof transformations.


We start with the standard first-order language of formulas.

\begin{defi}[Language of $\LC$]\label{definition-languagear}
The language $\Language$ of $\LC$ is defined as follows.
\begin{enumerate}

\item
The \textbf{terms} of $\Language$ are inductively defined as either variables $\alpha, \beta,\ldots$ or constants  $\con{c}$ or expressions of the form $\con{f}(m_{1}, \ldots, m_{n})$, with $\con{f}$ a function constant of arity $n$ and $m_{1}, \ldots, m_{n}\in\Language$.  

\item
There is a countable set of \textbf{predicate symbols}. The \emph{atomic formulas} of $\Language$ are all the expressions of the form $\mathcal{P}(m_{1}, \ldots, m_{n})$ such that  $\mathcal{P}$ is a predicate symbol of arity $n$ and $m_{1}, \ldots, m_{n}$ are terms of $\Language$. We assume to have a $0$-ary predicate symbol $\bot$ which represents falsity.

\item
The \textbf{formulas} of $\Language$ are built from atomic formulas of $\Language$ by the logical constants  $\lor,\land,\rightarrow, \forall,\exists$, with quantifiers ranging over  variables $\alpha, \beta, \ldots$: if $A, B$ are formulas, then $A\land B$, $A\lor B$, $A\rightarrow B$, $\forall \alpha\, A$, $\exists \alpha\, B$ are formulas. The  logical negation $\lnot A$ can be introduced, as usual, as a shorthand  for the formula $A\rightarrow\bot$.

\end{enumerate}

\end{defi}

\bigskip

 In Figure \ref{fig:system} we define a type assignment for lambda terms, called \textbf{proof terms}, which is isomorphic to natural deduction for intuitionistic logic extended with Dummett's axiom. 
 \begin{figure*}[!htb]
 
\footnotesize{
\dlinea

\begin{description}

\comment{We also assume that the term formation rules are applied in such a way that in each term $t$, if $t$ contains $\Wit{a}{P}{{\alpha}}$ or $\Hyp{a}{P}{{\alpha}}$ and $t$ contains $\Wit{a}{Q}{{\alpha}}$ or $\Wit{a}{Q}{{\alpha}}$,  then $\mathsf{P}=\mathsf{Q}$.}


\comment{or $\wit \beta$ (for some individual variable $\beta$) and $A_1,\ldots, A_n$ formulas of $\Language$.}
\item[Axioms] 
$\begin{array}{c}    x^A: A
\end{array}\ \ \ \ $
\\

\item[Conjunction] 
$\begin{array}{c}  u:  A\ \ \ \  t: B\\ \hline  \langle
u,t\rangle:
A\wedge B
\end{array}\ \ \ \ $
$\begin{array}{c}  u: A\wedge B\\ \hline u\,\pi_0: A
\end{array}\ \ \ \ $
$\begin{array}{c}   u: A\wedge B\\ \hline  u\,\pi_1 : B
\end{array}$
\vskip 0.15in
\item[Implication] 
$\begin{array}{c}   t: A\rightarrow B\ \ \  u:A \\ \hline
 t u:B
\end{array}\ \ \ \ $
 \AxiomC{$[x^{A}: A]$}
\noLine
\UnaryInfC{$\vdots$}
\noLine
\UnaryInfC{$u: B$}
\UnaryInfC{$\lambda x^{A} u: A\rightarrow B$}
\DisplayProof
\vskip 0.15in
\item[Disjunction Introduction] 
$\begin{array}{c} u: A\\ \hline  \inj_{0}(u): A\vee B
\end{array}\ \ \ \ $
$\begin{array}{c}   u: B\\ \hline \inj_{1}(u): A\vee B
\end{array}$
\vskip 0.15in

\item[Disjunction Elimination] 
\AxiomC{$u: A\lor B$}
\AxiomC{$[x^{A}: A]$}
\noLine
\UnaryInfC{$\vdots$}
\noLine
\UnaryInfC{$w_{1}: C$}
\AxiomC{$[y^{B}: B]$}
\noLine
\UnaryInfC{$\vdots$}
\noLine
\UnaryInfC{$w_{2}: C$}
\TrinaryInfC{$u\, [x^{A}.w_{1}, y^{B}.w_{2}]: C$}
\DisplayProof
\vskip 0.15in

\item[Universal Quantification] 
$\begin{array}{c} u:\forall \alpha\, A\\ \hline   u m: A[m/\alpha]
\end{array}\ \ \ $
$\begin{array}{c}   u: A\\ \hline  \lambda \alpha\, u:
\forall \alpha\, A
\end{array}$\\
\vskip 0.1in
where $m$ is any term of  the language $\Language$ and $\alpha$ does not occur
free in the type $B$ of any free variable  $x^{B}$ of $u$.\\
\vskip 0.1in

\item[Existential Quantification] 
$\begin{array}{c}  u: A[m/\alpha]\\ \hline  (
m,u):
\exists
\alpha\, A
\end{array}\ \ \ $
\AxiomC{$u: \exists \alpha\, A$}
\AxiomC{$[x^{A}: A]$}
\noLine
\UnaryInfC{$\vdots$}
\noLine
\UnaryInfC{$t: C$}
\BinaryInfC{$u\, [(\alpha, x^{A}). t]: C$}
\DisplayProof
\\\vskip 0.1in
where $\alpha$ is not free in $C$ nor in the type $B$ of any free variable of $t$.\\
\vskip 0.15in

\item[Dummett's Axiom $\D$]
\AxiomC{$[a^{\scriptscriptstyle A\rightarrow B}: A\rightarrow B]$}
\noLine
\UnaryInfC{$\vdots$}
\noLine
\UnaryInfC{$u: C$}
\AxiomC{$[a^{\scriptscriptstyle B\rightarrow A}: B\rightarrow A]$}
\noLine
\UnaryInfC{$\vdots$}
\noLine
\UnaryInfC{$v: C$}
\RightLabel{$\D$}
\BinaryInfC{$u \parallel_{a} v: C$}
\DisplayProof
\vskip 0.15in
\item[Ex Falso Quodlibet] 
$\begin{array}{c}  \Gamma \vdash u: \bot \\ \hline \Gamma\vdash  \efq{P}{u}:
P
\end{array}$\\
with $P$ atomic.

\end{description}
}

\dlinea
\caption{Term Assignment Rules for $\LC$}\label{fig:system}
\end{figure*}

 \comment{The EM1-rule $\D{a}{u}{v} : C$ takes the form of an elimination rule for $\vee$ applied to the axiom $\forall \alpha. \mathsf{P} \vee \exists \alpha. \lnot \mathsf{P}$: however, we do not explicitly write the axiom $\forall \alpha. \mathsf{P} \vee \exists \alpha. \lnot \mathsf{P}$, and we only write the proofs $u$, $v$ of $C$ from $\forall \alpha. \mathsf{P}$ and from $\exists \alpha. \lnot \mathsf{P}$. These two assumptions are both associated to $a$, in the way explained below.}

We assume that in the proof terms two distinct classes of variables appear. The first class of variables is made by the variables for the proof terms themselves: for every formula $A$, we have variables $x_{0}^{A}, x_{1}^{A}, x_{2}^{A}, \ldots$ of type $A$; these variables will be denoted as $x^{A}, y^{A}, z^{A}\ldots, a^{A}, b^{A}$ and whenever the type is not important simply as $x, y, z, \ldots, a, b$. For clarity, the variables introduced by the Dummett's inference rule will be denoted with letters $a, b, \ldots$, but they are not in any syntactic category apart. The second class of variables is made by the quantified variables of the formula language $\Language$ of $\LC$, denoted usually as $\alpha, \beta, \ldots$. 

The free and bound variables of a proof term are defined as usual and for  the new  term $\Ecrom{a}{u}{v}$, all of the free occurrences of $a$ in $u$ and $v$ are bound in $\Ecrom{a}{u}{v}$. In the following, we assume the standard renaming rules and alpha equivalences that are used to avoid capture of variables in the reduction rules that we shall give.

 Whenever $\Gamma= x_{1}: A_{1}, \ldots, x_{n}: A_{n}$ and the list $x_{1}, \ldots, x_{n}$ includes all the free variables of a proof term $t: A$, we shall write $\Gamma\vdash t: A$. From the logical point of view, the notation means that $t$ represents a natural deduction of $A$ from the hypotheses $A_{1}, \ldots, A_{n}$. We shall write $\LC\vdash t: A$ whenever $\vdash t: A$, and the notation means provability of $A$ in intuitionistic logic with Dummett's axiom.

We are now going to explain the basic reduction rules for the proof terms of $\LC$, which are given in Figure \ref{fig:red}. To understand them, we need the notions of parallel context and stack. If
we omit parentheses, any term $t$  can be written, not uniquely, in the form
 $$t = t_{1}\parallel_{a_{1}} t_{2}\parallel_{a_{2}}\ldots \parallel_{a_{n}} t_{n+1}$$
If we replace some $t_{i}$ with a ``hole'' $[]$ to be filled, the expression above becomes  a parallel context.  



\begin{defi}[Parallel Contexts]\label{defi-parallelc}
Omitting parentheses, a \textbf{parallel context} $\mathcal{C}[\ ]$ is an expression of the form 
$$u_{1}\parallel_{a_{1}} u_{2}\parallel_{a_{2}}\ldots u_{i} \parallel_{a_{i}} [] \parallel_{a_{i+1}}u_{i+1}\parallel_{a_{i+2}}\ldots \parallel_{a_{n}} u_{n}$$
where $[]$ is a placeholder and $u_{1}, u_{2}, \ldots,  u_{n}$ are proof terms. 
For any proof term $u$, $\mathcal{C}[u]$ denotes the replacement in $\mathcal{C}[\ ]$ of the placeholder $[]$ with $u$:
$$u_{1}\parallel_{a_{1}} u_{2}\parallel_{a_{2}}\ldots u_{i} \parallel_{a_{i}} u \parallel_{a_{i+1}}u_{i+1}\parallel_{a_{i+2}}\ldots \parallel_{a_{n}} u_{n}$$
\end{defi}

A stack represents, from the logical perspective, a series of elimination rules; from the lambda calculus perspective, a series of either operations to be performed or arguments to be given as input to some program. A stack is  also known as a \emph{continuation}, because it embodies a series of tasks that wait to be executed, and corresponds to Krivine's stacks \cite{Krivine}.
\begin{defi}[Stack]\label{definition-stack}
A \textbf{stack} is a sequence $$\sigma = \sigma_{1}\dott\sigma_{2}\dott\ldots\dott \sigma_{n} $$
such that for every $	1\leq i\leq n$, exactly one of the following holds:
\begin{itemize}
\item $\sigma_{i}=t$, with $t$ proof term.
\item $\sigma_{i}=m$, with $m\in\Language$.
\item $\sigma_{i}=\pi_{j}$, with $j\in\{0,1\}$.
\item $\sigma_{i}=[x.u, y.v]$, with $u, v$ proof terms of the same type.
\item $\sigma_{i}=[(\alpha, x).v]$, with $v$ proof term.
 \end{itemize} 
 If no confusion with other sequences of terms arises, $\sigma$ will often be written without intermediate dots, that is, as $\sigma_{1}\, \sigma_{2}\, \ldots\, \sigma_{n}$. The \emph{empty sequence} is denoted with $\epsilon$ and with $\xi, \xi', \ldots$ we will denote stacks of length $1$. If $t$ is a proof term, as usual in lambda calculus $t\, \sigma$ denotes the term $(((t\, \sigma_{1})\,\sigma_{2})\ldots \sigma_{n})$.
 \end{defi}

We find among the reductions in Figure \ref{fig:red} the ordinary reductions for the intuitionistic constructs together with Prawitz-style permutation rules ~\cite{Prawitz} for $\D$, as in \cite{AschieriZH}.
The reduction rules for $\D$ model the communication mechanism explained in Section \S\ref{section-introduction}. In the reduction
$$\mathcal{C}[a^{\scriptscriptstyle A\rightarrow B}\, u\,\sigma]\parallel_{a} v\ \mapsto\  \mathcal{C}[\,v[\lambda y^{\scriptscriptstyle B}\, u /a^{\scriptscriptstyle B\rightarrow A}]\,]\parallel_{a} v$$
we see that the term on the left is in some way stuck:
the variable $a^{\scriptscriptstyle A\rightarrow B}$ faces an argument $u$ of type $A$; of course, it has no idea how to use $u$ to produce a term of type $B$! On the contrary, the term $v$ knows very well how to use $u$ to produce something useful, because it contains the variable $a^{\scriptscriptstyle B\rightarrow A}$, which waits for a term of type $B\rightarrow A$. Thus, $a^{\scriptscriptstyle A\rightarrow B}$ sends the term  $\lambda y^{\scriptscriptstyle B}\, u$, with $y$ dummy, to $v$, yielding the term $\,v[\lambda y^{\scriptscriptstyle B}\, u /a^{\scriptscriptstyle B\rightarrow A}]$. This program is called to replace the useless $a^{\scriptscriptstyle A\rightarrow B}\, u\,\sigma$ and computation can go ahead. We require the context $\mathcal{C}[\ ]$ to be parallel, because in this way types are not needed to define the reductions for $\D$ and the calculus makes sense also in its untyped version and with Curry-style typing. We have chosen Church-typing only to make clearer the intended meaning of the operations:  had we omitted all the types from the terms, everything would have still worked just fine. In Theorem \ref{subjectred}, we shall prove that indeed our reduction rules for $\D$ are logically correct and preserve the type.
 
 
 

 \begin{figure*}[!htb]
\footnotesize{
\dlinea
\begin{description}

\item[Reduction Rules for Intuitionistic Logic]
\[(\lambda x\, u)t\mapsto u[t/x]\] 
\[(\lambda \alpha\, u)m\mapsto u[m/\alpha]\]
 \[ \pair{u_0}{u_1}\pi_{i}\mapsto u_i, \mbox{ for $i=0,1$}\]
\[\inj_{i}(u)[x_{1}.t_{1}, x_{2}.t_{2}]\mapsto t_{i}[u/x_{i}], \mbox{ for $i=0,1$} \]
\[(m, u)[(\alpha,x).v]\mapsto v[m/\alpha][u/x], \mbox{ for each term $m$ of $\Language$} \]




\item[Permutation Rules for $\D$]
\[(\Ecrom{a}{u}{v}) w \mapsto \Ecrom{a}{uw}{vw},\mbox{ if $a$ does not occur free in $w$} \]
\[(\Ecrom{a}{u}{v})\pi_{i}  \mapsto \Ecrom{a}{u\pi_{i}}{v\pi_{i}} \]
\[(\Ecrom{a}{u}{v})[x.w_{1}, y.w_{2}] \mapsto \Ecrom{a}{u[x.w_{1}, y.w_{2}]}{v[x.w_{1}, y.w_{2}]},\mbox{ if $a$ does not occur free in $w_{1},w_{2}$}
\]
\[(\Ecrom{a}{u}{v})[(\alpha, x).w] \mapsto \Ecrom{a}{u[(\alpha, x).w]}{v[(\alpha, x).w]}, \mbox{ if $a$ does not occur free in $w_{1},w_{2}$}
\]

\item[Reduction Rules for $\D$]
\[\mathcal{C}[a^{\scriptscriptstyle A\rightarrow B}\, u\,\sigma]\parallel_{a} v\ \mapsto\  \mathcal{C}[\,v[\lambda y^{\scriptscriptstyle B}\, u /a^{\scriptscriptstyle B\rightarrow A}]\,]\parallel_{a} v\]
\[v\parallel_{a}\mathcal{C}[a^{\scriptscriptstyle A\rightarrow B}\, u\,\sigma] \mapsto\  v\parallel_{a}\mathcal{C}[\,v[\lambda y^{\scriptscriptstyle B}\, u /a^{\scriptscriptstyle B\rightarrow A}]\,]\]
\[\mbox{for some parallel context $\mathcal{C}$, stack $\sigma$, variable $a$ free in $\mathcal{C}[a^{\scriptscriptstyle A\rightarrow B}\, u\,\sigma]$, dummy variable ${y}$ not occurring in $u$} \]

\comment{
\item[Reduction Rules for $\D$]
\[\left(u_{0}\parallel_{a_{1}}\ldots \parallel_{a_{i}} a\, t\,\sigma \parallel_{a_{i+1}}\ldots \parallel_{a_{n}} u_{n}\right)\, \parallel_{a}\, v\ \mapsto\  \left(u_{0}\parallel_{a_{1}}\ldots \parallel_{a_{i}} v[(\lambda y\, t)/a] \parallel_{a_{i+1}}\ldots \parallel_{a_{n}} u_{n}\right)\, \parallel_{a}\, v\]
\[v\, \parallel_{a}\, \left(u_{0}\parallel_{a_{1}}\ldots \parallel_{a_{i}} a\, t\,\sigma \parallel_{a_{i+1}}\ldots \parallel_{a_{n}} u_{n}\right)\ \mapsto\ v\, \parallel_{a}\, \left(u_{0}\parallel_{a_{1}}\ldots \parallel_{a_{i}} v[(\lambda y\, t)/a] \parallel_{a_{i+1}}\ldots \parallel_{a_{n}} u_{n}\right)\]
\[\mbox{ whenever $a\, t\, \sigma$ is the leftmost head redex of the whole term  and}\]
\[\mbox{${y}$ is a variable not occurring in $t$. 
}\]}
\end{description}}
\dlinea
\caption{Basic Reduction Rules for $\LC$}\label{fig:red}
\end{figure*}

Our goal now is to define a reduction strategy for typed terms of $\LC$: a recipe for selecting, in any given term, the subterm to which apply one of our basic reductions. As most typed lambda calculi are strongly normalizing and our reduction rules look fairly innocuous, one cannot help but conjecture that any reduction strategy eventually terminates; in other words, that reduction strategies are not necessary. We do conjecture that the fragment with $\forall, \rightarrow, \land, \lor$ is indeed strongly normalizing.
 Yet, already the proof of this weaker result appears excessively complex, to such an extent that arbitrary reduction strategies start to feel wrong, that is, to perform unnecessary computations. 
 
 We therefore leave strong normalization as an open problem and follow a more standard approach: Krivine's (weak) head reduction strategy. The difference is: in Krivine's calculus each process has a unique head; in our calculus each process has several heads, like the Hydra monster. This is due to  the presence of the parallel operator $\parallel_{a}$. Indeed, if we omit parenthesis, any term $t$  can be written, not uniquely, in the form
 $$t = t_{1}\parallel_{a_{1}} t_{2}\parallel_{a_{2}}\ldots \parallel_{a_{n}} t_{n+1}$$
The terms $t_{1}, \ldots, t_{n}$ are parallel processes; each one has its own head and may have an head redex. And as with the Hydra monster, if we contract some head $t_{i}$, more heads to contract might grow. We now formally define what are the parallel processes that appear in a term and what is the head redex of a term.  
 \comment{
 \subsubsection{A Counterexample to Strong Normalization}
Let us consider the following term:
$$u:= \lambda x^{\scriptscriptstyle A\lor B}\, x\, [x_{1}^{\scriptscriptstyle A}.w_{1}, x_{2}^{\scriptscriptstyle B}.w_{2}]\, x: (A\lor B)\rightarrow C$$
where $w_{1}$ and $w_{2}$ are of type $(A\lor B)\rightarrow C$
}

\begin{defi}[Parallel Processes, Head]\label{definition-head}
\mbox{}
\begin{itemize}

\item Removing the parentheses,  whenever a proof term $t$  can be written as
$$t = t_{1}\parallel_{a_{1}} t_{2}\parallel_{a_{2}}\ldots \parallel_{a_{n}} t_{n+1}$$
each term $t_{i}$, for $1\leq i\leq n+1$,  is said to be a \textbf{parallel process} of $t$ and is said to be an \textbf{elementary process} of $t$ in case it is not of the form $u\parallel_{a}v$. 

\item A \textbf{redex} is a term $u$ such that $u\mapsto v$ for some $v$ and basic reduction of Figure \ref{fig:red}.  

\item Let $\sigma$ be any stack. A \emph{redex} $h$ is said to be the \textbf{head redex} of a proof term $t$ in the following cases:
\begin{enumerate}
\item $t=(\lambda x\, u) v\,\sigma$ and $h=(\lambda x\, u) v$; 
\item $t=(\lambda \alpha\, u) m\,\sigma$ and $h=(\lambda \alpha\, u) m$; 
\item $t=\pair{u}{v} \pi_{i}\,\sigma$ and $h=\pair{u}{v} \pi_{i}$; 
\item $t=\inj_{i}(u)\, [x_{1}.t_{1}, x_{2}.t_{2}]\, \sigma$ and $h=\inj_{i}(u)\, [x_{1}.t_{1}, x_{2}.t_{2}]$; 
\item $t=(m, u)\, [(\alpha, x).v]\, \sigma$ and $h=(m, u)\, [(\alpha, x).v]$; 
\item $t=((u\parallel_{a} v)\,\xi\,)\, \sigma$ and $h=(u\parallel_{a} v)\,\xi$; 
\item $t=u\parallel_{a} v$ and $h=t$.\\
\end{enumerate}
\end{itemize} 
\end{defi}

\noindent We now define the head reduction of a proof term: the notion generalizes Krivine's head reduction to parallel contexts. The idea is to look for the leftmost  among the head redexes of the parallel processes of a term and contract that redex. The only subtlety is to determine exactly where the new redexes for $\D$ start. Since the reduction for $u\parallel_{a} v$ is completely localized either in $u$ or $v$, it is reasonable to say that the redex starts where  the subterm $a\,u\, \sigma$ to be replaced is located. 
\begin{defi}[Letfmost Redex, Head Reduction]\label{definition-headred}\mbox{}
\begin{enumerate}
\item
The \textbf{starting symbol} of a redex $r$ is the symbol ``$($'' when $r=(u\,\xi)$ for some stack $\xi$ of length $1$; 
 it  is the leftmost occurrence of the symbol ``$a$'' such that  $a\,t\,\sigma$ is an elementary process  of $r$, when  $r=(u\parallel_{a} v)$. The \textbf{leftmost redex} among some redexes of a term $t$ is the redex whose starting symbol is the leftmost in $t$ among the starting symbols of those redexes.
\item
We say that a term $t$ \textbf{head reduces} to $t'$ and write 
$$t\succ t'$$
when $t'$ is obtained from $t$ by contracting the leftmost among the head redexes  of the parallel processes of $t$,  using one of the basic reductions in Figure \ref{fig:red}.\\
\end{enumerate}
\end{defi}
\comment{
\begin{defi}[Redex, Leftmost Redex Reduction]\mbox{}
\begin{itemize}
\item
A \textbf{redex} is any proof term $r$ such that  $r\mapsto r'$ for some $r'$ and basic reduction of Figure \ref{fig:red}. With complete parentheses, each redex is either of the form $(u\, \sigma)$ or $(u\parallel_{a} v)$. In the first case, we call the first symbol ``$($'' \textbf{active}, in the second case, the first symbol ``$($'' is \textbf{active} if $u\parallel_{a} v\mapsto u'\parallel_{a} v$, while the last symbol ``$)$'' is \textbf{active} if $u\parallel_{a} v\mapsto u\parallel_{a} v'$. The \textbf{leftmost redex} of a term $t$ is the redex of $t$, if any, whose active symbol is the leftmost in $t$ among the active symbols of all redexes of $t$.

\item We say that a term $t'$ is obtained from $t$ by \textbf{leftmost redex reduction} and write 
$$t\succ t'$$
if $t'$ is the result of replacing the leftmost redex $r$ of $t$ with the term $r'$ such that $r\mapsto r'$.  With $\mapsto^{*}$ we shall denote the reflexive and transitive closure of the one-step reduction $\mapsto$.\\
\end{itemize}
\end{defi}}\medskip

\noindent For readability, parentheses are often omitted, but in order to spot the head redex of a term, one must mentally restore the parentheses that have been suppressed.
In order to train our eye, we consider three examples of head reduction:
$$(\lambda x\, a\,(\lambda z\, z)\, x)\, u\,\parallel_{a} z_{0} \succ\,  a\, (\lambda z\, z)\, u\parallel_{a} z_{0} \succ z_{0}\parallel_{a} z_{0} $$
$$(\lambda x\, \inj_{o}(x))\, u\, [x_{0}.t_{1}, x_{1}.t_{1}]\, \sigma\, \succ\, \inj_{o}(u)\, [x_{0}.t_{0}, x_{1}.t_{1}]\,\sigma\, \succ\, t_{0}[u/x_{0}]\,\sigma$$
$$a\, \left(\left(\lambda x\, x\right) z_{0}\right) \parallel_{a} a\, z_{1}\, \succ\, \left(\lambda y\,\left( \lambda x\, x\right) z_{0}\right)\, z_{1} \parallel_{a} a\, z_{1}\, \succ\,   \left( \lambda x\, x\right) z_{0} \parallel_{a} a\, z_{1}\, \succ\, z_{0} \parallel_{a} a\, z_{1}\, \succ\,z_{0} \parallel_{a} z_{0}$$
 In the first case, the reduction for $\D$ is used as third step of the head reduction, while in the third case, as first and last step.


We define the concept of normal form and normalizable term in the usual way.
 
 \begin{defi}[Normal Forms and Normalizable Terms]\mbox{}
 \begin{itemize}
\item  A term $t$ is called a \textbf{head normal form} if there is no $t'$ such that $t\succ t'$. We define $\nf$ to be the set of head normal forms.
\item 
A sequence, finite or infinite, of proof terms $u_1,u_2,\ldots,u_n,\ldots$ is said to be a reduction of $t$, if $t=u_1$, and for all  $i$, $u_i\succ u_{i+1}$.
 A proof term $u$ of $\LC$ is (head) \textbf{normalizable} if there is no infinite reduction of $u$. We denote with $\sn$  the set of normalizable terms of $\LC$.
\end{itemize}
\end{defi}


\noindent  The reductions defined in Figure~\ref{fig:red} satisfy the important Subject Reduction Theorem: reduction steps at the level of proof terms preserve the type, which is to say that they correspond to logically sound transformations at the level of proofs. We first give the simple proof of the theorem, then analyze in detail its logical meaning in the next subsection. 
\begin{thm}[Subject Reduction]\label{subjectred}
If $t : C$ and $t \succ u$, then $u : C$. Moreover, all the free variables of $u$ appear among those of $t$.
\end{thm}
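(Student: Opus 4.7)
The plan is to proceed by case analysis on the basic reduction rule used in the head step $t \succ u$. Since $u$ is obtained from $t$ by replacing a single redex $r$ by its contractum $r'$, it suffices to show that whenever the redex admits a typing $r : D$ in some context $\Gamma$, the contractum satisfies $r' : D$ in a context $\Gamma' \subseteq \Gamma$; a routine induction on the typing derivation of $t : C$ then propagates the replacement through the enclosing term, preserving both the type $C$ and the inclusion of free variables.

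For the ordinary intuitionistic $\beta$-, projection-, $\vee$- and $\exists$-elimination reductions, the verification is completely standard and relies only on the usual substitution lemma: if $v : D$ has a free variable $x : E$ and $w : E$ is available, then $v[w/x] : D$. The Prawitz-style permutation rules for $\D$ are equally immediate. For instance, if $(u \parallel_a v)\, w : B$ and $a \notin \mathrm{FV}(w)$, then the typing of $\D$ forces $u, v : A \rightarrow B$ and $w : A$, so $u w \parallel_a v w : B$ by a fresh application of $\D$; the remaining permutations are analogous.

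The only case requiring genuine attention is the communication rule for $\D$,
$$\mathcal{C}[a^{A \rightarrow B}\, u\, \sigma] \parallel_a v \;\mapsto\; \mathcal{C}[\, v[\lambda y^B\, u / a^{B \rightarrow A}]\,] \parallel_a v.$$
The preliminary observation I will use is that in any typed parallel expression $t_1 \parallel_{a_1} t_2 \parallel_{a_2} \cdots \parallel_{a_n} t_{n+1} : C$, each parallel process $t_i$ itself has type $C$; this follows immediately by induction on $n$ from the typing rule $\D$. Applied to the left-hand side, the hole of $\mathcal{C}$ must be filled by a subterm of type $C$, so $a^{A \rightarrow B}\, u\, \sigma : C$. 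In particular, for $a^{A \rightarrow B} u$ to be well-typed (at type $B$, to which the stack $\sigma$ can then be applied to produce $C$), we must have $u : A$. Consequently $\lambda y^B\, u$ is well-typed of type $B \rightarrow A$ with $y$ fresh, exactly matching the annotation of $a^{B \rightarrow A}$ in $v$. The substitution lemma therefore gives $v[\lambda y^B\, u / a^{B \rightarrow A}] : C$, and refilling the hole and re-attaching $\parallel_a v$ reproduces a term of type $C$ in the original context. The symmetric rule is handled by an identical argument.

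The free-variable clause is verified rule by rule: intuitionistic reductions never introduce new free variables, and in the $\D$-reduction the substituted term $\lambda y^B\, u$ contributes only variables already free in the subterm $a^{A \rightarrow B}\, u\, \sigma$, while the occurrences of $a$ and of the stack $\sigma$ that disappear on the right only reduce the free-variable set. I expect the sole conceptual obstacle to be the type-uniformity observation on parallel processes; once this is in hand, the hole of a parallel context is legitimately assigned the outer type $C$ and the whole communication substitution goes through smoothly.
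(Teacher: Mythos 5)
Your proposal is correct and follows essentially the same route as the paper: reduce to the basic one-step reductions, treat the intuitionistic and permutation cases as standard, and for the communication rule observe that a parallel context forces the hole (and hence $a^{A\rightarrow B}\,u\,\sigma$ and $v$) to carry the outer type $C$, so that $u:A$, $\lambda y^{B}u : B\rightarrow A$, and the substitution into $v$ is well-typed. The free-variable clause is also handled exactly as in the paper, by noting that the substituted term only contributes variables already free in the redex.
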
 
\begin{proof} 
It is enough to prove the theorem for basic reductions: if $ t : C$ and $t \mapsto u$, then  $u : C$. The proof that the intuitionistic reductions and the permutation rules preserve the type is completely standard. Thus we are left with the $\D$-reductions, which require straightforward considerations as well. Suppose
 $$\mathcal{C}[a^{\scriptscriptstyle A\rightarrow B}\, u\,\sigma]\parallel_{a} v\ \mapsto\  \mathcal{C}[\,v[\lambda y^{\scriptscriptstyle B}\, u /a^{\scriptscriptstyle B\rightarrow A}]\,]\parallel_{a} v$$
Since $\mathcal{C}$ is a parallel context, $a^{\sml{A\rightarrow B}}\, u\,\sigma$ and $v$ have both type $C$. Now, $u$ must be of type $A$, so $\lambda y^{B} u$ is of type $B\rightarrow A$ and thus 
 $v[\lambda y^{\scriptscriptstyle B}\, u /a^{\scriptscriptstyle B\rightarrow A}]$ is a correct term of type $C$. Moreover, all the occurrences of $a^{\scriptscriptstyle B\rightarrow A}$ in $v$ are eliminated by the substitution $[\lambda y^{\scriptscriptstyle B}\, u /a^{\scriptscriptstyle B\rightarrow A}]$, so no new free variable is created.
 \end{proof}

\comment{
We conclude the treatment of reduction rules with another example, that also suggests that some other reduction rules, considered in \cite{AschieriZH}, might be useful. The reductions are: 

\begin{description}
\item[Optional Reduction Rules for $\D$]
\[u\parallel_{a}v \mapsto u \mbox{ if $a$ does not occur in $u$}\]
\[u\parallel_{a}v \mapsto v \mbox{ if $a$ does not occur in $u$}\]
\end{description}
Whereas in $\cite{AschieriZH}$ the first reduction is essential, in this setting both the first and the second are useless for all the main results of the paper. However, they can be used to show that the type $A\lor B$ is redundant in the logic $\LC$. This follows of course from the fact that $A\lor B$ is definable by means of $\rightarrow$ and $\land$, as noticed by Dummett \cite{Dummett}. We can define:
$$(A\lor B)^{*}:=  (A\rightarrow B)\rightarrow B \land (B\rightarrow A)\rightarrow A$$
Then, if $u: A$, we define 
$$(\iota_{0}(u))^{*}:= \langle \lambda y^{{\scriptscriptstyle A\rightarrow B}}\, y\, u, \lambda z^{{\scriptscriptstyle B\rightarrow A}} u\rangle$$
and if $u: B$, we define
$$(\iota_{1}(u))^{*}:= \langle \lambda z^{{\scriptscriptstyle A\rightarrow B}} u, \lambda y^{{\scriptscriptstyle B\rightarrow A}}\, y\, u\rangle$$
where $z$ is a dummy variable not occurring in $u$. 
Finally,  we define
$$(t\, [x_{0}^{\sml{A}}.u_{0}, x_{1}^{\sml{B}}.u_{1}])^{*}:=  (\lambda x_{1}^{\sml{B}}\, u_{1})(t\, \pi_{0}\, a^{{\scriptscriptstyle A\rightarrow B}})\parallel_{a} (\lambda x_{0}^{\sml{A}}\, u_{1})(t\, \pi_{1}\, a^{\sml{B\rightarrow A}})$$
Then
$$((\iota_{0}(u))^{*}  [x_{0}^{\sml{A}}.u_{0}, x_{1}^{\sml{B}}.u_{1}])^{*}$$
}
\subsection{Reduction Rules: Logical Interpretation}

So far, in studying the system $\LC$, we have given priority to the underlying lambda calculus and characterized it as a functional language endowed with parallelism and a communication mechanism. The explanation of the reductions had little to do with logic and much with computation. However, thanks to the Subject Reduction Theorem, we know we could have proceeded the other way around. Namely, we could have given priority to logic and dealt only with transformation of proofs, in the style of Prawitz natural deduction trees \cite{Prawitz}. Since it is instructive to explain directly this point of view, we are finally going to do so. 

First of all, the following proof of $\lnot A\lor \lnot\lnot A$ is an example of natural deduction tree in $\LC$:
\begin{prooftree}

\AxiomC{$[\lnot\lnot A\rightarrow \lnot A]$}

\AxiomC{$[\lnot A]$}
\AxiomC{$[A]$}
\BinaryInfC{$\bot$}
\UnaryInfC{$\lnot \lnot A$}
\BinaryInfC{$\lnot A$}
\AxiomC{$	[ A]$}
\BinaryInfC{$\bot$}
\UnaryInfC{$\lnot A$}
\UnaryInfC{$\lnot A\lor \lnot\lnot A$}

\AxiomC{$[\lnot A\rightarrow \lnot\lnot A]$}
\AxiomC{$[\lnot A]$}
\BinaryInfC{$\lnot\lnot A$}
\AxiomC{$	[\lnot A]$}
\BinaryInfC{$\bot$}
\UnaryInfC{$\lnot\lnot A$}
\UnaryInfC{$\lnot A\lor \lnot\lnot A$}
\RightLabel{$\D$}
\BinaryInfC{$\lnot A\lor \lnot\lnot A$}
\end{prooftree}
The standard reductions for lambda calculus still correspond to the ordinary conversions for all the logical constants of first-order logic:

$$\begin{aligned}
 \AxiomC{$[A]$}
\noLine
\UnaryInfC{$\vdots$}
\noLine
\UnaryInfC{$B$}
\UnaryInfC{$A\rightarrow B$}
\AxiomC{$\vdots$}
\noLine
\UnaryInfC{$A$}
\BinaryInfC{$B$}
\DisplayProof
  &
  \qquad \mbox{ \textsf{converts to:} } \qquad
  &
 \AxiomC{$\vdots$}
\noLine
\UnaryInfC{$A$}
\noLine
\UnaryInfC{$\vdots$}
\noLine
\UnaryInfC{$B$}
\DisplayProof
 \\
\end{aligned}
$$

$$
\begin{aligned}
\AxiomC{$\vdots$}
\noLine
\UnaryInfC{$A_i$}
\RightLabel{{\scriptsize$(i\in\{1, 2\})$}}
\UnaryInfC{$A_1\lor A_2$}
\AxiomC{$[A_1]$}
\noLine
\UnaryInfC{$\vdots$}
\noLine
\UnaryInfC{$C$}
\AxiomC{$[A_2]$}
\noLine
\UnaryInfC{$\vdots$}
\noLine
\UnaryInfC{$C$}
\TrinaryInfC{$C$}
\DisplayProof
& \qquad\mbox{ \textsf{converts to}: }
&
\AxiomC{$\vdots$}
\noLine
\UnaryInfC{$A_i$}
\noLine
\UnaryInfC{$\vdots$}
\noLine
\UnaryInfC{$C$}
\DisplayProof\\
\end{aligned}
$$

$$\begin{aligned}
 \AxiomC{}
\noLine
\UnaryInfC{$\vdots$}
\noLine
\UnaryInfC{$A_{1}$}
\AxiomC{}
\noLine
\UnaryInfC{$\vdots$}
\noLine
\UnaryInfC{$A_{2}$}
\BinaryInfC{$A_{1}\land A_{2}$}
\RightLabel{{\scriptsize$(i\in\{1, 2\})$}}
\UnaryInfC{$A_{i}$}
\DisplayProof
  &
  \qquad \mbox{ \textsf{converts to:} } \qquad
  &
\AxiomC{}
\noLine
\UnaryInfC{$\vdots$}
\noLine
\UnaryInfC{$A_{i}$}
\DisplayProof
 \\
\end{aligned}
$$

$$\begin{aligned}
\AxiomC{$$}
\noLine
\UnaryInfC{$\vdots$}
\noLine
\UnaryInfC{$A[m/\alpha]$}
\UnaryInfC{$\exists \alpha\, A$}
\AxiomC{$[A]$}
\noLine
\UnaryInfC{$$}
\noLine
\UnaryInfC{$\pi$}
\noLine
\UnaryInfC{$$}
\noLine
\UnaryInfC{$C$}
\BinaryInfC{$C$}
\DisplayProof
  &
  \qquad \mbox{ \textsf{converts to:} } \qquad
  &
 \AxiomC{$\vdots$} 
\noLine
\UnaryInfC{$A[m/\alpha]$}
\noLine
\UnaryInfC{$$}
\noLine
\UnaryInfC{$\pi[m/\alpha]$}
\noLine
\UnaryInfC{$$}
\noLine
\UnaryInfC{$C$}
\DisplayProof
 \\
\end{aligned}
$$

$$\begin{aligned}
\AxiomC{$\pi$}
\noLine
\UnaryInfC{$$}
\noLine
\UnaryInfC{$A$}
\UnaryInfC{$\forall \alpha\, A$}
\UnaryInfC{$A[m/\alpha]$}
\DisplayProof
  &
  \qquad \mbox{ \textsf{converts to:} } \qquad
  &
\AxiomC{$\pi[m/\alpha]$}
\noLine
\UnaryInfC{$$}
\noLine
\UnaryInfC{$A[m/\alpha]$}
\DisplayProof
 \\
\end{aligned}
$$\medskip

\noindent The permutation reductions for the terms of the form $u\parallel_{a}{v}$, are just instances of Prawitz-style permutations for disjunction elimination. From the logical perspective, they are used to systematically transform, whenever possible, the logical shape of the conclusion. This reduction is essential because the Dummett inference rule does not yield much when employed to prove implications  or disjunctions; but it becomes Herbrand constructive, whenever used to prove existentially quantified statements. As an example of permutation for $\D$, we consider the one featuring an implication as conclusion:

$$\begin{aligned}
\AxiomC{$[A\rightarrow B]$}
\noLine
\UnaryInfC{$\vdots$}
\noLine
\UnaryInfC{$F \to G$}
\AxiomC{$[B\rightarrow A]$}
\noLine
\UnaryInfC{$\vdots$}
\noLine
\UnaryInfC{$F \to G$}
\RightLabel{$\D$}
\BinaryInfC{$F\to G$}
\AxiomC{$\vdots$}
\noLine
\UnaryInfC{$F$}
\BinaryInfC{$G$}
\DisplayProof
  &
  \ \ \ \mbox{ \textsf{converts to:} } \ \ \
  &
\AxiomC{$[A\rightarrow B]$}
\noLine
\UnaryInfC{$\vdots$}
\noLine
\UnaryInfC{$F \to G$}
\AxiomC{$\vdots$}
\noLine
\UnaryInfC{$F$}
\BinaryInfC{$G$}
\AxiomC{$[B\rightarrow A]$}
\noLine
\UnaryInfC{$\vdots$}
\noLine
\UnaryInfC{$F \to G$}
\AxiomC{$\vdots$}
\noLine
\UnaryInfC{$F$}
\BinaryInfC{$G$}
\BinaryInfC{$G$}
\DisplayProof
 \\
\end{aligned}
$$
There are similar permutations for all other elimination rules, as one can see translating in natural deduction the permutations of Figure \ref{fig:red}.
With the following notation
\begin{prooftree}
\AxiomC{$\mathcal{D}_{1}$}
\noLine
\UnaryInfC{$C$}
\AxiomC{$\cdots$}
\AxiomC{$\mathcal{D}_{i}$}
\noLine
\UnaryInfC{$C$}
\AxiomC{$\cdots$}
\AxiomC{$\mathcal{D}_{n}$}
\noLine
\UnaryInfC{$C$}
\doubleLine
\RightLabel{$\D$}
\insertBetweenHyps{\hskip 4pt}
\QuinaryInfC{$C$}
\end{prooftree}
we denote a deduction of $C$ that, in order to obtain its final conclusion,
 combines the deductions $\mathcal{D}_{1}, \ldots, \mathcal{D}_{i}, \ldots, \mathcal{D}_{n}$ of $C$ using \emph{only} the Dummett rule $n-1$ times. In other words, below the conclusions $C$ of the deductions $\mathcal{D}_{1}, \ldots, \mathcal{D}_{i}, \ldots, \mathcal{D}_{n}$ only the Dummett rule is used. This configuration corresponds to a parallel context in our lambda calculus, as in Definition \ref{defi-parallelc}. With the  notation
\begin{prooftree}
\AxiomC{$B$}
\doubleLine\RightLabel{$\mathsf{EL}$}
\UnaryInfC{$C$}
\end{prooftree}
we denote a deduction of $C$ that, starting from $B$, applies only elimination rules to obtain $C$; in particular, $B$ must be the main premise of the first elimination rule which concludes $B_{1}$, which must be the main premise of the second elimination rule which concludes $B_{2}$ and so on down to $C$. 
This configuration corresponds to the concept of stack of Definition \ref{definition-stack}.

Finally, we can look at the two reductions for proofs containing the Dummett rule. 
Let us consider just the first conversion for $\D$, the second being perfectly symmetric:

{\scriptsize
$$\begin{aligned}
\AxiomC{$\mathcal{D}_{1}$}
\noLine
\UnaryInfC{$C$}
\AxiomC{$\cdots$}
\AxiomC{$[A\rightarrow B]$}
\AxiomC{$[A\rightarrow B]$}
\noLine
\UnaryInfC{$\vdots$}
\noLine
\UnaryInfC{$A$}
\insertBetweenHyps{\hskip -3pt}
\BinaryInfC{$B$}
\doubleLine\RightLabel{$\mathsf{EL}$}
\UnaryInfC{$C$}
\AxiomC{$\cdots$}
\AxiomC{$\mathcal{D}_{n}$}
\noLine
\UnaryInfC{$C$}
\doubleLine
\RightLabel{$\D$}
\insertBetweenHyps{\hskip -2pt}
\QuinaryInfC{$C$}
\noLine
\AxiomC{$[B\rightarrow A]$}
\UnaryInfC{$\mathcal{D}$}
\noLine
\UnaryInfC{$C$}
\RightLabel{$\D$}
\BinaryInfC{$C$}
\DisplayProof
&
  \mbox{\textsf{converts to:}} 
 &
 \AxiomC{$\mathcal{D}_{1}$}
\noLine
\UnaryInfC{$C$}
\AxiomC{$\cdots$}
\AxiomC{$[A\rightarrow B]$}
\noLine
\UnaryInfC{$\vdots$}
\noLine
\UnaryInfC{{$A$}}
\UnaryInfC{$B\rightarrow A$}
\noLine
\UnaryInfC{$\mathcal{D}$}
\noLine
\UnaryInfC{$C$}
\AxiomC{$\cdots$}
\AxiomC{$\mathcal{D}_{n}$}
\noLine
\UnaryInfC{$C$}
\doubleLine
\RightLabel{$\D$}
\insertBetweenHyps{\hskip -2pt}
\QuinaryInfC{$C$}
\noLine
\AxiomC{$[B\rightarrow A]$}
\UnaryInfC{$\mathcal{D}$}
\noLine
\UnaryInfC{$C$}
\RightLabel{$\D$}
\BinaryInfC{$C$}
\DisplayProof
\end{aligned}$$}

\comment{\scriptsize
$$\begin{aligned}
\AxiomC{$[A\rightarrow B]$}
\AxiomC{$[A\rightarrow B]$}
\noLine
\UnaryInfC{$\vdots$}
\noLine
\UnaryInfC{$A$}
\insertBetweenHyps{\hskip -3pt}
\BinaryInfC{$B$}
\doubleLine
\RightLabel{$\mathsf{EL}$}
\UnaryInfC{$C$}
\noLine
\UnaryInfC{$\vdots$}
\noLine
\UnaryInfC{$C$}
\noLine
\AxiomC{$[B\rightarrow A]$}
\UnaryInfC{$\vdots$}
\noLine
\UnaryInfC{$C$}
\RightLabel{$\D$}
\BinaryInfC{$C$}
\DisplayProof
&
  \mbox{\textsf{converts to:}} 
 &
 \AxiomC{$[A\rightarrow B]$}
\noLine
\UnaryInfC{$\vdots$}
\noLine
\UnaryInfC{{$A$}}
\UnaryInfC{$B\rightarrow A$}
\noLine
\UnaryInfC{$\vdots$}
\noLine
\UnaryInfC{$C$}
\noLine
\UnaryInfC{$\vdots$}
\noLine
\UnaryInfC{$C$}
\noLine
\AxiomC{$[B\rightarrow A]$}
\UnaryInfC{$\vdots$}
\noLine
\UnaryInfC{$C$}
\RightLabel{$\D$}
\BinaryInfC{$C$}
\DisplayProof
\end{aligned}$$}
\noindent The conversion above focuses first on the deduction $\mathcal{D}$ on the left branch of the proof; it replaces the hypothesis $B\rightarrow A$ of $\mathcal{D}$ with a proof of $B\rightarrow A$ directly obtained from the proof of $A$ found on the left branch; afterwards, it takes the deduction so generated and replaces with it the old proof of $C$ obtained from $B$ by elimination rules.\\
There is a  crucial assumption about the structure of the first proof. In the left branch of the Dummett rule, the hypothesis $A\rightarrow B$ is used together with $A$ to obtain $B$, which is in turn used  to infer $C$ \emph{by means only of a main branch of elimination rules}, as called by Prawitz.  
 Thanks to this restriction, the proof of $A$ does not end up having more open assumptions in the second proof than it has in the first proof.\\
But what have we gained with this reduction? It looks like we made no progress at all. The hypothesis $A\rightarrow B$ may be actually used more times in the second proof than in the first, because the hypothesis $B\rightarrow A$ might be used several times in the deduction $\mathcal{D}$! Actually, the gain is subtle. In the left branch of the first proof the formula $B$ was derived in a fictitious way: by an arbitrary hypothesis $A\rightarrow B$, bearing no relationship with $C$. Since $B$ is \emph{used} to obtain $C$, we cannot expect $B$ to provide constructive content to $C$, in particular no witness if $C$ is an existential formula.  The conversion above gets rid of this configuration and provide a more direct proof of $C$: in the new proof, if $B\rightarrow A$ is employed to derive $A$ by modus ponens, one can discard $B$ and use the proof of $A$ coming from the first proof.

 The main difficulty that we face with our reduction rules for $\D$ is \emph{termination}. There is hardly any decrease in complexity from before to after the reduction and the road toward a combinatorial termination proof looks barred. We are thus forced to employ a far more abstract technique: realizability.

\section{Classical Realizability}
\label{section-systemNEM}

In this section we prove that each term of  $\LC$ \emph{realizes} its type and is normalizing. To this end, we make a detour into a logically inconsistent, yet computationally sound world: the system $\ALC$, a type system which extends $\LC$ .
The idea that extending a system can make easier rather than harder to prove its normalization might not seem very intuitive, but it is well tested and very successful (see \cite{Tait}, \cite{rattiTLCA}, \cite{rattoCOS}, \cite{AschieriZH}). $\ALC$ will be our calculus of realizers. It is indeed typical of realizability, the method we shall use, to set up a calculus with more realizers than the actual proof terms \cite{Kreiselm, Krivine, AZMarkov}. The idea is that a realizer is defined as a proof term that defeats every opposer and passes every termination test; but proof terms, as opposers and testers of proof terms themselves, are not enough; proof terms must be opposed and tested also by ``cheaters'', terms that do satisfy the same definition of realizability, but only because they have some advantage. These extra tests make proof terms stronger realizers than they otherwise would be. We may imagine a realizer as a tennis player that trains himself to return fast balls thrown by a robot: if he withstands the attacks of the robot, he will perform all the more well against real weaker humans.

\subsection{The Abort Operator}

The system $\ALC$  is not meant to be a logical system: it would be inconsistent! The purpose of the system is not logical, but computational: to simulate the reduction rules for $\D$ by an abort operator $\abort$.
We define the typing rules of  $\ALC$ to be those of $\LC$ plus a new term formation scheme:\\
\begin{description}
\item[Abort Axiom] 
\AxiomC{$\abort^{\scriptscriptstyle A\rightarrow B}: A\rightarrow B$}
\DisplayProof
\end{description}
\vskip 0.15in
With $\abort$, $\abort_{1}, \ldots,\abort_{k}$, we shall denote some generic constant $\abort^{\scriptscriptstyle A\rightarrow B}$. The reduction rules for the terms of $\ALC$ are those for $\LC$ with the addition of a new reduction rule defined in Figure \ref{fig:FN}.
\begin{figure*}[!htb]
\footnotesize{
\dlinea
\begin{description}
\item[Reduction Rules for $\abort$]
\[\abort\, u\, \sigma \mapsto u\]
\[\mbox{whenever $\abort\, u\, \sigma$ and $u$ have the same type }\]

\end{description}}
\dlinea
\caption{Extra Reduction Rules for $\ALC$}\label{fig:FN}
\end{figure*}

\comment{
\begin{figure*}
 \begin{tabular}{| c |}
      \hline
[Reduction Rules for $\IL$]
$(\lambda x. u)t\redn u[t/x]\qquad (\lambda \alpha. u)t\redn u[t/\alpha]$\\
$ \pair{u_0}{u_1}\pi_{i}\redn u_i, \mbox{ for $i=0,1$}$\\
$\inj_{i}(u)[x_{1}.t_{1}, x_{2}.t_{2}]\redn t_{i}[u/x_{i}], \mbox{ for $i=0,1$}$\\
$(n, u)[(\alpha,x).v]\redn v[n/\alpha][u/x], \mbox{ for each numeral $n$}$\\
\hline
    \end{tabular}
\caption{New Reduction Rules for $\IL$ + $\AD$}\label{fig:FN}
\end{figure*}
}
The abort computational construct reminds Krivine's  $\mathsf{k}_{\pi}$, which removes the current continuation $\rho$ and restore a previously saved continuation $\pi$:
$$\mathsf{k}_{\pi}\star t\dott \rho \succ t\star \pi$$ 
There is indeed an analogy with Krivine's realizability: the terms of $\LC$ correspond to Krivine's proof-like terms, whereas the terms of $\ALC$ correspond to Krivine's inconsistent terms that may contain $\mathsf{k}_{\pi}$ and may realize any formula. But in our case $\ALC$ is just a \emph{tool for defining realizability}, not a tool for implementing reductions, like $\mathsf{k}_{\pi}$ in Krivine case.
The role of $\abort$ will emerge later on in the proof of Propositions \ref{proposition-simul} and \ref{proposition-parallel}. However, by now, the intuition should be pretty clear: in the reduction
$$\mathcal{C}[a\, u\,\sigma]\parallel_{a} v\ \mapsto\  \mathcal{C}[\,v[\lambda y\, u /a]\,]\parallel_{a} v$$
the term $a\, u\,\sigma$ aborts the local continuation $\sigma$. The difficulty is that the new continuation $v[\lambda y\, u /a]$, from the perspective of $a$, is created out of nowhere! Therefore proving by induction that $\mathcal{C}[a\, u\,\sigma]$ is a realizer would not be of great help for proving that the whole term $\mathcal{C}[a\, u\,\sigma]\parallel_{a} v$ is realizer. With terms of the form $\abort\, {w}$ we can instead simulate locally the global reduction above and get a stronger induction hypothesis.

The Definition \ref{definition-stack} of stack is of course extended to $\ALC$ and the Definition \ref{definition-head} of head redex is extended to the terms of $\ALC$ by saying that $\abort\, u\,\sigma$ is the \textbf{head redex} of $\abort\, u\,\sigma$ whenever $u$ and $\abort\, u\,\sigma$ have the same type. The reduction relation $\succ$ for the terms of $\ALC$ is then defined as in Definition \ref{definition-headred}.
\comment{
\begin{defi}[Head Reduction for $\ALC$]\mbox{}
\begin{itemize}
\item We say that a term $t'$ is obtained from $t$ by \textbf{leftmost redex reduction} and write 
$$t\redn t'$$
if $t'$ is the result of replacing the leftmost redex $r$ of $t$ with the term $r'$ such that $r\mapsto r'$.  With $\redn^{*}$ we shall denote the reflexive and transitive closure of the one-step reduction $\redn$ and with $\redn^{+}$ the transitive closure.\\
\end{itemize}
\end{defi}}
In the following,  we define $\snn$ to be the set of  normalizing proof terms of $\ALC$. 

As usual in lambda calculus, a value represents the result of the computation: a function for arrow and universal types, a pair for product types, a boolean for sum types and a witness for existential types and in our case also the abort operator.

\begin{defi}[Values, Neutrality]\label{definition-value}\mbox{}
\begin{itemize}
\item A proof term  is a \textbf{value} if it is of the form $\lambda x\, u$ or $\lambda \alpha\, u$ or $\pair{u}{t}$ or $\inj_{i}(u)$ or $(m, u)$ or $\efq{}{u}$ or $\abort$.
\item A proof term is  \textbf{neutral} if it is neither a value nor of the form $u\parallel_{a} v$.
\end{itemize}
\end{defi}

We now prove a property of head normal forms that we will be crucial in the following. It is a generalization of the well known head normal form Theorem for lambda calculus and tells us that if we decompose a proof term into its elementary parallel processes, then each of them is either a value or some variable or constant applied to a list of argument.
\begin{prop}[Head Normal Form Property]\label{proposition-hnf}
Suppose $t$ is in head normal form and 
$$t=t_{1}\parallel_{a_{1}} t_{2}\parallel_{a_{2}}\ldots \parallel_{a_{n}} t_{n+1}$$
and that each $t_{i}$ is an elementary process.
 Then for every $1 \leq i\leq n+1$, there is some stack $\sigma$ such that either $t_{i}=x\, \sigma$, with $x\neq a_{1},\ldots, a_{n}$, or $t_{i}=\abort\, u\, \sigma$, with the type of $u$ different from the type of $\abort\, u\, \sigma$, or  $t_{i}=a_{j}$  or $t_{i}$ is a value.
\end{prop}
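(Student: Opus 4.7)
My plan is to fix any elementary process $t_i$ and maximally peel off outermost applications, writing $t_i = h\,\sigma$ where $h$ is itself not an application. By the grammar of proof terms and elementarity, $h$ must be one of: a variable, the constant $\abort$, a value in the sense of Definition \ref{definition-value}, or a parallel composition $u\parallel_a v$. I then rule out bad configurations using the assumption that $t$ admits no head reduction.

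The sub-case $h = u \parallel_a v$ is impossible. If $\sigma = \epsilon$, then $t_i = u\parallel_a v$ contradicts elementarity. If $\sigma$ is non-empty with first entry $\xi$, then $(u\parallel_a v)\,\xi$ is a head redex of $t_i$ by case 6 of Definition \ref{definition-head}, contradicting head normality.

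If $h$ is a value, either $\sigma = \epsilon$ and $t_i = h$ falls in case (iv), or $\sigma$ is non-empty; in the latter case the typing discipline of $\LC$ forces the first entry of $\sigma$ to pair with $h$ into one of the intuitionistic head redexes ($\beta$-contraction, projection, disjunction case-analysis, existential elimination, or universal instantiation), contradicting head normality. Note that $\efq{P}{u}$ has atomic type so its stack must be empty anyway. The one value requiring separate treatment is $h = \abort$: if $\sigma = \epsilon$, then $t_i = \abort$ is a value (case (iv)); if $\sigma = u\,\sigma'$, then $t_i = \abort\, u\,\sigma'$, and by the extension of the head-redex notion to $\ALC$ (discussed after Figure \ref{fig:FN}), the types of $u$ and of $t_i$ must differ—otherwise $t_i$ itself would be a head redex—giving exactly case (ii).

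Finally, suppose $h$ is a variable $x$, so $t_i = x\,\sigma$. If $x \notin \{a_1,\dots,a_n\}$ this is case (i). Otherwise $x = a_j$; if $\sigma = \epsilon$ we land in case (iii), while if $\sigma$ is non-empty then $\sigma = u\,\sigma'$ because $a_j$ has arrow type, giving $t_i = a_j\,u\,\sigma'$. Since $t_i$ sits on one side of $\parallel_{a_j}$, the whole term decomposes as $\mathcal{C}[a_j\,u\,\sigma']\parallel_{a_j} v$ (or the symmetric form) for a suitable parallel context $\mathcal{C}[\,]$ gathering the remaining elementary processes on $t_i$'s side and $v$ collecting the other side; this matches the $\D$-reduction pattern and yields a head redex of $t$, contradicting head normality. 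The most delicate bookkeeping will be exactly this identification of $\mathcal{C}[\,]$ and the verification that the outermost $\parallel_{a_j}$ indeed produces the required redex, which is where associativity conventions for $\parallel$ and the leftmost-redex selection of Definition \ref{definition-headred} must be invoked carefully.
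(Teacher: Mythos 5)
Your proof is correct and follows essentially the same route as the paper's: decompose each elementary process as a head applied to a stack, then use head-normality to exclude a value or a parallel composition followed by a non-empty stack (intuitionistic or permutation redex), an $\abort$ whose argument matches the type (abort redex), and a bound variable $a_j$ applied to an argument ($\D$-redex). The only organizational difference is that the paper packages the bookkeeping you flag at the end — locating the parallel context $\mathcal{C}[\,]$ around $a_j\,u\,\sigma'$ and the matching occurrence of $\parallel_{a_j}$ — as a structural induction on $t$, handling the case $t=u\parallel_a v$ by applying the induction hypothesis to $u$ and $v$ and then checking that no elementary process has the form $a\,\sigma$ with $\sigma\neq\epsilon$; your direct analysis establishes the same facts.
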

\begin{proof} By induction on $t$. If $t$ is a value, we are done. There are two other cases to consider.
\begin{enumerate}
\item $t=u\parallel_{a} v$. By Definition \ref{definition-head}, the parallel processes of $u$ and $v$ are parallel processes of $u\parallel_{a}v$ as well, so they cannot have head redexes; hence $u$ and $v$ are in head normal form. By induction hypothesis, $u$ and $v$ are of the desired form, thus we just have to check that if
$$u=u_{1}\parallel_{a_{1}} u_{2}\parallel_{a_{2}}\ldots \parallel_{a_{k}} u_{k+1}$$
and for some $i$,  $u_{i}=x\,\sigma$, with $\sigma\neq \epsilon$, then $x\neq a$ (and symmetricallly for $v$). Indeed, if for some $i$, $u_{i}=a\,\sigma$, then $u=\mathcal{C}[a\,\sigma]$ for some parallel context $\mathcal{C}[\ ]$, and therefore $u\parallel_{a}v$ would be the leftmost head redex of itself, which is impossible since by assumption it is in head normal form.
\item $t$ is neutral. Then $t$ can be written, for some stack $\sigma$, as $r\, \sigma$ where $r$ is a value or $r=u\parallel_{a}v$ or $r=x$. In the third case, we are done; in the first and second case, $\sigma=\xi\dott \rho$, so $r\,\xi$ would be the head redex of $t$, unless $t=\abort\,\xi\,\rho$, with the type of $\xi$ different from the type of $\abort\,\xi\,\rho$, which is the thesis.\qedhere
\end{enumerate}
\end{proof}
\subsection{Definition of Classical Realizability}\label{section-reducibility}

Our main goal now is to prove the Normalization Theorem for $\LC$: every proof term of $\LC$ reduces in a finite number of head reduction steps to a head normal form. We shall employ a notion of \emph{classical realizability}, a generalization of the Tait-Girard reducibility method \cite{Girard} that works for classical type systems. The origins of classical realizability can be traced all the way back to Parigot \cite{Parigot} and Krivine \cite{Krivine0} classical reducibility, but we present it in a fashion popularized later by Krivine in his work on realizability \cite{Krivine}, which is indeed a generalization of classical reducibility. Thanks to the fact that one considers only head reduction, Krivine-style classical realizability is slightly simpler than the notions usually employed to derive strong normalization. 

Given a logic, we raise a question: what kind of evidence does a proof provide other than the tiny bit ``1'' declaring the truth of the proven statement? Realizability is a semantics explaining what is to be taken as \emph{constructive evidence for a statement} and a technique for  showing  that proofs can provide such an evidence. Formally, realizability is a relation between terms of $\ALC$ and formulas, with terms playing the role of constructions and formulas determining what properties a construction should satisfy. 
In particular, to each formula $C$ is associated a set of stacks $||C||$, which represents a collection of \emph{valid tests}:  whenever a term passes all these tests, in the sense that it maps them into terminating programs, it is a realizer.  As prescribed by the pragmatist viewpoint, the clauses that defines realizability follow the shape of elimination rules, in order to make sure that no matter how a program is used, it always terminates. 



\begin{defi}[Valid Tests, Classical Realizability]
\label{definition-reducibility}
Assume $t$ is a term of $\ALC$ and $C$ is a formula of $\Language$. We define by mutual induction the relation $t\red C$ (``$t$ realizes $C$'')  and a set $|| C ||$ of stacks of $\ALC$ (the ``valid tests for $C$'') according to the form of $C$:
\begin{itemize}

\item $t\red C$ if and only if $t: C$ and for all $\sigma\in ||C||$, $t\,\sigma \in \snn$

\item
$||\emp{}|| = \{\epsilon\}$


\item
$||A\rightarrow B||=\{u\dott\sigma\ |\  u\red A \land \sigma\in ||B||\}\cup \{\epsilon\}$

\item
$||A\land B||=\{\pi_0\dott \sigma\ |\  \sigma\in ||A||\}\cup \{\pi_1\dott \sigma\ |\  \sigma \in ||B||\}\cup \{\epsilon\}$ 

\item
$||A\lor B||=\{ [x.u, y.v]\dott\sigma \ |\ \forall t.\ (t \red A\implies u[t/x]\,\sigma \in \snn) \land  (t\red B\implies v[t/y]\, \sigma \in \snn)\}\cup \{\epsilon\}$ 

\item
$||\forall \alpha\, A||=\{m\dott\sigma\ |\ m\in\Language \land \sigma \in ||A[{m}/\alpha]||\}\cup \{\epsilon\}$
\item

$||\exists \alpha\,A||=\{ [(\alpha, x).v]\dott\sigma \ |\ \forall t.\ t \red A[m/\alpha]\implies v[m/\alpha][t/x]\, \sigma \in \snn \}\cup \{\epsilon\}$
\end{itemize}
\end{defi}

\subsection{Properties of Realizers}\label{section-reducibilityproperties}

In this section we prove the basic properties of classical realizability. They are all we need to prove the Adequacy Theorem \ref{Adequacy Theorem}, which states that typable terms are realizable.
The arguments for establishing the properties are in many cases standard (see Krivine \cite{Krivine}). We shall need extra work for dealing with terms of the form $u\parallel_{a} v$.

The first task is to prove that realizability is sound for all introduction and elimination rules of $\LC$. We start with the eliminations.

\begin{prop}[Properties of Realizability: Eliminations]\label{proposition-redelim}\mbox{}
\begin{enumerate}
\item If $t\red A\rightarrow B$ and $u\red A$, then  $tu \red B$.
\item If $t\red \forall \alpha\, A$, then for every term $m$ of $\Language$, $t m\red A[m/\alpha]$.
\item If $t\red A \land B$, then $t\, \pi_{0} \red  A$ and $t\, \pi_{1} \red  B$.
\item If $t\red A\lor B$ and for every $w\red A$, $u[w/x]\red C$ and for every $w\red B$, $v[w/y]\red C$, then $t\, [x.u, y.v] \red C$.
\item If  $t\red \exists \alpha\, A$ and for every $m\in \Language$ and  for every $w\red A[m/\alpha]$, $u[m/\alpha][w/x]\red C$, then $t\, [(\alpha, x).u]\red C$.
\end{enumerate}
\end{prop}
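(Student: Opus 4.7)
The plan is to unfold classical realizability in both the hypothesis and the conclusion of each item, and observe that the set of valid tests $||C||$ was designed precisely so that each elimination rule, when appended to any test for the result type, yields a test for the type of the main premise. Each item therefore reduces to a two-line verification. For every case I would first check the typing judgement $tu : B$, $t m : A[m/\alpha]$, $t\, \pi_i : A_i$, $t\,[x.u,y.v] : C$, $t\,[(\alpha,x).u] : C$ — immediate from the term assignment rules of Figure~\ref{fig:system} applied to the typings provided by the hypotheses — and then fix an arbitrary stack in the valid test set for the conclusion and use it, together with the eliminator, to build a valid test for the premise type.

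Concretely, for item (1), given $\sigma\in ||B||$, the very definition of $||A\rightarrow B||$ says $u\dott\sigma\in ||A\rightarrow B||$ because $u\red A$; hence $t\,(u\dott\sigma) = (tu)\,\sigma \in \snn$, which is exactly what is needed for $tu\red B$. Items (2) and (3) are exactly parallel: for any $\sigma\in ||A[m/\alpha]||$ we have $m\dott\sigma\in ||\forall\alpha\,A||$, and for any $\sigma\in ||A||$ (resp.\ $||B||$) we have $\pi_0\dott\sigma$ (resp.\ $\pi_1\dott\sigma$) in $||A\land B||$, so in each case the termination obligation for the conclusion follows from applying $t\red\cdot$ to the constructed stack.

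For items (4) and (5), the hypothesis on the branches is precisely the clause appearing inside the definition of $||A\lor B||$ and $||\exists\alpha\, A||$. Fix $\sigma\in ||C||$; to show $[x.u,y.v]\dott\sigma\in ||A\lor B||$ I must verify that for every $w$, if $w\red A$ then $u[w/x]\,\sigma\in\snn$, and if $w\red B$ then $v[w/y]\,\sigma\in\snn$. But by assumption $u[w/x]\red C$ (resp.\ $v[w/y]\red C$) in these cases, so these follow by applying the realizability clause to $\sigma\in ||C||$. The existential case is identical, quantifying in addition over $m\in\Language$ and using that $u[m/\alpha][w/x]\red C$ by hypothesis whenever $w\red A[m/\alpha]$; this matches the clause defining $||\exists\alpha\,A||$.

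There is no real obstacle here — the proposition is essentially a soundness check that the definition of $||C||$ was chosen correctly. The only points to be careful about are (i) keeping the typing bookkeeping straight so that the constructed stacks are well-typed members of the appropriate valid test sets, and (ii) respecting the usual capture-avoidance conventions for $\alpha$ in the quantifier cases, so that substitutions like $[m/\alpha][w/x]$ act on the expected subterms. The real difficulties will surface in the subsequent proposition about introduction rules, where one must argue that values pass all valid tests (and in particular handle the parallel operator $\parallel_{a}$ via the auxiliary system $\ALC$); by contrast, the elimination case treated here is a straightforward unwinding of Definition~\ref{definition-reducibility}.
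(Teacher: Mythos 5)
Your proposal is correct and follows essentially the same route as the paper: for each elimination, take an arbitrary $\sigma$ in the valid-test set of the conclusion, prepend the eliminator to obtain a stack in the valid-test set of the main premise (using the hypotheses on the minor premises where needed for $\lor$ and $\exists$), and apply the realizability of $t$. The explicit typing check you mention is glossed over in the paper but is indeed part of the definition of $\red$, so including it is harmless and correct.
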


\begin{proof}\mbox{}
\begin{enumerate}
\item Assume $t\red A\rightarrow B$ and $u\red A$. Let $\sigma \in ||B||$; we must show $t u\,\sigma\in\snn$. Indeed, since $t\red A$, by Definition \ref{definition-reducibility} $u\dott \sigma \in ||A\rightarrow B||$ and since $t\red A\rightarrow B$,  we conclude $t u\,\sigma\in\snn$.
\item Similar to 1.
\item Assume $t\red A \land B$. Let $\sigma\in||A||$; we must show $t\, \pi_{0}\,\sigma \in\snn$. Indeed, by Definition \ref{definition-reducibility} $\pi_{0}\dott \sigma \in ||A\land B||$ and since $t\red A \land B$, we conclude $t\, \pi_{0}\, \sigma\in\snn$. A symmetrical reasoning shows that $t\, \pi_{1}\red B$.
\item Let $\sigma\in ||C||$. We must show that $t\, [x.u,y.v]\,\sigma\in\snn$. By hypothesis, for every $w\red A$, $u[w/x]\,\sigma \in\snn$ and for every $w\red B$, $v[w/y]\,\sigma  \in\snn$; by Definition \ref{definition-reducibility}, $[x.u, y.v]\,\sigma\in ||A\lor B||$. Since $t\red A\lor B$, we conclude $t\, [x.u, y.v]\,\sigma\in\snn $. 
\item Similar to 4.\qedhere
\end{enumerate}
\end{proof}
Realizability is also sound for introduction rules and the abort operator realizes any implication.
\begin{prop}[Properties of Realizability: Introductions]\label{proposition-somecases}\mbox{}
\begin{enumerate}
\item If for every $t\red A$, $u[t/x]\red B$, then  $\lambda x\, u\red A\rightarrow B$.
\item If for every term $m$ of $\Language$, $u[m/\alpha]\red B[m/\alpha]$, then $\lambda \alpha\, u\red \forall \alpha\, B$.
\item If $u\red A$ and $v\red B$, then $\pair{u}{v}\red  A\land B$.
\item If $t\red A_{i}$, with $i\in\{0,1\}$, then $\inj_{i}(t)\red A_{0}\lor A_{1}$.
\item If  $t\red A[m/\alpha]$, then $(m,t)\red \exists \alpha\, A$.
\item If  $A$ and $B$ are any two formulas, then $\abort  \red A\rightarrow B$.
\end{enumerate}
\end{prop}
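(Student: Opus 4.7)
The plan is to unwind Definition \ref{definition-reducibility} in each of the six clauses: to show $t \red C$ I must verify (i) $t : C$, which in every case is immediate from the typing rules of $\LC$ (respectively the abort axiom of $\ALC$ for clause 6), and (ii) $t\, \sigma \in \snn$ for every $\sigma \in ||C||$. The universal tool throughout is anti-reduction for head reduction: since the head-reduction strategy of Definition \ref{definition-headred} contracts the uniquely determined leftmost head redex, whenever $t \succ t'$ one has $t \in \snn$ iff $t' \in \snn$. I invoke this freely, and where needed in iterated form.

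The empty-stack test $\sigma = \epsilon$ is handled uniformly for all six clauses: each of $\lambda x\, u$, $\lambda \alpha\, u$, $\pair{u}{v}$, $\inj_{i}(u)$, $(m, u)$ and $\abort$ is a value by Definition \ref{definition-value}, and inspection of Definition \ref{definition-head} shows that a bare value fits none of the seven shapes producing a head redex. Hence each of these terms is a head normal form and belongs trivially to $\snn$ (its only reduction sequence has length one).

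For the non-empty stacks in clauses 1--5 the recipe is the same: one step of head reduction fires the top-level constructor/destructor pair, producing a term already known to lie in $\snn$ by hypothesis, whereupon anti-reduction concludes. For clause 1, a non-empty $\sigma \in ||A\rightarrow B||$ has shape $w \dott \sigma'$ with $w \red A$ and $\sigma' \in ||B||$, and $(\lambda x\, u)\, w\, \sigma' \succ u[w/x]\, \sigma'$, which is in $\snn$ because $u[w/x] \red B$ by assumption. Clauses 2 and 3 use the $(\lambda \alpha)$ and $\pi_{i}$ reductions in exactly the same way; clauses 4 and 5 use the disjunction and existential reductions, and here the universal quantification over realizers $w$ appearing in the defining clauses of $||A \lor B||$ and $||\exists \alpha\, A||$ is tailored precisely to deliver normalizability of the reduct (in clause 4 one instantiates at $t$ itself, in clause 5 at the pair $m, t$ picked out by the constructor).

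Clause 6, the adequacy of $\abort$ for any implication, is the most delicate step and I expect it to be the main obstacle. Given $\sigma = u \dott \sigma' \in ||A \rightarrow B||$ with $u \red A$ and $\sigma' \in ||B||$, I would argue on $\abort^{\scriptscriptstyle A \rightarrow B}\, u\, \sigma'$ by case split using the head-redex clause for $\abort$ added in the $\ALC$ extension of Definition \ref{definition-head}. If the type of $\abort^{\scriptscriptstyle A \rightarrow B}\, u\, \sigma'$ happens to coincide with the type $A$ of $u$, the abort reduction of Figure \ref{fig:FN} fires in a single head step producing $u$, which is in $\snn$ because $u \red A$ (take the empty stack in the definition of realizer), and anti-reduction concludes. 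Otherwise, no clause of the extended Definition \ref{definition-head} produces a head redex whose starting symbol is at the top of $\abort\, u\, \sigma'$, so this term is already in head normal form and hence trivially in $\snn$. Either way the required $\abort\, \sigma \in \snn$ holds, completing the proposition.
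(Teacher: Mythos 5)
Your proposal is correct and follows essentially the same route as the paper's proof: handle $\sigma=\epsilon$ by observing the term is a value in head normal form, handle non-empty stacks in clauses 1--5 by firing the single head redex and invoking the hypothesis together with determinism of head reduction (anti-reduction), and split clause 6 on whether the $\abort$-redex is enabled by the type side-condition. The only difference is presentational: you state the anti-reduction principle and the head-normal-form status of values explicitly and uniformly, where the paper leaves them implicit case by case.
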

\begin{proof}\mbox{}
\begin{enumerate}
\item Suppose that for every $t\red A$, $u[t/x]\red B$. Let $\sigma \in ||A\rightarrow B||$. We have to show $(\lambda x\, u)\,\sigma\in\snn$.  If $\sigma=\epsilon$, indeed $\lambda x\, u\in\snn$.  Suppose then $\sigma= t\dott \rho$, with $t \red A$ and $\rho\in ||B||$.  Since by hypothesis $u[t/x]\red B$, we have $u[t/x]\,\rho\in \snn$; moreover, $(\lambda x\, u) t\, \rho \redn u[t/x]\, \rho$. Therefore, $(\lambda x\, u) t\, \rho\in\snn$.
\item Similar to 1.
\item Suppose $u\red A$ and $v\red B$. Let $\sigma\in ||A\land B||$. We have to show $\pair{u}{v}\,\sigma\in\snn$. If $\sigma=\epsilon$, indeed $\pair{u}{v}\in\snn$. Suppose then $\sigma=\pi_{i}\dott\rho$, with $i\in\{0,1\}$ and $\rho \in ||A||$, when $i=0$,  and $\rho \in ||B||$, when $i=1$. We have two cases.
\begin{enumerate}
\item $i=0$. Since by hypothesis $u\red A$, we have $u\,\rho\in \snn$; moreover, $\pair{u}{v}\,\pi_{i}\,\rho \redn u\, \rho$. Therefore, $\pair{u}{v}\,\pi_{i}\,\rho\in\snn$.
\item $i=1$. Since by hypothesis $u\red B$, we have $v\,\rho\in \snn$; moreover, $\pair{u}{v}\,\pi_{i}\,\rho \redn v\, \rho$. Therefore, $\pair{u}{v}\,\pi_{i}\,\rho\in\snn$.
\end{enumerate}

\item Suppose $t\red A_{i}$. Let $\sigma\in ||A_{0}\lor A_{1}||$. We have to show $\inj_{i}(t)\,\sigma\in\snn$. If $\sigma=\epsilon$, indeed $\inj_{i}(t)\in\snn$. Suppose then $\sigma=[x_{0}.u_{0}, x_{1}.u_{1}]\dott \rho$ and for all $w$, if $w\red{A_{0}}$, then $u_{0}[w/x_{0}]\,\rho\in \snn$ and if $w\red{A_{1}}$, then $u_{1}[w/x_{1}]\, \rho\in \snn$. We have to show $\inj_{i}(t)\, [x_{0}.u_{0}, x_{1}.u_{1}]\,\rho \in \snn$. By hypothesis $u_{i}[t/x_{i}]\,\rho\in \snn$; moreover, $$\inj_{i}(t)\, [x_{0}.u_{0}, x_{1}.u_{1}]\,\rho \redn u_{i}[t/x_{i}]\, \rho$$ Therefore, $\inj_{i}(t)\, [x_{0}.u_{0}, x_{1}.u_{1}] \in\snn$.
 
\item Similar to 4. 
\item Let $\sigma\in ||A\rightarrow B||$. We have to show that $\abort\,\sigma\in\snn$.  If $\sigma=\epsilon$, indeed $\abort\in\snn$. Suppose then $\sigma=u\dott \rho$, with $u\red A$ and $\rho\in ||B||$. We have to show that $\abort\,u\,\rho\in\snn$. Since $u\red A$ and $\epsilon \in ||A||$, we have $u=u\,\epsilon\in\snn$. Moreover, if $\abort\,u\,\rho$ is not a redex, we are done, and if $\abort\,u\,\rho \redn u$,  the thesis follows.\qedhere
\end{enumerate}
\end{proof}\smallskip

\noindent It is now that the abort operator really enters the scene.  Thanks to it, any reduction $u\parallel_{a} v\redn u'\parallel_{a} v$ can be simulated in a purely local way. This is possible because any such reduction affects only what is inside $u$ and leaves $v$ untouched. Then, in order to replicate the reduction is enough to substitute to $a$ a term $\infi{A}$ that throws away any stack of terms it is applied to, like $a$ does, and then restores $v$, with some substitution depending on the context. Of course, symmetrical considerations hold true for any reduction  $u\parallel_{a} v\redn u\parallel_{a} v'$.

\begin{prop}[Local Simulation]\label{proposition-simul}
Define 
$$\infi{A}:= \lambda x\, \abort\, {v[\lambda y\, x /a]}$$
$$\infi{B}:= \lambda z\, \abort\, { u[\lambda y\, z /a]}$$
with $x$, $y$, $z$ and $\abort$ occurring with the right type. Then
$$(u\parallel_{a} v\redn u'\parallel_{a} v)\implies u[\infi{A}/a]\redn^{+} u'[\infi{A}/a]$$
$$(u\parallel_{a} v\redn u\parallel_{a} v')\implies v[\infi{B}/a]\redn^{+} v'[\infi{B}/a]$$
\end{prop}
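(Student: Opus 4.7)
The plan is to establish the first implication by case analysis on the head-reduction step $u\parallel_{a}v\redn u'\parallel_{a}v$; the second is perfectly symmetric, swapping the roles of $u,v$ and of $\infi{A},\infi{B}$. Since the step changes only the left side of $\parallel_{a}$, its leftmost head redex is either (i) internal to $u$ and does not involve the top-level variable $a$, in which case $u\redn u'$ when $u$ is viewed in isolation, by the same step (an intuitionistic or permutation rule, or a $\D$-reduction for some inner parallel variable $b\neq a$); or (ii) the top-level $\D$-reduction itself, so that $u=\mathcal{C}[a^{\sml{A\to B}}\,u_{0}\,\sigma]$ and $u'=\mathcal{C}[\,v[\lambda y^{\sml{B}}u_{0}/a^{\sml{B\to A}}]\,]$ for some parallel context $\mathcal{C}[\,]$, term $u_{0}$ and stack $\sigma$.

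In case (i), the basic reduction used does not mention $a$ and hence commutes with $[\infi{A}/a]$, giving $u[\infi{A}/a]\redn u'[\infi{A}/a]$ in one step, provided the contracted position remains the leftmost head redex after substitution. This is the only delicate point. Since $\infi{A}=\lambda x\,\abort\,v[\lambda y\,x/a]$ is a value, substituting it for $a$ can create new head redexes only at positions where $a$ stood at the head of an application $a\,w\,\tau$; but any such occurrence at a parallel-top-level position of $u$ is already a $\D$-redex of $u\parallel_{a}v$, so by the leftmost assumption its starting ``$a$'' sits strictly to the right of the fired redex. Occurrences of $a$ interior to $\lambda$-abstractions, pairs, injections, etc.\ only yield, after substitution, $\beta$-redexes that are not head redexes of their ambient elementary processes. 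Hence the leftmost-redex position is preserved.

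In case (ii), write primes for $[\infi{A}/a^{\sml{A\to B}}]$-substitution, so that $u[\infi{A}/a]=\mathcal{C}'[\,\infi{A}\,u_{0}'\,\sigma'\,]$. The same leftmost-preservation argument as in (i) identifies the head redex of this term with the $\beta$-redex $\infi{A}\,u_{0}'=(\lambda x\,\abort\,v[\lambda y\,x/a^{\sml{B\to A}}])\,u_{0}'$; firing it produces $\mathcal{C}'[\,\abort\,v[\lambda y\,u_{0}'/a^{\sml{B\to A}}]\,\sigma'\,]$. Both $\abort\,v[\lambda y\,u_{0}'/a^{\sml{B\to A}}]\,\sigma'$ and $v[\lambda y\,u_{0}'/a^{\sml{B\to A}}]$ inherit the type $C$ of $v$, so the abort rule of Figure~\ref{fig:FN} applies, and a second head-reduction step gives $\mathcal{C}'[\,v[\lambda y\,u_{0}'/a^{\sml{B\to A}}]\,]$. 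On the other hand, the typing discipline of the $\D$-rule leaves $v$ with no free occurrence of $a^{\sml{A\to B}}$, so applying $[\infi{A}/a^{\sml{A\to B}}]$ to $v[\lambda y\,u_{0}/a^{\sml{B\to A}}]$ only reaches the $u_{0}$-copies introduced by the inner substitution, and therefore $u'[\infi{A}/a]=\mathcal{C}'[\,v[\lambda y\,u_{0}'/a^{\sml{B\to A}}]\,]$. The two substitutions thus commute to yield the required equality, and $u[\infi{A}/a]\redn^{+}u'[\infi{A}/a]$ in exactly two head-reduction steps.

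The main obstacle is not the algebraic computation, which is mechanical, but the leftmost-head-redex bookkeeping: substituting the value $\infi{A}$ for $a$ generates a fresh $\beta$-redex at every application occurrence of $a$, and one must argue that none of these can preempt the position of the step being simulated. In both cases the decisive observation is that every such application occurrence of $a$ at a parallel-top-level position of $u$ was already a $\D$-redex of $u\parallel_{a}v$, so by the leftmost hypothesis it lies strictly to the right of the redex being contracted.
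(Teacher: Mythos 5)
Your proof is correct and follows essentially the same route as the paper's: a case split on the contracted redex, commutation of the substitution $[\infi{A}/a]$ with every step not involving the top-level $a$, and a two-step simulation ($\beta$ followed by the abort rule) of the $\D$-step itself, with the leftmost-position bookkeeping resolved exactly as in the paper (which packages it into two explicit claims about parallel and elementary processes, where you argue it directly via the observation that fresh head redexes arise only at elementary processes of the form $a\,w\,\tau$, all of which lie to the right of the fired redex). The only cosmetic omission is that your case (i) enumeration should also list abort-redexes, since the proposition is applied to terms of $\ALC$; they are handled identically.
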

\proof
We prove the first statement, the other being perfectly symmetric. The only trouble is to formalize precisely the argument, which is otherwise intuitively obvious. To this end, we first need some simple, but tedious to prove, claims. 
\begin{itemize}
\item \emph{Claim 1}. Every parallel process of $u[\infi{A}/a]$ is of the form $t[\infi{A}/a]$, where $t$ is  a parallel process of $u$. 
\item \emph{Claim 2}. For every parallel process $t[\infi{A}/a]$ of  $u[\infi{A}/a]$, if the starting symbol of the head redex of $t$ is in the $n$-th, from left to right, elementary process of $u$, then  the starting symbol of the head redex of $t[\infi{A}/a]$ is in the $n$-th elementary process, from left to right, of  $u[\infi{A}/a]$.
\end{itemize}

\noindent\emph{Proof of Claim 1}. By induction on $u$. 
Let $t'$ be a parallel process of $u'=u[\infi{A}/a]$. If $t'=u'$, since $u$ is a parallel process of itself, we are done. Suppose now $u'=u_{1}'\parallel_{b'}u_{2}'$. Then, assuming $u=u_{1}\parallel_{b} u_{2}$, we have
$$u_{1}'\parallel_{b'}u_{2}'= (u_{1}\parallel_{b} u_{2})[\infi{A}/a]=(u_{1}[\infi{A}/a]\parallel_{b} u_{2}[\infi{A}/a])$$
Therefore, $b'=b$, $u_{1}'=u_{1}[\infi{A}/a]$ and $u_{2}'=u_{2}[\infi{A}/a]$. Now, $t'$ is a parallel process of  either $u_{1}'$ or $u_{2}'$; by induction hypothesis, $t'=t[\infi{A}/a]$, where $t$ is a parallel process of $u_{1}$, in the first case, and of $u_{2}$ in the second.  

\noindent\emph{Proof of Claim 2}. By induction on $u$. Let $t'=t[\infi{A}/a]$ be a parallel process of $u'=u[\infi{A}/a]$, where $t$ is a parallel process of $u$. We have two cases.\begin{enumerate}
\item $t'=u[\infi{A}/a]$. If $u= h\, \sigma$ and $h$ is its head redex, then $u$ is the first and unique elementary process of $u$; moreover,  $t'=(h[\infi{A}/a])\,(\sigma[\infi{A}/a])$, thus $h[\infi{A}/a]$ is its head redex and  indeed $t'$ is the first and unique elementary process of $t$. If $u=u_{1}\parallel_{b}u_{2}$ and is a redex, then by Definition \ref{definition-headred}, the starting symbol of $u$ is the occurrence of $b$ such that $b\,w\,\sigma$ is the  $n$-th elementary process of $u$ and for every $m<n$, the $m$-th elementary process of $u$ does not start with $b$. Since $t'=u_{1}[\infi{A}/a]\parallel_{b} u_{2}[\infi{A}/a]$ and $a\neq b$, the starting symbol of $t'$ is the occurrence of $b$ such that $b\,(w[\infi{A}/a])\,(\sigma [\infi{A}/a])$ is the  $n$-th elementary process of $t'$.\\

\item $t'\neq u[\infi{A}/a]$. Since $t'=t[\infi{A}/a]$ and $t$ is a parallel process of $u$ with $t\neq u$, there is a parallel context $\mathcal{C}[\ ]$ such that $u=\mathcal{C}[t]$. By induction hypothesis, assuming that the starting symbol of the head redex of $t$ is in the $n$-th elementary process of $t$, then the starting symbol of the head redex of $t'$ is in the $n$-th elementary process of $t'$. Since $u'=(\mathcal{C}[\infi{A}/a])[t']$, if $m$ is the number of elementary processes on the left of $t$ in $\mathcal{C}[t]$, then the starting symbol of the head redex of $t$ is in the $(m+n)$-th elementary process of $u$ and the starting symbol of the head redex of $t'$ is in the $(m+n)$-th elementary process of $u'$, which is the thesis. 
\end{enumerate}
Let now us return to the main line of the proof. Suppose $$u\parallel_{a} v\redn u'\parallel_{a} v$$
 Then for some parallel context $\mathcal{C}$, we have $u=\mathcal{C}[q]$ and $u'=\mathcal{C}[q']$, where $q'$ is either obtained from $q$ by contracting the head redex $r$ of $q$ or  $q'=v[\lambda y\, t/a]$ and $q=a\,t\, \sigma$; in the first case,  it is $r$ that is  the leftmost among the head redexes of the parallel processes of $u\parallel_{a} v$, whereas in the second case, it is $u\parallel_{a}v$. With this notation,
 we have  
$$\mathcal{C}[q]\parallel_{a} v\redn \mathcal{C}[q']\parallel_{a} v$$
We must show $$u[\infi{A}/a]\succ^{+} u'[\infi{A}/a]$$
There are several cases.
\begin{itemize}
\item $q=(\lambda x\, s) t\, \sigma$ and $r=(\lambda x\, s) t$.   Let $r'=s[t/x]$. We first need to show that $r[\infi{A}/a]$ is the leftmost among the head redexes of  the parallel process of $u[\infi{A}/a]$ as well. Assume that the starting symbol of $r$  is in the $n$-th elementary processes of $u$. By Claim 2,  the starting symbol of $r[\infi{A}/a]$  is in the $n$-th elementary process of $u[\infi{A}/a]$. Suppose by the way of contradiction, that there is a parallel process $p'$ of $u[\infi{A}/a]$  whose head redex has a starting symbol more on the left, that is, in the $m$-th elementary process of $u[\infi{A}/a]$, with $m<n$.  By Claim 1, $p'=p[\infi{A}/a]$, where $p$ is a parallel process of $u$. By Claim 2, the starting symbol of the head redex of $p$ is in the $m$-th elementary process of $u$, which contradicts the assumption on $q$ and $r$.  
Now, letting $s'=s[\infi{A}/a]$, $t'=t[\infi{A}/a]$, $\mathcal{C}'=\mathcal{C}[\infi{A}/a]$, $\sigma'=\sigma[\infi{A}/a]$, we get
$$u[\infi{A}/a]=\mathcal{C}'[(\lambda x\, s') t'\, \sigma']\redn \mathcal{C}'[s'[t'/x]\, \sigma']=\mathcal{C}[s[t/x]\,\sigma][\infi{A}/a]=u'[\infi{A}/a]$$
\item $q=r\,\sigma$ and $r=\pair{s_{0}}{s_{1}}\, \pi_{i}$ or  $r= \inj_{i}(s)[x_{0}.t_{0}, x_{1}.t_{1}]$ or $ r= (m,s)[(\alpha, x).t]$ or  $r=(w_{1}\parallel_{b}w_{2})\,\rho$ and let, respectively, $r'=s_{i}$ or $r'=t_{i}[s/x_{i}]$ or $r'=t[m/\alpha][s/x]$ or $r'= w_{1}\,\rho\parallel_{b} w_{2}\,\rho$. By exactly the same considerations of the previous case, we get
$$u[\infi{A}/a]=\mathcal{C}[\infi{A}/a][r\,\sigma[\infi{A}/a]]\redn \mathcal{C}[\infi{A}/a][r'\,\sigma[\infi{A}/a]]=\mathcal{C}[r'\,\sigma][\infi{A}/a]=u'[\infi{A}/a]$$
\item $q=r=\abort_{k}\,w\, \sigma$ or $q=r=\mathcal{C}_{1}[b\,w\, \rho]\parallel_{b} s$ for some variable $b\neq a$ (the other case is symmetric); let respectively $r'=w$ or $r'=\mathcal{C}_{1}[s[\lambda y\, w / b]]\parallel_{b} s$. By exactly the same considerations of the previous case, we get
$$u[\infi{A}/a]=\mathcal{C}[\infi{A}/a][r[\infi{A}/a]]\redn \mathcal{C}[\infi{A}/a][r'[\infi{A}/a]]=\mathcal{C}[r'][\infi{A}/a]=u'[\infi{A}/a]$$
\item $q'=v[\lambda y\, t/a]$ and $q=a\,t\, \sigma$. 
Let $t'=t[\infi{A}/a]$, $\sigma'=\sigma[\infi{A}/a]$ and $\mathcal{C}'=\mathcal{C}[\infi{A}/a]$. We first need to show that the head redex of $q[\infi{A}/a]=\infi{A}\,t'\,\sigma'$ is the leftmost among the head redexes of  the parallel processes of $u[\infi{A}/a]$. Assume that $a\,t\, \sigma$ is the $n$-th elementary process of $u$, so that $\infi{A}\,t'\,\sigma'$ is the $n$-th elementary process of $u[\infi{A}/a]$ as well. Then, no parallel process of $u$ has an head redex whose starting symbol is in the $m$-th elementary process of $u$, with $m<n$. By Claims 1 and 2, no parallel process $p[\infi{A}/a]$ of $u[\infi{A}/a]$, where $p$ is a parallel process of $u$, has an head redex whose starting symbol is in the $m$-th elementary process of $u$, with $m<n$, otherwise the starting symbol of the head redex of $p$ would be in the $m$-th elementary process of $u$ as well (Claim 2 applies, since $p$ cannot be of the form $a\, w\, \rho$, given that $a$ is the starting symbol of the redex $u\parallel_{a}v$). Finally, we conclude
\[u[\infi{A}/a]=\mathcal{C}'[\infi{A}\,t'\,\sigma']= \mathcal{C}'[(\lambda x\, \abort\,v[\lambda y\, x / a])\, t'\,\sigma']\redn\]
\[ \mathcal{C}'[ \abort\,v[\lambda y\, t' / a]\,\sigma']\redn \mathcal{C}'[v[\lambda y\, t' / a]]=\mathcal{C}[v[\lambda y\, t / a]][\infi{A}/a]=u'[\infi{A}/a]\eqno{\qEd}\]
\end{itemize}\bigskip

\comment{
\newcommand{\n} {{\,|\,}}
\begin{prop}[Simulation with non-determinism]
Suppose $$u[v/x]\redn s$$
and $x$ does not occur in $v,w$. Then there is a term $u'$ such that
$$s=u'[v/x]\ \land\ u[(v\n w)/x]\redn u'[(v\n w)/x]$$
\end{prop}
\begin{proof} There are several cases, according to what kind of redex has been contracted to obtain $s$.
\begin{itemize}
\item The redex contracted is inside an occurrence of $v$ which replaces $x$ in $u$. This means that for some context $\mathcal{C}$, we  have 
$$u[v/x]=\mathcal{C}[x]\, [v/x]=\mathcal{C}'[v]\redn \mathcal{C}'[t]=s$$
with $v\mapsto t$. 
 Let $u'=\mathcal{C}[t]$. Then by construction
$$s=\mathcal{C}'[t]=\mathcal{C}[t]\, [v/x]=u'[v/x]$$
and
$$u[(v\n w)/x]=\mathcal{C}[x]\, [(v\n w)/x]=\mathcal{C}''[v\n w]\redn\mathcal{C}''[v]\redn\mathcal{C}''[t]= u' [(v\n w)/x]$$
To get the thesis, we have just to check that this second reduction is correct. Indeed, $v$ is the leftmost redex in $\mathcal{C}'[v]$, so $x$ cannot occur on the left of the displayed occurrence of $x$ in $\mathcal{C}[x]$; therefore, in $\mathcal{C}''[v\n w]$ the term $v\n w$ is still the leftmost redex.

\item The redex contracted is created by the substitution of  $x$ with $v$ in a subterm $x\,\sigma$ of $u$: we consider only the case $v=\lambda z\, t$, the other cases posing no additional challenges. This means that for some context $\mathcal{C}$, we have 
$$u[v/x]=\mathcal{C}[x\,\sigma]\, [v/x]=\mathcal{C}'[(\lambda z\, t)\,\sigma']\redn \mathcal{C}'[t[\sigma'/z]]=s$$
 Let $u'=\mathcal{C}[t[\sigma/z]]$. Then by construction
$$s=\mathcal{C}'[t[\sigma'/z]]=\mathcal{C}[t[\sigma/z]]\, [v/x]=u'[v/x]$$
and
$$u[(v\n w)/x]=\mathcal{C}[x\,\sigma]\, [(v\n w)/x]=\mathcal{C}''[(v\n w)\,\sigma'']\redn\mathcal{C}''[(\lambda z\, t)\,\sigma'']\redn\mathcal{C}''[t[\sigma''/z]]= u' [(v\n w)/x]$$
To get the thesis, we have just to check that this second reduction is correct. Indeed, $(\lambda z\, t)\sigma'$ is the leftmost redex in $\mathcal{C}'[(\lambda z\, t)\sigma']$, so $(\lambda z\, t)\sigma'$ cannot occur on the left of the displayed occurrence of $v$; therefore, in $\mathcal{C}''[v\n w]$ the term $v\n w$ is still the leftmost redex.
\end{itemize}
\end{proof}
}
\noindent We are now able to tackle the most difficult case of the Adequacy Theorem \ref{Adequacy Theorem} for realizability: proving that realizability is also sound for the Dummett rule. The idea is that Proposition \ref{proposition-simul} allows us to use in a very strong manner an inductive hypothesis that will naturally be granted when proving the Adequacy Theorem.  This hypothesis is knowing that for every $t\red A\rightarrow B$, $u[t/a]\in\snn$; since one can prove that the term $\infi{A}$ realizes $A\rightarrow B$, one can conclude with simple reasoning that the head reduction reduces $u$ in  $u \parallel_{a} v$ only a finite number of times and a symmetric reasoning  holds for $v$. Hadn't we the abort operator and thus the possibility of local simulation, the hypothesis  that for every $t\red A\rightarrow B$, $u[t/a]\in\snn$, would not be enough to conclude a great deal. Details follow.

\begin{prop}[Preservation of Realizability by Parallel Composition]\label{proposition-parallel}\mbox{}
\begin{enumerate}
\item If for every $t\red A\rightarrow B$, $u[t/a]\in\snn$ and for every $t\red B\rightarrow A$, $v[t/a]\in\snn$, then $u \parallel_{a} v \in\snn$.
\item If for every $t\red A\rightarrow B$, $u[t/a]\red C$ and for every $t\red B\rightarrow A$, $v[t/a]\red C$, then $u \parallel_{a} v \red C$.

\end{enumerate}
\end{prop}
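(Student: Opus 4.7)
The plan is to reduce both parts to the ``abort-and-restore'' realizers $\infi{A} := \lambda x\,\abort\,v[\lambda y\,x/a]$ and $\infi{B} := \lambda z\,\abort\,u[\lambda y\,z/a]$ introduced in Proposition~\ref{proposition-simul}. The first step is to check that $\infi{A}\red A\rightarrow B$ (and symmetrically $\infi{B}\red B\rightarrow A$) under the hypotheses of Part~1. By Proposition~\ref{proposition-somecases}(1), it suffices to show $\abort\,v[\lambda y\,t/a]\red B$ for every $t\red A$, i.e., $\abort\,v[\lambda y\,t/a]\,\sigma\in\snn$ for every $\sigma\in||B||$. Since $y$ is dummy in $t$, Proposition~\ref{proposition-somecases}(1) gives $\lambda y^B\,t\red B\rightarrow A$, and the hypothesis then yields $v[\lambda y\,t/a]\in\snn$. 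Inspecting Definition~\ref{definition-head}, the term $\abort\,v[\lambda y\,t/a]\,\sigma$ either has no head redex at all (when the types required by the abort-rule do not line up), so it lies in $\nf\subseteq\snn$, or its head redex is the outer abort-redex, contracting in one step to $v[\lambda y\,t/a]\in\snn$; either way $\abort\,v[\lambda y\,t/a]\,\sigma\in\snn$.

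Having established the realizer claims, Part~1's hypotheses yield $u[\infi{A}/a]\in\snn$ and $v[\infi{B}/a]\in\snn$. Assume, for contradiction, an infinite head reduction $u\parallel_a v = t_0\succ t_1\succ\cdots$. By inspecting each basic reduction of Figure~\ref{fig:red}, every head reduction step on a parallel composition preserves the outermost $\parallel_a$ and modifies exactly one side; thus $t_i = u_i\parallel_a v_i$ and each step is a \emph{left step} ($v_{i+1}=v_i$) or a \emph{right step} ($u_{i+1}=u_i$). By pigeonhole one kind occurs infinitely often; WLOG, infinitely many left steps. Each left step $u_i\parallel_a v_i\succ u_{i+1}\parallel_a v_i$, by Proposition~\ref{proposition-simul} with $v:=v_i$, lifts to a non-empty head reduction of $u_i[\infi{A}_{v_i}/a]$, where $\infi{A}_{v_i} := \lambda x\,\abort\,v_i[\lambda y\,x/a]$. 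To chain these into an infinite reduction of a single fixed term, I would split on whether right steps occur only finitely often: if so, $v_i$ stabilizes at some $v_J$ and iteration yields an infinite head reduction of $u[\infi{A}_J/a]$, contradicting the normalization of $u[\infi{A}_J/a]$ (which follows by showing $\infi{A}_J\red A\rightarrow B$ exactly as for the original $\infi{A}$, using that the hypothesis on $v$ is preserved along the right-side reductions $v\to\cdots\to v_J$); the opposite case is symmetric, using $v[\infi{B}/a]\in\snn$.

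For Part~2, the hypotheses are strictly stronger than those of Part~1, because any realizer lies in $\snn$ (take $\sigma=\epsilon\in||D||$). Hence $u\parallel_a v\in\snn$, settling the $\sigma=\epsilon$ clause of $u\parallel_a v\red C$. For a non-empty $\sigma\in||C||$, after $\alpha$-renaming so that $a$ does not occur in $\sigma$, the permutation rules for $\D$ head-reduce $(u\parallel_a v)\,\sigma$ to $u\sigma\parallel_a v\sigma$. Part~2's hypotheses specialize to Part~1's for $(u\sigma, v\sigma)$: if $t\red A\rightarrow B$, then $u[t/a]\red C$, hence $u[t/a]\,\sigma=(u\sigma)[t/a]\in\snn$, and symmetrically on the right. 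Applying Part~1 gives $u\sigma\parallel_a v\sigma\in\snn$, whence $(u\parallel_a v)\,\sigma\in\snn$. The main obstacle is the chaining step in Part~1, where $\infi{A}_{v_i}$ depends on the changing right-hand side; the key auxiliary fact that needs to be justified is that the hypothesis ``$v_i[s/a]\in\snn$ for every $s\red B\rightarrow A$'' is preserved along the right-side reductions of $v$ inside $u\parallel_a v$, which should follow because after substituting a realizer for $a$ those reductions translate into ordinary reductions of $v[s/a]$, and the latter is already known to be in $\snn$.
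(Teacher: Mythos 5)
Your overall strategy is the paper's: build the abort realizers $\infi{A}$ and $\infi{B}$, show that they realize $A\rightarrow B$ and $B\rightarrow A$, and use the Local Simulation Proposition~\ref{proposition-simul} to transfer termination of $u[\infi{A}/a]$ and $v[\infi{B}/a]$ to $u\parallel_a v$; your treatment of $\infi{A}\red A\rightarrow B$ and your reduction of Part~2 to Part~1 via the permutation rules are correct and match the paper. The gap is in the chaining step of Part~1, and it is not quite the one you flag at the end. First, a preliminary point: under head reduction, left and right steps cannot interleave --- any redex with starting symbol in $u$ lies to the left of any redex with starting symbol in $v$, and a right step leaves $u$ untouched, so once a right step occurs every subsequent step is a right step. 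Hence ``infinitely many left steps'' forces \emph{zero} right steps, so $v_i=v$ throughout and your Case~A goes through with the original $\infi{A}$; no auxiliary preservation fact about the $v_i$ is needed. You should state and use this ordering fact, since without it your chaining of the $\infi{A}_{v_i}$ with varying $v_i$ does not produce an infinite reduction of any single fixed term.

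The real problem is your second case, which is \emph{not} symmetric. If right steps occur infinitely often, the left part stabilizes at some head normal form $u'$ after finitely many left steps, and Proposition~\ref{proposition-simul} lifts the infinite right tail to an infinite reduction of $v[\infi{B}_{u'}/a]$, where $\infi{B}_{u'}=\lambda z\,\abort\,u'[\lambda y\, z/a]$ is built from the \emph{reduced} left part $u'$ --- not to an infinite reduction of $v[\infi{B}/a]$ with $\infi{B}$ built from the original $u$, which is the term you invoke. To reach a contradiction you must show $\infi{B}_{u'}\red B\rightarrow A$, i.e.\ that $u'[\lambda y\, t/a]\in\snn$ for every $t\red B$; the hypothesis of the proposition speaks only of $u$, and reducts of $u$ do not inherit it. The paper closes exactly this hole by observing that $u'$ is in head normal form and invoking the Head Normal Form Property (Proposition~\ref{proposition-hnf}) to check that substituting the value $\lambda y\, t$ for $a$ creates no new head redexes, so that $u'[\lambda y\, t/a]$ is itself in head normal form and hence in $\snn$. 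Your proposed auxiliary fact (preservation of ``$v_i[s/a]\in\snn$ for all $s\red B\rightarrow A$'' along right steps) would also not be provable as you sketch it: a Dummett-type step inside $v$ translates into a reduction of $v[s/a]$ only for the specific term $s=\infi{B}$, not for an arbitrary realizer $s$. With the no-interleaving observation and the head-normal-form argument for $\infi{B}_{u'}$ added, your proof becomes essentially the paper's, which organizes the same content as a nested induction on the lengths of the reductions of $u[\infi{A}/a]$ and $v[\infi{B}_{u'}/a]$ rather than as a proof by contradiction.
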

\begin{proof}\mbox{}
\begin{enumerate}
\item Define 
$$\infi{A}:= \lambda x\,\abort ( v[\lambda y\, x/a])$$
We start by showing that $\infi{A}\red A\rightarrow B$, which establishes by means of the hypothesis that $u[\infi{A}/a]\in\snn$. 
Let $\rho\in ||A\rightarrow B||$; the case $\rho=\epsilon$ is trivial, so we assume $\rho= t\dott \sigma$, with $t\red A$ and $\sigma\in ||B||$. We must show $\infi{A}\, t\,\sigma\in\snn$. We have 
$$\infi{A}\,t\,\sigma= (\lambda x\,\abort ( v[\lambda y\, x/a])) t\, \sigma\redn \abort ( v[\lambda y\, t/a]) \, \sigma\redn v[\lambda y\, t/a]$$
(assuming the last reduction is possible: if not, the thesis is trivial).
In order to obtain $v[\lambda y\, t/a]\in\snn$, which is what we wanted, it is enough to show that $\lambda y\, t\red B\rightarrow A$.  Let $\rho'\in ||B\rightarrow A||$;  again, the case $\rho'=\epsilon$ is trivial, so we assume $\rho'= t'\dott \sigma' $, with $t'\red B$ and $\sigma'\in ||A||$. We must show $(\lambda y\, t)\, t'\,\sigma'\in\snn$. Indeed, since $t\red A$,
$$(\lambda y\, t)\, t'\,\sigma'\redn t\,\sigma'\in \snn$$

We now prove that $u\parallel_{a} v\in\snn$ by induction on the length of the reduction of $u[\infi{A}/a]$ in head normal form. We have two cases.

\begin{enumerate}
\item  Assume $$u\parallel_{a} v\redn u\parallel_{a} v'$$
so that in particular $u$ is in head normal form. Define 
$$\infi{B}:= \lambda x\,\abort ( u[\lambda y\, x/a])$$
Since we are going again to prove the thesis by induction on the length of the reduction of  $v[\infi{B}/a]$ in head normal form,  we first need to show that $\infi{B}\red B\rightarrow A$, which allows us to conclude that indeed $v[\infi{B}/a]\in\snn$. Let $\rho\in ||B\rightarrow A||$; the case $\rho=\epsilon$ is trivial, so we assume $\rho= t\dott \sigma$, with $t\red B$ and $\sigma\in ||A||$. We have to show $\infi{B}\, t\,\sigma\in\snn$. We have 
$$\infi{B}\,t\,\sigma= (\lambda x\,\abort ( u[\lambda y\, x/a])) t\, \sigma\redn \abort ( u[\lambda y\, t/a]) \, \sigma\redn u[\lambda y\, t/a]$$
Now,  $u$ is in head normal form and thus by Proposition \ref{proposition-hnf},  $$u=u_{0}\parallel_{a_{1}} u_{1}\parallel_{a_{2}}\ldots \parallel_{a_{n}} u_{n}$$
and for each $0 \leq i\leq n$, either $u_{i}=x\, \sigma$, with $x\neq a_{1},\ldots, a_{n}, a$, or $u_{i}=\abort\, w\, \sigma$, with the type of $w$ different from the type of $\abort\, w\, \sigma$, or  $u_{i}=a_{j}$ or $u_{i}=a$ or $u_{i}$ is a value.
Therefore, $u[\lambda y\, t/a]$ is in head normal form, because the substitution does not create head redexes in any parallel process of $u$.

We now prove the main thesis. 
By Proposition \ref{proposition-simul}, $v[\infi{B}/a]\redn^{+} v'[\infi{B}/a]$, so by induction hypothesis we conclude $u\parallel_{a} v'	\in\snn$ and thus $u\parallel_{a} v\in\snn$.

\item Assume
$$u\parallel_{a} v\redn u'\parallel_{a} v$$
By Proposition \ref{proposition-simul}, $u[\infi{A}/a]\redn^{+} u'[\infi{A}/a]$, so by induction hypothesis we conclude $u'\parallel_{a} v\in\snn$ and thus $u\parallel_{a} v\in\snn$.
\end{enumerate}

\item Let $\sigma\in ||C||$. We must show that $(u\parallel_{a} v)\,\sigma\in\snn$. By hypothesis, for every $t\red A\rightarrow B$, $u[t/a]\,\sigma \in\snn$ and for every $t\red B\rightarrow A$, $v[t/a]\,\sigma \in\snn$. By point 1., $u\,\sigma \parallel_{a} v\,\sigma\in\snn$. Since 
$$(u\parallel_{a} v)\,\sigma \redn^{*}  u\,\sigma \parallel_{a} v\,\sigma$$
we are done.\qedhere
\end{enumerate}
\end{proof}

\subsection{The Adequacy Theorem}\label{section-adequacy}
We finally prove that realizability is sound for $\LC$: if we replace all free proof term variables of any proof term with realizers, then we get a realizer. 
\begin{thm}[Adequacy Theorem]\label{Adequacy Theorem}
Suppose that $w: A$ in
the system $\LC$, with $w$ having free variables among
$x_1^{A_1},\ldots,x_n^{A_n}$. 
For all terms $r_1,\ldots,r_k$ of $\Language$, if there are terms $t_1, \ldots, t_n$ such that
$$\text{ for  $i=1,\ldots, n$, }t_i\red  \subst{A}_{i}:= A_i[{r}_1/\alpha_1\cdots
{r}_k/\alpha_k]$$
 then
$$w [{r}_1/\alpha_1\cdots
{r}_k/\alpha_k][t_1/x_1^{\subst{A}_{1}}\cdots
t_n/x_n^{\subst{A}_{n}}]\red A[{r}_1/\alpha_1\cdots
{r}_k/\alpha_k]$$
\end{thm}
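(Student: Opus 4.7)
The plan is to proceed by induction on the proof term $w$, equivalently on its typing derivation. Fix the terms $r_1,\ldots,r_k$ and the realizers $t_i\red \subst{A}_i$ from the hypotheses, and let $\theta$ denote the combined substitution $[r_1/\alpha_1\cdots r_k/\alpha_k][t_1/x_1^{\subst{A}_1}\cdots t_n/x_n^{\subst{A}_n}]$; the goal becomes $w\theta \red \subst{A}$.

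For the axiom case $w = x_i^{A_i}$, the conclusion $w\theta = t_i \red \subst{A}_i$ is by hypothesis. For every introduction rule of intuitionistic logic ($\lambda x\, u$, $\pair{u}{v}$, $\inj_i(u)$, $(m,u)$, $\lambda\alpha\, u$), the induction hypothesis gives realizers for the immediate subterms, and the corresponding clause of Proposition \ref{proposition-somecases} assembles a realizer for $w\theta$. Symmetrically, for each elimination rule ($tu$, $t\pi_i$, $u[x.w_1,y.w_2]$, $u[(\alpha,x).t]$, $tm$), the induction hypothesis supplies realizers for the premises, and the corresponding clause of Proposition \ref{proposition-redelim} delivers the conclusion. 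The two quantifier cases demand a little extra care: when $w = \lambda\alpha\, u$, the side condition that $\alpha$ is not free in the type of any hypothesis ensures that $\theta$ commutes with the binder; we then apply the induction hypothesis, for each $m\in\Language$, to the substitution that additionally sends $\alpha$ to $m$, and invoke Proposition \ref{proposition-somecases}.2. The case $w = \efq{P}{u}$ is handled separately: by induction $u\theta \red \bot$, so $u\theta\in\snn$ since $\|\bot\|=\{\epsilon\}$; because $P$ is atomic, $\|P\theta\|=\{\epsilon\}$ as well, and no head reduction rule applies to $\efq{P}{u\theta}$ at the root, so $\efq{P}{u\theta}\in\snn$ and hence $\efq{P}{u\theta}\red P\theta$.

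The substantive case is the Dummett rule, where $w = u\parallel_a v$ with the subterms $u$ and $v$ both typed $C$ under the extra hypotheses $a^{A\rightarrow B}$ and $a^{B\rightarrow A}$ respectively. Applying the induction hypothesis to $u$, with the hypothesis list extended by $a^{A\rightarrow B}:A\rightarrow B$, shows that for every $s\red (A\rightarrow B)\theta$ one has $(u\theta)[s/a]\red C\theta$; an analogous statement for $v$ holds with $(B\rightarrow A)\theta$ in place of $(A\rightarrow B)\theta$. Proposition \ref{proposition-parallel}.2 then yields $u\theta\parallel_a v\theta\red C\theta$, which is the required conclusion.

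The chief conceptual obstacle, namely showing that parallel composition under the $\parallel_a$ operator preserves realizability despite the non-local communication mechanism, has already been resolved in Proposition \ref{proposition-parallel} via the abort operator of $\ALC$ and the local simulation of Proposition \ref{proposition-simul}. Thanks to that preparation, the Adequacy Theorem itself reduces to a routine structural induction with one case per term constructor, each discharged by the appropriate preservation lemma of Section \ref{section-reducibilityproperties}.
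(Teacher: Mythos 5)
Your proposal is correct and follows essentially the same route as the paper: a structural induction on $w$ (equivalently on the last rule of the typing derivation), discharging introduction cases via Proposition \ref{proposition-somecases}, elimination cases via Proposition \ref{proposition-redelim}, the Dummett rule via Proposition \ref{proposition-parallel}.2 applied to the extended induction hypothesis, and the ex falso and $\forall I$ cases by the same side-condition bookkeeping. No gaps to report.
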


\begin{proof}

For any term $v$ and formula $B$, we define 
$$\subst{v}:=v[{r}_1/\alpha_1\cdots
{r}_k/\alpha_k][t_1/x_1^{\subst{A}_{1}}\cdots
t_n/x_n^{\subst{A}_{n}}]$$
and
  $$\subst{B}:=B[{r}_1/\alpha_1\cdots {r}_k/\alpha_k]$$ We proceed by induction on $w$. Consider the last rule $\mathscr{R}$ in the derivation of $w: A$:

\begin{enumerate}[widest=10]

\item
If $\mathscr{R}= x_{i}^{A_{i}}: A_{i}$, for some $i$, then
$w=x_i^{A_{i}}$ and
$A=A_i$. So $\subst{w}=t_i\red
\subst{A}_i=\subst{A}$.


\item If $\mathscr{R}$  is the $\rightarrow E$ rule, then $w=t\, u$, 
$$ t:
B\rightarrow A \qquad u: B$$
 So by Proposition \ref{proposition-redelim},
$\subst{w}=\subst{t}\,\subst{u}\red \subst{A}$, for
$\subst{t}\red \subst{B}\rightarrow \subst{A}$ and
$\subst{u}\red \subst{B}$ by induction hypothesis.

  \item If $\mathscr{R}$ is the $\rightarrow I$ rule, then $w=\lambda x^{B}\,
u$,
$A=B\rightarrow C$ and $ u: C$. So, $\subst{w}=\lambda x^{\subst{B}}\, \subst{u}$, because by renaming of bound variables we can assume $x^{\subst{B}} \neq x_1^{\subst{A}_{1}}, \ldots, x_k^{\subst{A}_{k}}$. For  every  $t\red
\subst{B}$, by induction hypothesis on $u$, $\subst{u}[t/x^{\subst{B}}]\red\subst{C}$.  
Therefore, by Proposition \ref{proposition-somecases}, $\lambda x^{\subst{B}}\, \subst{u}\red \subst{B}\rightarrow \subst{C}=
\subst{A}$.

\item
If $\mathscr{R}$ is a $\vee I$ rule, say left (the other case is symmetric), then $w=\inj_{0}(u)$, $A=B\vee C$ and $ u: B$. So, $\subst{w}=\inj_{0}(\subst{u})$ and by induction hypothesis $\subst{u}\red \subst{B}$. Hence, by Proposition \ref{proposition-somecases} we conclude $\inj_{0}(\subst{u}) \red \subst{B}\lor\subst{C}= \subst{A}$.

\item If $\mathscr{R}$ is  a $\vee E$ rule, then
$$w=  u [x^{B}.w_1, y^{C}.w_2] $$
 and  
 $$ u: B\vee C \qquad  w_1: D\qquad w_2: D$$ 
 with $A=D$.  By induction hypothesis, we have $\subst{u}\red \subst{B}\lor \subst{C}$; moreover,  for every $t\red \subst{B}$, we have $\subst{w}_{1}[t/x^{\subst{B}}]\red \subst{D}$ and for every $t\red \subst{C}$, we have $\subst{w}_{2}[t/y^{\subst{C}}]\red \subst{D}$.  By Proposition \ref{proposition-redelim}, we obtain $\subst{w}=\subst{u}\, [x^\subst{B}.\subst{w}_1, y^\subst{C}.\subst{w}_2]\red \subst{D}$.

 \item The cases $\mathscr{R}=\land E$ and $\mathscr{R}=\land I$ are straightforward. 

\item The cases $\mathscr{R}=\exists I$ and $\mathscr{R}=\exists E$ are similar respectively to $\lor I$ and $\lor E$.

\item If $\mathscr{R}$ is the $\forall E$ rule, then $w=u\, m$, $A=B[m/\alpha]$
and $u: \forall \alpha^{}\, B$. So,
$\subst{w}=\subst{u}\,\subst{m}$.  By inductive hypothesis  $\subst{u}\red
\forall\alpha^{}\, \subst{B}$ and so $\subst{u}\,\subst{m}\red \subst{B}[\subst{m}/\alpha]$ by Proposition \ref{proposition-redelim}.

\item
If $\mathscr{R}$ is the  $\forall I$ rule, then $w=\lambda \alpha\, u$, $A=\forall \alpha\, B$ and $u: B$ (with $\alpha$ not occurring free in the types $A_{1}, \ldots, A_{n}$ of the free variables of $u$). So, $\subst{w}=\lambda \alpha\, \subst{u}$, since we may assume $\alpha\neq \alpha_1, \ldots, \alpha_k$. Let $m$ be a term of $\Language$; by Proposition \ref{proposition-somecases}, it is enough to prove that $\subst{u}[m/\alpha]\red \subst{B}[{m}/\alpha]$, which amounts to showing that the induction hypothesis can be applied to $u$. For this purpose, we observe that, since $\alpha\neq \alpha_1, \ldots, \alpha_k$, for $i=1, \ldots, n$ we have
$$t_i\red \subst{A}_i=\subst{A}_i[m/\alpha]$$

\comment{
  \item If it is the  $\exists E$ rule, then \[w=(\lambda
\alpha^{\tau}\lambda x^{|B|} t)(u\pi_0)( u\pi_1)\] $\Gamma, x:B \vdash t: A$ and
$\Gamma \vdash u: \exists \alpha^{\tau}. B$. Assume ${v} = \pi_0
\subst{u}[s]$. Then
\[\subst{t}[{v}/\alpha^{\tau},\pi_1\subst{u}/
x^{|{B}|}]\red
\subst{A}[{v}/\alpha^{\tau}]=\subst{A}\] by
inductive hypothesis, whose application being justified by the
fact, also by induction, that $\subst{u}\red \exists
\alpha^{}.
\subst{B}$ and hence $\pi_1\subst{u}\red
\subst{B}[{v}/\alpha^{\tau}]$. We thus obtain by
$ \subst{w} [s] =\subst{t}[\pi_0 \subst{u}/\alpha^{\tau}\ \pi_1\subst{u}/x^{|B|}] [s] $ and saturation (prop. \ref{proposition-saturation}) that
\[\subst{w} \red \subst{A}\]

  \item If \mathscr{R} is the $\exists I$ rule, then $w=\langle  t,u\rangle
$, $A=\exists
\alpha^{\tau}
B$, $\Gamma \vdash u: B[t/\alpha^{\tau}]$. So, $\subst{w}=\langle
\subst{t},\subst{u}\rangle $; and, indeed, $\pi_1
\subst{w}=\subst{u}\red \subst{B}[\pi_0\subst{w}/\alpha^{\tau}]=\subst{B}[\subst{t}/\alpha^{\tau}]$ since by induction hypothesis
$\subst{u}\red \subst{B}[\subst{t}/\alpha^{\tau}]$. By saturation we conclude the thesis. \\
}

\item If $\mathscr{R}$ is the $\mathsf{D}$ rule, then $w= u\parallel_{a} v$, $A=D$ and 
\begin{prooftree}
\AxiomC{$[a^{\scriptscriptstyle B\rightarrow C}: B\rightarrow C]$}
\noLine
\UnaryInfC{$\vdots$}
\noLine
\UnaryInfC{$u: D$}
\AxiomC{$[a^{\scriptscriptstyle C\rightarrow B}: C\rightarrow B]$}
\noLine
\UnaryInfC{$\vdots$}
\noLine
\UnaryInfC{$v: D$}
\BinaryInfC{$u \parallel_{a} v: D$}
\end{prooftree}
By induction hypothesis, for every $t\red \subst{B}\rightarrow \subst{C}$, we have $\subst{u}[t/a]\red \subst{D}$ and for every $t\red \subst{C}\rightarrow \subst{B}$,  $\subst{v}[t/a]\red \subst{D}$. By  Proposition \ref{proposition-parallel}, we conclude $\subst{w}=\subst{u}\parallel_{a}\subst{v}\red \subst{D}$. 

\item If $\mathscr{R}$ is the ex falso quodlibet rule, then $w=\efq{P}{u}$, $A=P$ and $u: \bot$. 
Now, $||P||=\{\epsilon\}$ and $\subst{w}\,\epsilon=\subst{w}=\efq{P}{\subst{u}} \in\snn$. We conclude $\subst{w}\red A$. \qedhere
\end{enumerate}
\end{proof}

\subsection{Normalization 
for $\LC$}

As corollary of the Adequacy Theorem \ref{Adequacy Theorem}, one obtains normalization for $\LC$.

\begin{cor}[Normalization for $\LC$]\label{theorem-snNEM} Suppose that  $ t: A$ is a proof term of $\LC$. Then $t\in\snn$.
\end{cor}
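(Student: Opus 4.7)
The plan is to view the corollary as a direct application of the Adequacy Theorem \ref{Adequacy Theorem}, using only the most trivial possible substitution: the identity. This requires two observations. First, since $\LC$ terms are a subset of $\ALC$ terms and no $\LC$ reduction rule introduces the $\abort$ constant, every head-reduction sequence of a $\LC$ proof term $t$ in the sense of Definition \ref{definition-headred} is also a head-reduction sequence of $t$ viewed as an $\ALC$ term. So it suffices to show $t \in \snn$. Second, I need to supply realizers for the free proof-term variables of $t$ in order to feed Adequacy.

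The key preliminary lemma is that variables realize their types: for every formula $B$ and every proof variable $x^B$, one has $x^B \red B$. The typing $x^B : B$ is the axiom rule. For the second clause, let $\sigma \in ||B||$ and consider the term $x^B\,\sigma$. Inspecting the seven cases of Definition \ref{definition-head}, none applies: cases (1)--(5) require a value at the head, case (6) requires a parallel subterm at the head, and case (7) requires $t$ itself to be a parallel composition; but the head of $x^B\,\sigma$ is a variable. Hence $x^B\,\sigma$ admits no head redex, so it is in head normal form and trivially lies in $\snn$.

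Now the corollary is immediate. Let $x_1^{A_1}, \ldots, x_n^{A_n}$ be the free proof-term variables of $t$ and $\alpha_1, \ldots, \alpha_k$ the individual variables that appear in $A, A_1, \ldots, A_n$. Choose $r_j := \alpha_j$ and $t_i := x_i^{A_i}$. Then $\overline{A}_i = A_i$, and by the preliminary lemma each $x_i^{A_i}$ realizes $\overline{A}_i$. The Adequacy Theorem \ref{Adequacy Theorem} therefore yields
\[
t \;=\; \overline{t} \;\red\; \overline{A} \;=\; A.
\]
Since $\epsilon \in ||A||$ by inspection of every clause in Definition \ref{definition-reducibility}, unfolding the realizability relation gives $t = t\,\epsilon \in \snn$, which is the thesis.

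There is no serious obstacle: all the conceptual work lives in Propositions \ref{proposition-redelim}--\ref{proposition-parallel} and in the Adequacy Theorem. The only thing worth stating explicitly is the variable-realizes-its-type lemma, which is a one-line consequence of the head normal form analysis in Definition \ref{definition-head} and Proposition \ref{proposition-hnf}.
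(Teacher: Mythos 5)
Your proposal is correct and follows essentially the same route as the paper: variables realize their own types because $x\,\sigma$ has no head redex, the Adequacy Theorem with the identity substitution gives $t\red A$, and $\epsilon\in ||A||$ yields $t=t\,\epsilon\in\snn$. The extra remarks (that $\LC$-reductions never introduce $\abort$, and the explicit case check in Definition~\ref{definition-head}) merely make explicit what the paper leaves implicit.
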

\begin{proof}
Assume $x_1: {A_1},\ldots,x_n:{A_n}$ are the free variables of $t$. We observe that $x_{i}\red A_{i}$, for $i=1,\ldots, n$ because, given any $\sigma\in ||A_{i}||$, $x\,\sigma\in \snn$. Therefore, 
from Theorem \ref{Adequacy Theorem}, we derive that $t\red A$ and since $\epsilon \in ||A||$, we conclude $t= t\,\epsilon \in\snn$. 
\end{proof}

\comment{
As corollary, one obtains strong normalization for $\ALC$.
\begin{cor}[Normalization for $\ALC$] Suppose $t: A$ in $\ALC$. Then $t\in\snn$.
\end{cor}
\begin{proof}
Assume $x_1: {A_1},\ldots,x_n:{A_n}$ are the free variables of $t$. We observe that $x_{i}\red A_{i}$, for $i=1,\ldots, n$ because, given any $\sigma\in ||A_{i}||$, $x\,\sigma\in \snn$. Therefore, 
from Theorem \ref{Adequacy Theorem}, we derive that $t\red A$ and since $\epsilon \in ||A||$, we conclude $t= t\,\epsilon \in\snn$. 
\end{proof}

The strong normalization of $\ALC$ is readily turned into a strong normalization result for $\LC$, since every term typable in $\LC$ is obviously typable in $\ALC$, for $\ALC$ is an extension of $\LC$. 

\begin{cor}[Normalization for $\LC$]\label{theorem-snNEM} Suppose $ t: A$ in $\LC$. Then $t\in\sn$.
\end{cor}
}

\section{
Normal Form Property and Herbrand's Disjunction Extraction}\label{sec:nfhd}

In this section, we finally show that our Curry-Howard correspondence for $\LC$ is meaningful from the computational perspective. 
We already know that every  execution of every program we extract always terminate; now we prove that in the case of \emph{any} existentially quantified formula $\exists \alpha\, A$, every closed program of that type produces a complete finite sequence $m_{1}, m_{2}, \ldots, m_{k}$ of possible witnesses for $\exists \alpha\, A$. This means that whatever first-order model we consider, there will be an $i$  such that $A[m_{i}/\alpha]$ is true in it. In other terms, we have provided a proof that $\LC$ is Herbrand constructive and a Curry-Howard computational interpretation of this very strong Herbrand-like theorem. 

Such  statements in first-order logic are typically drawn as consequences of the Subformula Property, which is in turn a corollary of full cut-elimination  when sequent calculus is available.  
But as in \cite{AschieriZH}, a more primitive argument suffices here. This is indeed providential, since not only without permutation rules for $\lor$ and $\exists$ we can have no Subformula Property, but surprisingly even those reductions would not suffice. The topic of what reductions are needed is very non-trivial and left as subject of future research.
However, in a sense, Herbrand constructiveness is already a \emph{weak} Subformula Property and holds for the most interesting case of the existential quantifier, when there is actually some information to gain. For lambda calculus, instead,  to enjoy the Subformula Property is a mere curiosity without much computational sense. In fact, if we think that in intuitionistic Logic or fragments of classical Arithmetic \cite{ABB} general permutation rules are not needed to compute witnesses, it should not entirely come as a surprise that this is still the case in our framework.


If we omit parentheses, we know that every proof term in head normal form can be written as $v_{0}\parallel_{a_{1}} v_{1}\ldots \parallel_{a_{n}} v_n$, where each $v_{i}$ is not of the form ${u}\parallel_{a}{v}$; if for every $i$, $v_i$ is of the form $(m_i,u_i)$, then we call the whole term an \emph{Herbrand normal form}, because it is essentially a list of the witnesses  appearing in an Herbrand disjunction. Formally:
\begin{defi}[Herbrand Normal Forms] \label{definition-hnf}
We define by induction a set of proof terms, called \textbf{Herbrand normal forms}, as follows:
\begin{itemize}
\item Every proof-term $(m,u)$ is an Herbrand normal form;
\item if $u$ and $v$ are Herbrand normal forms, $u\parallel_{a} v$ is an Herbrand normal form.
\end{itemize}
\end{defi}

An Herbrand normal form represents, in a straightforward way, a proof of an Herbrand disjunction. 

\begin{prop}[Herbrand Normal Forms and Herbrand Disjunctions]\label{proposition-herbrandforms}\mbox{}\\
Suppose that  $\Gamma\vdash u: \exists \alpha\, A$ in $\LC$  and $u$ is an Herbrand normal form 
$$(m_{0}, v_{0})\parallel_{a_{1}} (m_{1}, v_{1})\parallel_{a_{2}}\ldots \parallel_{a_{k}} (m_{k}, v_{k})$$
Then for some $u^{+}$ 
$$ \Gamma\vdash u^{+}: A[m_{1}/\alpha]\lor \dots \lor A[m_{k}/\alpha]$$
\end{prop}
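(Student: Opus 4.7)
The plan is to proceed by induction on the structure of the Herbrand normal form $u$, following Definition~\ref{definition-hnf}, and to construct $u^+$ explicitly from the intuitionistic connectives plus the Dummett rule $\D$.

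For the base case, $u = (m,v)$ with $v : A[m/\alpha]$ under context $\Gamma$. Here the corresponding ``Herbrand disjunction'' has just one disjunct, so I simply set $u^+ := v$, giving $\Gamma \vdash u^+ : A[m/\alpha]$.

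For the inductive case, $u = u_1 \parallel_a u_2$ where each $u_j$ is itself an Herbrand normal form of type $\exists\alpha\, A$. By the typing rule for $\D$, the hypothesis $a$ is introduced as $a^{B\to C} : B \to C$ on the left branch and as $a^{C\to B} : C \to B$ on the right, both discharged by $\parallel_a$. So we have $\Gamma, a^{B\to C}{:}\, B\to C \vdash u_1 : \exists \alpha\, A$ and $\Gamma, a^{C\to B}{:}\, C\to B \vdash u_2 : \exists\alpha\, A$. Writing $u_1$ as $(m_0,v_0)\parallel_{a_1}\ldots\parallel_{a_i}(m_i,v_i)$ and $u_2$ as $(m_{i+1},v_{i+1})\parallel_{a_{i+2}}\ldots\parallel_{a_k}(m_k,v_k)$, the induction hypothesis supplies
\[
\Gamma, a^{B\to C}{:}\, B\to C \vdash u_1^{+} : D_1, \qquad \Gamma, a^{C\to B}{:}\, C\to B \vdash u_2^{+} : D_2,
\]
where $D_1$ is the disjunction of $A[m_0/\alpha],\ldots, A[m_i/\alpha]$ and $D_2$ is the disjunction of $A[m_{i+1}/\alpha],\ldots, A[m_k/\alpha]$.

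To apply the $\D$ rule we need both branches to end in the same formula, namely the full disjunction $D = A[m_0/\alpha]\lor\ldots\lor A[m_k/\alpha]$. Hence the next step is a standard ``weakening'' lemma: any proof of a subdisjunction $A_{p}\lor\ldots\lor A_{q}$ can be transformed into a proof of a larger disjunction $A_{0}\lor\ldots\lor A_{k}$ containing it, without changing its open assumptions. This is done by a $\lor$-elimination on the subdisjunction followed, in each branch, by the appropriate composition of $\inj_{0}$ and $\inj_{1}$ to land into the correct disjunct of $D$. Applying this to $u_1^{+}$ and $u_2^{+}$ yields terms $w_1, w_2$ with $\Gamma, a^{B\to C}{:}\, B\to C \vdash w_1 : D$ and $\Gamma, a^{C\to B}{:}\, C\to B \vdash w_2 : D$. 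Then the Dummett rule gives $\Gamma \vdash w_1 \parallel_a w_2 : D$, and we set $u^{+} := w_1 \parallel_a w_2$.

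The only delicate point is bookkeeping: checking that the hypothesis $a$ appearing inside $u_1$ and $u_2$ really is the one discharged by the outer $\parallel_a$ (which follows from the formation rule for $\D$-terms), and that the weakening step preserves the context so that no fresh free variables are introduced. Both are routine but must be handled carefully, since an error here would spoil the final application of the $\D$ rule. No subject reduction is needed because we are not computing on $u$; we are merely translating its tree shape into a proof of the corresponding Herbrand disjunction.
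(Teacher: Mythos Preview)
Your proposal is correct and follows essentially the same approach as the paper: induction on the structure of the Herbrand normal form, with the base case $(m,v)$ handled trivially and the inductive case $u_{1}\parallel_{a}u_{2}$ handled by applying the induction hypothesis to each side, weakening the two resulting subdisjunctions to the full disjunction, and then reapplying the $\D$ rule. The only cosmetic difference is that the paper obtains the weakening by repeated $\lor I$ alone (relying on a convenient association of the target disjunction), whereas you spell it out via a $\lor E$ on the subdisjunction followed by injections; both are routine intuitionistic manipulations and neither changes the argument.
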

\begin{proof}
We proceed by induction on $k$. \\
If $k=0$, then $u=(m_{0}, v_{0})$ and thus $\Gamma \vdash v_{0}: A[m_{0}/\alpha]$, which is the thesis. \\
If $k>0$, then $u=w_{1}\parallel_{a_{i}} w_{2}$,  for some $1\leq i\leq n$ and 
$$w_{1}=(m_{0}, v_{0})\parallel_{a_{1}} (m_{1}, v_{1})\parallel_{a_{2}}\ldots \parallel_{a_{i-1}} (m_{i-1}, v_{i-1})$$
$$w_{2}=(m_{i}, v_{i})\parallel_{a_{i+1}} (m_{i+1}, v_{i+1})\parallel_{a_{i+2}}\ldots \parallel_{a_{k}} (m_{k}, v_{k})$$
\begin{prooftree}
\AxiomC{$[a_{i}^{\scriptscriptstyle B\rightarrow C}: B\rightarrow C]$}
\noLine
\UnaryInfC{$\vdots$}
\noLine
\UnaryInfC{$w_{1}: \exists \alpha\, A$}
\AxiomC{$[a_{i}^{\scriptscriptstyle C\rightarrow B}: C\rightarrow B]$}
\noLine
\UnaryInfC{$\vdots$}
\noLine
\UnaryInfC{$w_{2}: \exists\alpha\, A$}
\BinaryInfC{$w_{1} \parallel_{a} w_{2}: \exists\alpha\, A$}
\end{prooftree}
By induction hypothesis, 
$$\Gamma, a_{i}: B\rightarrow C\vdash w_{1}^{+}: A[m_{1}/\alpha]\lor \dots \lor A[m_{i-1}/\alpha]$$
$$\Gamma, a_{i}: C\rightarrow B\vdash w_{2}^{+}: A[m_{i}/\alpha]\lor \dots \lor A[m_{k}/\alpha]$$
Hence, by repeated application of the $\lor I$ inference, we get, for some $s_{1}, s_{2}$,
$$\Gamma, a_{i}: B\rightarrow C\vdash s_{1}: A[m_{1}/\alpha]\lor \dots \lor A[m_{k}/\alpha]$$
$$\Gamma, a_{i}: C\rightarrow B\vdash s_{2}: A[m_{1}/\alpha]\lor \dots \lor A[m_{k}/\alpha]$$
and thus 
$$\Gamma \vdash s_{1}\parallel_{a_{i}} s_{2}: A[m_{1}/\alpha]\lor \dots \lor A[m_{k}/\alpha]$$
By setting $u^{+}:=s_{1}\parallel_{a_{i}} s_{2}$, we obtain the thesis. 
\end{proof}

Our last task is to prove that every closed realizer of any existentially quantified statement $\exists \alpha\, A$ include an exhaustive sequence $m_{1}, m_{2}, \ldots, m_{k}$ of possible witnesses.


\begin{thm}[Herbrand Disjunction and Realizability]\label{theorem-extraction0}
Let $\exists\alpha\,A$ be any formula. Suppose $t\red \exists \alpha\, A$, $t$ contains neither free proof term variables nor $\abort$, and $t\succ^{*} u\in\hnf$. Then  $u$ is an Herbrand normal form   
$$u=(m_{0}, v_{0})\parallel_{a_{1}} (m_{1}, v_{1})\parallel_{a_{2}}\ldots \parallel_{a_{k}} (m_{k}, v_{k})$$
and
$$\LC \vdash A[m_{1}/\alpha]\lor \dots \lor A[m_{k}/\alpha]$$

\end{thm}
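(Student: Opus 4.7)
The plan is to combine the Subject Reduction Theorem, Proposition~\ref{proposition-hnf} (Head Normal Form Property), and Proposition~\ref{proposition-herbrandforms}. First, since $t \red \exists\alpha\,A$ in the sense of Definition~\ref{definition-reducibility}, we have $t: \exists\alpha\,A$, and because $t$ contains no $\abort$, this typing in fact takes place in $\LC$. Head reduction preserves types and does not enlarge the set of free variables (Theorem~\ref{subjectred}), and none of the basic reductions in Figure~\ref{fig:red} create occurrences of $\abort$. Consequently the head normal form $u$ is a closed $\LC$-term of type $\exists\alpha\,A$ containing no $\abort$.

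The key step is then to read off the shape of $u$ from Proposition~\ref{proposition-hnf}. Writing
$$u = u_1 \parallel_{a_1} u_2 \parallel_{a_2} \ldots \parallel_{a_n} u_{n+1}$$
with each $u_i$ an elementary process, the proposition gives four possible shapes for $u_i$: (a) $x\,\sigma$ with $x \neq a_1,\ldots,a_n$; (b) $\abort\,w\,\sigma$; (c) $a_j$ alone; or (d) a value. I would rule out the first three cases as follows. Case (a) is excluded because $u$ is closed, so any free variable of $u_i$ must lie in $\{a_1,\ldots,a_n\}$. Case (b) is excluded because $u$ contains no occurrence of $\abort$. Case (c) is excluded by typing: since the parallel composition has type $\exists\alpha\,A$, each $u_i$ must have type $\exists\alpha\,A$, while the variables $a_j$ introduced by the Dummett rule carry implication types $B\rightarrow C$ or $C\rightarrow B$, not an existential type.

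So every $u_i$ is a value of type $\exists\alpha\,A$, and by inspection of Definition~\ref{definition-value} the only such values are pairs $(m_i, v_i)$. Hence $u$ matches the form of an Herbrand normal form in Definition~\ref{definition-hnf}. Proposition~\ref{proposition-herbrandforms}, applied with empty $\Gamma$ (since $u$ is closed), yields a closed proof term $u^+$ of the Herbrand disjunction $A[m_1/\alpha]\lor \dots \lor A[m_k/\alpha]$, whence $\LC \vdash A[m_1/\alpha]\lor \dots \lor A[m_k/\alpha]$, completing the argument.

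The only delicate point is the case analysis in the second paragraph; the main obstacle I anticipate is making the typing argument against case (c) fully rigorous, i.e.\ verifying that the Dummett inference always assigns an implication type to the variable it binds and hence that the variable cannot appear, on its own, as a subterm of existential type. This is however immediate from the $\D$ typing rule in Figure~\ref{fig:system} and the convention that the types of Dummett-introduced variables are $A\rightarrow B$ and $B\rightarrow A$.
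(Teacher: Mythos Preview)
Your proposal is correct and follows essentially the same route as the paper: apply Proposition~\ref{proposition-hnf} to the head normal form, rule out the non-value shapes using closedness, absence of $\abort$, and the typing constraint on Dummett variables, conclude that each elementary process is a pair $(m_i,v_i)$, and then invoke Proposition~\ref{proposition-herbrandforms}. If anything, you are slightly more explicit than the paper in invoking Subject Reduction to transfer closedness and the absence of $\abort$ from $t$ to $u$, and in treating the $\abort\,w\,\sigma$ case separately (the paper silently drops it).
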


\begin{proof}\mbox{} By Proposition \ref{proposition-hnf}  
$$u=u_{0}\parallel_{a_{1}} u_{1}\parallel_{a_{2}}\ldots \parallel_{a_{k}} u_{k}$$
where for each $0 \leq i\leq k$, either $u_{i}=x\, \sigma$, with $x\neq a_{1},\ldots, a_{n}$ or  $u_{i}=a_{j}$, with $1 \leq j\leq k$, or $u_{i}$ is a value. Since $u_{i}$ does not contain free proof term variables other than $a_{1},\ldots, a_{k}$, it cannot be of the form $u_{i}=x\, \sigma$. 
Moreover, $u_{i}: \exists \alpha\, A$, hence $u_{i}$ cannot be equal to some $a_{j}$, because $a_{j}$ must have type $B\rightarrow C$. Therefore $u_{i}$ is a value, according to Definition \ref{definition-value}, and the only possible shape compatible with its type $\exists \alpha\, A$ is $(m_{i}, u_{i})$.
We have thus shown that $u$ is an Herbrand normal form
$$(m_{0}, v_{0})\parallel_{a_{1}} (m_{1}, v_{1})\parallel_{a_{2}}\ldots \parallel_{a_{k}} (m_{k}, v_{k})$$
By Proposition \ref{proposition-herbrandforms}, for some $u^{+}$
$$\LC \vdash u^{+}: A[m_{1}/\alpha]\lor \dots \lor A[m_{k}/\alpha]$$
which is the thesis.
\end{proof}

As corollary, we obtain that Dummett's logic $\LC$ is Herbrand constructive. 
\begin{cor}[Herbrand Disjunction Extraction]\label{theorem-extraction2}
Let $\exists\alpha\,A$ be any formula. Suppose 
$$\LC \vdash t:  \exists \alpha\, A$$ Then there is a proof term $u$ such that $t\succ^{*} u\in\hnf$, $\LC \vdash u: \exists \alpha\, A$ and $u$ is an Herbrand normal form   
$$u=(m_{0}, v_{0})\parallel_{a_{1}} (m_{1}, v_{1})\parallel_{a_{2}}\ldots \parallel_{a_{k}} (m_{k}, v_{k})$$
Moreover, 
$$\LC \vdash A[m_{1}/\alpha]\lor \dots \lor A[m_{k}/\alpha]$$

\end{cor}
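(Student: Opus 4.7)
The plan is to assemble the corollary from three ingredients already developed in the paper: the Adequacy Theorem, the Normalization Corollary, and Theorem~\ref{theorem-extraction0}, using Subject Reduction to carry the type through the reduction.

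First, I would observe that since $\LC \vdash t : \exists\alpha\, A$, the term $t$ is closed (no free proof term variables) and contains no occurrence of $\abort$, because $\abort$ is not available in $\LC$ but only in the auxiliary calculus $\ALC$. By the Normalization Corollary~\ref{theorem-snNEM}, $t \in \snn$, so the head reduction sequence starting from $t$ is finite and terminates in some head normal form $u$, i.e.\ $t \succ^{*} u \in \hnf$. By iterated application of the Subject Reduction Theorem~\ref{subjectred}, we still have $\LC \vdash u : \exists\alpha\, A$, and no new free variables are created, so $u$ is still closed and $\abort$-free.

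Next, to feed Theorem~\ref{theorem-extraction0}, I need $t \red \exists\alpha\, A$. This is exactly the content of the Adequacy Theorem~\ref{Adequacy Theorem} applied with the empty context: since $t$ has no free proof term variables and no free individual variables to instantiate, the theorem collapses to the statement $t \red \exists\alpha\, A$ directly. Combining this with $t \succ^{*} u \in \hnf$ and the fact that $t$ contains neither free proof term variables nor $\abort$, all hypotheses of Theorem~\ref{theorem-extraction0} are met.

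Applying Theorem~\ref{theorem-extraction0} then yields both conclusions at once: $u$ is an Herbrand normal form
\[
u = (m_{0}, v_{0}) \parallel_{a_{1}} (m_{1}, v_{1}) \parallel_{a_{2}} \ldots \parallel_{a_{k}} (m_{k}, v_{k})
\]
and $\LC \vdash A[m_{1}/\alpha] \lor \cdots \lor A[m_{k}/\alpha]$, which is exactly the statement of the corollary. There is essentially no obstacle to overcome here: the genuine work has already been done in establishing Adequacy, Normalization, Subject Reduction and Theorem~\ref{theorem-extraction0}. The only mild care required is verifying that the hypotheses of Theorem~\ref{theorem-extraction0} (closedness and absence of $\abort$) transfer from $t$ to the intermediate terms appearing in the head reduction, which follows from Subject Reduction together with the simple observation that none of the basic reduction rules of Figure~\ref{fig:red} introduces the constant $\abort$ or free proof term variables.
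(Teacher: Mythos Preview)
Your proof is correct and follows essentially the same route as the paper: invoke Adequacy to get $t\red\exists\alpha\,A$, use Subject Reduction to carry the type to the head normal form $u$, and then apply Theorem~\ref{theorem-extraction0}. Your write-up is simply more explicit about the existence of $u$ (via Normalization) and about why $t$ is closed and $\abort$-free, points the paper leaves implicit; your final remark about transferring closedness and absence of $\abort$ to intermediate terms is harmless but unnecessary, since Theorem~\ref{theorem-extraction0} only requires these conditions on $t$.
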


\begin{proof}\mbox{}By the Subject Reduction Theorem \ref{subjectred}, $\LC\vdash u: \exists \alpha\, A$. By the Adequacy Theorem \ref{Adequacy Theorem}, $t\red \exists \alpha\, A$ and the thesis follows from Theorem \ref{theorem-extraction0}.
\end{proof}
We suggest to interpret an Herbrand normal form $$(m_{0}, v_{0})\parallel_{a_{1}} (m_{1}, v_{1})\parallel_{a_{2}}\ldots \parallel_{a_{k}} (m_{k}, v_{k})$$ in the following way. Each $(m_{i}, u_{i})$ represents the result of an intuitionistic computation of a witness in a possible universe. These witnesses have been obtained by communication coming from other intuitionistic computations in other parallel universes. It is that process of interaction and dialogue between different possible computations that generates the Herbrand normal forms.

\subsection{Parallel Reductions}
Head reduction, of course, is sequential computation. Yet, the operator $\parallel_{a}$ has such a strong parallel flavour that  parallel reduction strategies inevitably arise as consequence of Normalization for head reduction. To see this, let us consider a proof term $u\parallel_{a} v$ of $\LC$. By the Normalization Theorem \ref{theorem-snNEM}, the head reduction of $u\parallel_{a} v$ reduces subterms inside the left part of the term until it is possible, afterwards it continues to reduce the right part and finally it stops. If we consider only the first half of the reduction, we get  
\begin{equation}\label{1} u\parallel_{a}v\succ^{*} u' \parallel_{a} v\end{equation}
for some $u'$ in head normal form and \emph{not} of the shape $\mathcal{C}[a\, t\,\sigma]$, otherwise a further reduction inside $u'$ would be possible. Thanks to the \emph{perfect logical symmetry} of the term $u\parallel_{a} v$, also $v\parallel_{a} u$ is a term of the same type. Again, we can reduce 
\begin{equation}\label{2}v\parallel_{a}u\succ^{*} v' \parallel_{a} u\end{equation}
for some $v'$ in head normal form and \emph{not} of the shape $\mathcal{C}[a\, t\,\sigma]$.
The point is that the head reductions (\ref{1}) and (2) \emph{can be made in parallel} and what we get,
\[u'\parallel_{a} v'\]
not only is a term of the same type of $u\parallel_{a} v$ and with no more free variables, it also
is a head normal form!

\section{Second-Order Intuitionistic Logic with Dummett's Axiom}\label{section-secondorder}

At the time of this writing, there is no known cut-free sequent calculus for second-order intuitionistic logic with Dummett's Axiom, which we call $\LCS$. Even if there were one, the situation would be similar to what happens in the hypersequent calculus for second-order G\"odel-Dummett logic \cite{LAvron}: there is no known cut-elimination procedure,  only a semantical proof that  valid statements can be proved without cuts. Why? This state of thing reminds  the status of Takeuti's conjecture \cite{Takeuti}, a problem which resisted the effort of the best researchers for many years in the 1950-60's, and was solved constructively in 1971 by Girard (see \cite{Girard}). It asked whether the now standard second-order sequent calculus was cut-free. A cut-elimination procedure for intuitionistic second-order sequent calculus was finally obtained only through translation to natural deduction, where the powerful Tait-Girard reducibility settles the matter. This shortcoming of sequent calculus is even worse in the case of hypersequent calculus, which is more complicated and no cut-elimination procedure is known at second-order.

In this section, we consider second-order natural deduction for $\LCS$ and prove the Normalization of head reduction. Unlike in hypersequent calculus, where second-order cut-elimination requires climbing a steep and cold combinatorial mountain, extending classical realizability to the second-oder case is a like a quiet stroll in a peaceful and sunny countryside road. Indeed, classical realizability was introduced directly in the second-order case by Parigot \cite{Parigot} and Krivine \cite{Krivine0}, without even bothering with the first-order case. We follow once again Krivine's successive formulation \cite{Krivine}. 

The language $\Language^{2}$ of $\LCS$ extends $\Language$ in the standard way, adding second order predicate variables, representing sets of individuals. 

\begin{defi}[Language of $\LCS$]\label{definition-languagear}
The language $\Language^{2}$ of $\LCS$ is defined as follows.
\begin{enumerate}

\item
The \textbf{terms} of $\Language^{2}$ are inductively defined as either variables $\alpha, \beta,\ldots$ or constants  $\con{c}$ or expressions of the form $\con{f}(m_{1}, \ldots, m_{n})$, with $\con{f}$ a function constant of arity $n$ and $m_{1}, \ldots, m_{n}\in\Language^{2}$.  

\item
There is a set of \textbf{predicate constant symbols} and of \textbf{predicate variables}. The \textbf{atomic formulas} of $\Language^{2}$ are all the expressions of the form $\mathcal{P}(m_{1}, \ldots, m_{n})$  and $X(m)$ such that  $\mathcal{P}$ is a predicate symbol of arity $n$, $X$ is a predicate variable and $m, m_{1}, \ldots, m_{n}$ are terms of $\Language^{2}$. We assume to have a $0$-ary predicate symbol $\bot$ which represents falsity.

\item
The \textbf{formulas} of $\Language^{2}$ are built from atomic formulas of $\Language^{2}$ by the logical constants  $\lor,\land,\rightarrow, \forall,\exists$, with quantifiers ranging over first-order  variables $\alpha, \beta, \ldots$ and second-order variables $X, Y, \ldots$: if $A, B$ are formulas, then $A\land B$, $A\lor B$, $A\rightarrow B$, $\forall \alpha\, A$, $\exists \alpha\, B$, $\forall X\, A$ are formulas. The  logical negation $\lnot A$ can be introduced, as usual, as an abbreviation of the formula $A\rightarrow\bot$ and the second-order existential quantification is defined as $\exists X\, A:= \forall Y.\,(\forall\,X.\, A\rightarrow Y(\con{c}))\rightarrow Y(\con{c})$.

\item As usual, if $A$ and $B$ are formulas of $\Language^{2}$ and $X$ is a predicate variable, we denote with $A[\lambda \alpha B/X]$ the formula obtained from $A$ by replacing all its atomic subformulas of the form $X(m)$ with $B[m/\alpha]$ (without capturing free variables of $B$).
\end{enumerate}
\end{defi}

\noindent The natural deduction for $\LCS$ and $\ALCS$ extends respectively the natural deduction for $\LC$ and the one for $\ALC$ with the following inference and reduction rules (see Girard \cite{Girard}):\\

\begin{description}
\item[Second-Order Universal Quantification] 
\AxiomC{$t: A$}
\UnaryInfC{$\Lambda X\, t: \forall X\, A$}
\DisplayProof
\ \ \ \ \ \ 
\AxiomC{$t: \forall X\, A$}
\UnaryInfC{$t (\lambda \alpha B): A [\lambda \alpha B/X]$}
\DisplayProof\\\\
where in the left rule $X$ does not occur free in the types of the free variables of $A$.
\end{description}

\begin{description}
\item[Reduction Rule for Universal Quantification] 
$(\Lambda X\, u) (\lambda\alpha B) \mapsto u[\lambda \alpha B/X]$\\
\end{description}
The Definition \ref{definition-stack} of stack is of extended to $\ALCS$ allowing expressions $(\lambda \alpha B)$, with $B$ formula, to appear in the stack,  and the Definition \ref{definition-head} of head redex is extended to the terms of $\ALCS$ by saying that $(\Lambda X\, u) (\lambda\alpha B)$ is the \textbf{head redex} of $(\Lambda X\, u) (\lambda\alpha B)\,\sigma$ for every stack $\sigma$. The reduction relation $\succ$ for the terms of $\ALCS$ is then defined as in Definition \ref{definition-headred}. In the following,  we define $\snn$ to be the set of  normalizing proof terms of $\ALCS$. 

In order to define second-order realizability we need the concept of \emph{realizability opponent}, which is nothing but a function mapping terms of $\Language^{2}$ to arbitrary sets of stacks adapted to some fixed type. The idea is that an arbitrary realizability opponent represents the sets of tests that an arbitrary definition of realizability requires to pass in order to declare a term to be a realizer. 

\begin{defi}[Realizability Opponent]\label{definiton-redo}\mbox{}
\begin{enumerate}
\item
A stack $\sigma$  of $\ALCS$ is said to be \textbf{adapted to a type} $C$, if for all terms $t$ of type $C$, $t\,\sigma$ is still a term of $\ALCS$. 
\item 
A \textbf{realizability opponent} of type $\lambda \alpha C$ is any function that maps each term $m$ of $\Language^{2}$ to a set of stacks adapted to $C[m/\alpha]$. We assume that for each realizability opponent $\mathcal{X}$ of type $C$ there is in $\Language^{2}$  an \textbf{opponent predicate constant} $\cand{\mathcal{X}}$ of type $\lambda \alpha C$ associated to it. 
\end{enumerate}
\end{defi}

\noindent Realizability for $\ALCS$ extends realizability for $\ALC$ to second-order quantification. The idea is the usual: we would like to define $t\red \forall X\, A$ as: for all formulas $B$, $t\,(\lambda \alpha B) \red A[\lambda \alpha B/X]$, but we cannot. So we define $t\red \forall X\, A$ as $t\red A$ for all possible definitions of realizability which $X$ can be assigned to, that is, for all reducibility opponents that replace $X$.

\begin{defi}[Classical Realizability for $\ALCS$]
\label{definition-realizability}
Assume $t$ is a term of $\ALCS$ and $C$ is a formula of $\Language^{2}$. We define by mutual induction the relation $t\red C$ (``$t$ is reducible of type $C$'')  and a set $|| C ||$ of stacks  of $\ALCS$ according to the form of $C$:
\begin{itemize}

\item $t\red C$ if and only if $t: C$ and for all $\sigma\in ||C||$, $t\,\sigma \in \snn$
\item
$||\emp{}|| = \{\epsilon\}$ if $\emp{}$ is atomic

\item $||\cand{{B}}(m) ||=\mathcal{B}(m)$ for each realizability opponent $\mathcal{B}$

\item
$||A\rightarrow B||=\{u\dott\sigma\ |\  u\red A \land \sigma\in ||B||\}\cup \{\epsilon\}$

\item
$||A\land B||=\{\pi_0\dott \sigma\ |\  \sigma\in ||A||\}\cup \{\pi_1\dott \sigma\ |\  \sigma \in ||B||\}\cup \{\epsilon\}$ 

\item
$||A\lor B||=\{ [x.u, y.v]\dott\sigma \ |\ \forall t.\ (t \red A\implies u[t/x]\,\sigma \in \snn) \land  (t\red B\implies v[t/y]\, \sigma \in \snn)\}\cup \{\epsilon\}$ 
\item
$||\forall \alpha\, A||=\{m\dott\sigma\ |\ m\in\Language \land \sigma \in ||A[{m}/\alpha]||\}\cup \{\epsilon\}$
\item

$||\exists \alpha\,A||=\{ [(\alpha, x).v]\dott\sigma \ |\ \forall t.\ t \red A[m/\alpha]\implies v[m/\alpha][t/x]\, \sigma \in \snn \}\cup \{\epsilon\}$
\item $||\forall X\, A||=\{ (\lambda \alpha B)\dott \sigma\ |\  \mbox{$\sigma\in ||A[\cand{B}/X]|| $ for some  realizability opponent $\mathcal{B}$ of type $\lambda \alpha B$}\}\cup \{\epsilon\}$
\end{itemize}
\end{defi}

\noindent The next proposition says that in the definition of $||A[\lambda \alpha B/X]||$, we can replace $\lambda \alpha B$ with the realizability opponent corresponding to it, transforming in this way an intensionally defined set into an extensionally defined object.

\begin{prop}[Comprehension]\label{proposition-comprehension}
Let $B$ a formula of $\Language^{2}$.  Suppose $\mathcal{B}$ is a realizability opponent such that
$$\mathcal{B}= m \mapsto ||B[m/\alpha]||$$
Then for every formula $A$ of $\Language^{2}$
$$||A[\cand{B}/X]||=||A[\lambda \alpha B/X]||$$
\end{prop}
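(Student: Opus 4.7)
The plan is to proceed by induction on the structure of the formula $A$, proving simultaneously both the stated equality $||A[\cand{B}/X]|| = ||A[\lambda \alpha B/X]||$ and the auxiliary equivalence $t \red A[\cand{B}/X] \iff t \red A[\lambda \alpha B/X]$ for every term $t$. The auxiliary equivalence is needed because the clauses defining $||C||$ for $C$ of the form $A_1 \vee A_2$ and $\exists \alpha\, A_1$ refer back to the realizability relation on subformulas, so a pure induction on the shape of $||A||$ will not close. Once the test sets agree on subformulas, however, the equivalence on realizers is immediate from the definition $t \red C \iff t : C \land \forall \sigma \in ||C||\ t\,\sigma \in \snn$, since $A[\cand{B}/X]$ and $A[\lambda \alpha B/X]$ trivially produce the same typing judgements after erasing predicate information.

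The only genuinely interesting base case is $A = X(m)$. Then by the definition of substitution $A[\cand{B}/X] = \cand{B}(m)$ and $A[\lambda \alpha B/X] = B[m/\alpha]$, so the clause $||\cand{B}(m)|| = \mathcal{B}(m)$ and the hypothesis $\mathcal{B}(m) = ||B[m/\alpha]||$ yield the equality directly. For atomic formulas built from a predicate constant, or from a predicate variable $Y \neq X$, the substitution acts as the identity and there is nothing to prove. The propositional and first-order cases ($\rightarrow, \land, \lor, \forall \alpha, \exists \alpha$) are routine: each clause in Definition \ref{definition-realizability} is built compositionally from $||\cdot||$ and $\red$ on strict subformulas, and commutes with the substitution $[\cand{B}/X]$ and $[\lambda \alpha B/X]$ respectively, so the induction hypothesis applies straight away.

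The main obstacle will be the second-order universal case $A = \forall Y\, A_0$, which is also where the design of the definition really pays off. By renaming we may assume $Y \neq X$ and $Y$ not free in $B$, so the two substitutions commute with the outer $\forall Y$. A stack in $||(\forall Y\, A_0)[\cand{B}/X]||$ is either empty or of the form $(\lambda \beta C)\dott\sigma$ where $\sigma \in ||A_0[\cand{B}/X][\cand{C}/Y]||$ for some realizability opponent $\mathcal{C}$ of type $\lambda \beta C$; symmetrically on the right-hand side. Since the two substitutions for distinct predicate variables commute, the inductive hypothesis on $A_0$ (applied with the same $\mathcal{C}$, $\cand{C}$ in both sides) converts one set of stacks into the other. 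The point is that we do not need any well-foundedness trick on predicate substitution because the induction is on the shape of $A$, not on the formula $A[\lambda \alpha B/X]$, whose size can blow up.

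Finally I would state the corollary I actually want to use elsewhere: combining the equality of test sets with the auxiliary equivalence gives $t \red A[\cand{B}/X] \iff t \red A[\lambda \alpha B/X]$, which is the form needed to push the Adequacy Theorem through the $\forall X$-introduction rule in the second-order setting, exactly as in Krivine's treatment~\cite{Krivine}.
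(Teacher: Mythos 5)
Your proof follows essentially the same route as the paper's: induction on the structure of $A$, with the only substantive case being $A = X(m)$, where $||\cand{B}(m)|| = \mathcal{B}(m) = ||B[m/\alpha]||$ gives the result, all other cases being routine. The paper dismisses the remaining cases as ``straightforward,'' whereas you correctly make explicit the mutual induction with the auxiliary equivalence $t \red A[\cand{B}/X] \iff t \red A[\lambda\alpha B/X]$ needed for the $\lor$ and $\exists$ clauses -- a detail the paper's proof leaves implicit but which does not change the argument.
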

\begin{proof}
Standard, by induction on $A$ (see Krivine \cite{Krivine}). 
\begin{enumerate}
\item $A=\mathcal{P}(m_{1}, \ldots, m_{n})$, where $\mathcal{P}$ is a predicate constant symbol. Then, $A[\cand{B}/X]= \mathcal{P}(m_{1}, \ldots, m_{n})=A[\lambda \alpha B/X]$ and the thesis is trivial.
\item $A=Y(m)$, where $Y$ is a predicate variable. Then, if $Y\neq X$, the thesis is trivial, since we  have 
$$A[\cand{{B}}/X]=Y(m)= A[\lambda \alpha B/X]$$
 So let us suppose $Y=X$. Then
$$
||A[\cand{{B}}/X]||=||\cand{{B}}(m)||=\mathcal{B}(m)=||B[m/\alpha]||=||A[\lambda \alpha B/X]||$$

\item The other cases are straightforward.\qedhere
\end{enumerate}
\end{proof}

\noindent We extended Proposition \ref{proposition-redelim} by showing that realizability is also sound with respect to second-order quantification elimination. 
\begin{prop}[Properties of Realizability: $\forall$-Eliminations]\label{proposition-redelim2}
\item If $t\red \forall X\, A$, then for every formula $B$ of $\Language^{2}$, $t (\lambda \alpha B)\red A[\lambda\alpha B/X]$.

\end{prop}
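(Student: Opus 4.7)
The proof is a direct application of the Comprehension proposition to the second-order case of the definition of realizability. The plan is to reduce the claim to the clause defining $||\forall X\, A||$, exploiting the fact that the canonical realizability opponent associated to $\lambda \alpha B$ carves out exactly the valid tests for $A[\lambda \alpha B/X]$.

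First I would check that the typing is in order: since $t : \forall X\, A$ by assumption, the second-order $\forall$-elimination rule gives $t(\lambda \alpha B) : A[\lambda \alpha B/X]$, so the type-correctness clause of Definition \ref{definition-realizability} is satisfied. Next, to verify the test-passing clause, I would fix an arbitrary $\sigma \in ||A[\lambda \alpha B/X]||$ and show that $t(\lambda \alpha B)\,\sigma \in \snn$.

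The key step is to introduce the realizability opponent $\mathcal{B}$ of type $\lambda \alpha B$ defined by $\mathcal{B}(m) := ||B[m/\alpha]||$ for every term $m$ of $\Language^{2}$. By Proposition \ref{proposition-comprehension} applied to this $\mathcal{B}$, we have the identity
$$||A[\cand{B}/X]|| = ||A[\lambda \alpha B/X]||,$$
so $\sigma \in ||A[\cand{B}/X]||$. Then, by the clause of Definition \ref{definition-realizability} for universally quantified second-order formulas, the stack $(\lambda \alpha B)\dott \sigma$ belongs to $||\forall X\, A||$, since it has been obtained from a realizability opponent $\mathcal{B}$ of type $\lambda\alpha B$ in the prescribed way.

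Finally, using the hypothesis $t \red \forall X\, A$ and the test just constructed, we obtain $t(\lambda \alpha B)\,\sigma \in \snn$, which is what we needed. I do not expect any real obstacle: the content of the argument is entirely packaged into Proposition \ref{proposition-comprehension}, and the rest is unfolding definitions. The only point to be careful about is ensuring that $\sigma$ is an admissible stack in the sense of Definition \ref{definiton-redo}, but this is automatic since $\mathcal{B}(m) = ||B[m/\alpha]||$ consists of stacks adapted to $B[m/\alpha]$ by construction of the realizability sets.
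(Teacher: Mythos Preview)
Your proof is correct and follows essentially the same approach as the paper: define the canonical realizability opponent $\mathcal{B}(m) = ||B[m/\alpha]||$, invoke Comprehension to transfer $\sigma$ into $||A[\cand{B}/X]||$, and then read off membership of $(\lambda\alpha B)\dott\sigma$ in $||\forall X\, A||$ directly from the definition. The only addition over the paper's version is your explicit remark on typing and on stacks being adapted, which is harmless.
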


\begin{proof}\mbox{}
Assume $t\red \forall X\, A$. Let $\sigma \in ||A[\lambda\alpha B/X]||$; we must show $t (\lambda \alpha B)\,\sigma\in\snn$. Let us consider a realizability opponent $\mathcal{B}$ such that
$$\mathcal{B}=m\mapsto ||B[m/\alpha]||$$
By Proposition \ref{proposition-comprehension}, $\sigma\in  ||A[\cand{B}/X]||$. 
 By Definition \ref{definition-realizability}, $(\lambda \alpha B)\dott \sigma \in ||\forall X A||$, and since $t\red \forall X A$,  we conclude $t (\lambda \alpha B)\,\sigma\in\snn$.
\end{proof}
We extended Proposition \ref{proposition-somecases} by showing that realizability is also sound with respect to second-order quantification introduction. 

\begin{prop}[Properties of Realizability: $\forall$-Introductions]\label{proposition-somecases2}

 If for every realizability opponent $\mathcal{B}$ of type $\lambda \alpha B$, $u[\lambda\alpha B/X]\red A[\cand{B}/X]$, then $\Lambda X\, u\red \forall X\, A$.

\end{prop}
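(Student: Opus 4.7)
The plan is to mimic closely the proof of the corresponding first-order statement for $\lambda\alpha$ in Proposition \ref{proposition-somecases}. Fix an arbitrary $\sigma \in ||\forall X\, A||$; we shall show $(\Lambda X\, u)\,\sigma \in \snn$, which together with the trivial typing claim (from the hypothesis it follows that $u : A$, hence $\Lambda X\, u : \forall X\, A$) yields $\Lambda X\, u \red \forall X\, A$ by Definition \ref{definition-realizability}.

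First I would dispatch the easy case $\sigma = \epsilon$: the term $\Lambda X\, u$ matches none of the clauses of Definition \ref{definition-head}, so it contains no head redex and is therefore already a head normal form, hence trivially in $\snn$. The informative case is $\sigma = (\lambda \alpha B)\dott \sigma'$ with $\sigma' \in ||A[\cand{B}/X]||$ for some realizability opponent $\mathcal{B}$ of type $\lambda \alpha B$. Here the idea is to use the sole reduction rule for the second-order universal quantifier: by construction $(\Lambda X\, u)(\lambda \alpha B)$ is the head redex of $(\Lambda X\, u)(\lambda \alpha B)\, \sigma'$, so
\[(\Lambda X\, u)(\lambda \alpha B)\, \sigma' \succ u[\lambda \alpha B/X]\, \sigma'.\]

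At this point I would instantiate the hypothesis with the chosen opponent $\mathcal{B}$, obtaining $u[\lambda \alpha B/X] \red A[\cand{B}/X]$, and then apply this realizability statement to the test $\sigma' \in ||A[\cand{B}/X]||$ to conclude $u[\lambda \alpha B/X]\, \sigma' \in \snn$. Since head reduction is deterministic and the one-step reduct lies in $\snn$, the original term $(\Lambda X\, u)(\lambda \alpha B)\, \sigma'$ also admits no infinite head reduction, hence belongs to $\snn$, as required.

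No real obstacle is expected: the argument is structurally identical to the first-order implication and $\forall$-introduction cases of Proposition \ref{proposition-somecases}, the only new ingredient being that the ``witness'' specializing $X$ is a predicate abstract $\lambda \alpha B$, and its semantic counterpart is the realizability opponent $\mathcal{B}$ supplied by the stack $\sigma$. The minor point worth stating carefully is that the well-adaptedness of $\sigma'$ to the type $A[\lambda \alpha B/X]$ follows from Definition \ref{definiton-redo} together with the fact that opponents of type $\lambda \alpha B$ produce stacks adapted to $B[m/\alpha]$, which is exactly what is needed for $u[\lambda \alpha B/X]\,\sigma'$ to be a well-formed term of $\ALCS$.
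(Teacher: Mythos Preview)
Your proof is correct and follows essentially the same route as the paper: both split on whether $\sigma=\epsilon$ or $\sigma=(\lambda\alpha B)\dott\rho$, use the head reduction $(\Lambda X\, u)(\lambda\alpha B)\,\rho\succ u[\lambda\alpha B/X]\,\rho$, and then invoke the hypothesis at the opponent $\mathcal{B}$ supplied by the stack to conclude $u[\lambda\alpha B/X]\,\rho\in\snn$. Your additional remarks on typing and well-adaptedness are harmless elaborations of points the paper leaves implicit.
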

\begin{proof}\
Suppose that for every formula $B$  of $\Language^{2}$, $u[\lambda\alpha B/X]\red A[\cand{B}/X]$. Let $\sigma \in ||\forall X A||$. We have to show $(\Lambda X\, u)\,\sigma\in\snn$.  If $\sigma=\epsilon$, indeed $\Lambda X\, u\in\snn$.  Suppose then $\sigma= (\lambda \alpha B)\dott \rho$, with $\rho\in ||A[\cand{B}/X]||$ and $\mathcal{B}$ realizability opponent  of type $\lambda \alpha B$.  Since by hypothesis $u[\lambda\alpha B/X]\red A[\cand{B}/X]$, we have $u[\lambda\alpha B/X]\,\rho\in \snn$; moreover, $$(\Lambda X\, u) (\lambda \alpha B)\, \rho \redn u[\lambda \alpha B/X]\, \rho$$ Therefore, $(\Lambda x\, u) (\lambda\alpha B)\, \rho\in\snn$.
\end{proof}
The Adequacy Theorem is readily extended to second-order realizability. 
\begin{thm}[Adequacy Theorem]\label{Adequacy Theorem2}
Suppose that $w: A$ in
the system $\LCS$, with $w$ having free variables among
$x_1^{A_1},\ldots,x_n^{A_n}$. 
Let $r_1,\ldots,r_k$ and $\cand{B}_{1},\ldots, \cand{B}_{m}$ be respectively terms of $\Language^{2}$ and realizability opponents of type $\lambda \beta_{1} B_{1},\ldots,\lambda \beta_{m} B_{m}$. For every formula $C$, set $\subst{C}=C[{r}_1/\alpha_1\cdots
{r}_k/\alpha_k\  \cand{B}_{1}/X_{1}\cdots  \cand{B}_{m}/X_{m}]$. If there are terms $t_1, \ldots, t_n$ such that
$$\text{ for  $i=1,\ldots, n$, }t_i\red  \subst{A}_{i}$$
 then
$$w [{r}_1/\alpha_1\cdots
{r}_k/\alpha_k\ \lambda \beta_{1} B_{1}/X_{1}\cdots \lambda \beta_{m} B_{m}/X_{m}][t_1/x_1^{\subst{A}_{1}}\cdots
t_n/x_n^{\subst{A}_{n}}]\red \subst{A}$$
\end{thm}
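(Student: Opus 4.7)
The plan is to adapt the proof of the first-order Adequacy Theorem \ref{Adequacy Theorem} to the second-order setting, by induction on the derivation of $w : A$. All cases inherited from $\LC$ go through unchanged, provided we verify that the substitution notation $\subst{\cdot}$ behaves correctly with respect to the new kind of substitution of opponent predicate constants $\cand{B}_1,\ldots,\cand{B}_m$ for predicate variables $X_1,\ldots,X_m$. Since these opponent constants are treated as predicate constants in the language $\Language^2$, substituting them into formulas commutes with the other first-order substitutions, and the hypothesis $t_i \red \subst{A}_i$ is exactly what each first-order case requires. So the routine part of the induction is truly routine; I would only write out the axiom case and one representative connective to fix notation, and then handle carefully the two genuinely new cases.

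The $\forall_2$-elimination case is the easier of the two. Here $w = u\,(\lambda \alpha B)$ with $u : \forall X\, C$ and $A = C[\lambda\alpha B/X]$, so $\subst{w} = \subst{u}\,(\lambda\alpha \subst{B})$. By the induction hypothesis $\subst{u} \red \forall X\, \subst{C}$, and then Proposition \ref{proposition-redelim2} gives $\subst{u}\,(\lambda\alpha \subst{B}) \red \subst{C}[\lambda\alpha \subst{B}/X]$. It remains only to observe that $\subst{C}[\lambda\alpha \subst{B}/X] = \subst{A}$, which follows because $X$ does not clash with the substituted variables $\alpha_1,\ldots,\alpha_k, X_1,\ldots,X_m$ after the standard renaming of bound variables.

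The $\forall_2$-introduction case is where the main delicacy lies. Here $w = \Lambda X\, u$ with $u : C$ and $A = \forall X\, C$, so $\subst{w} = \Lambda X\, \subst{u}$ after renaming $X$ to be fresh from $\alpha_1,\ldots,\alpha_k,X_1,\ldots,X_m$. By Proposition \ref{proposition-somecases2}, it suffices to show that for every realizability opponent $\mathcal{B}$ of some type $\lambda\alpha B$, we have $\subst{u}[\lambda\alpha B/X] \red \subst{C}[\cand{B}/X]$. This is precisely the instance of the induction hypothesis obtained by extending the substitution list with $\cand{B}/X$ and $\lambda\alpha B/X$; the side condition that $X$ not occur free in the types $A_i$ of the free variables of $u$ guarantees that the realizers $t_i$ still realize the right types after the extra substitution, i.e.\ $t_i \red \subst{A}_i[\cand{B}/X] = \subst{A}_i$.

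The hardest conceptual step, which I expect to be the main obstacle, is ensuring that the interaction between the syntactic substitution $[\lambda\alpha B/X]$ in proof terms and formulas, and the semantic substitution $[\cand{B}/X]$ at the level of the realizability predicate, stays coherent throughout the induction. This is controlled entirely by Proposition \ref{proposition-comprehension}, which identifies $||A[\cand{B}/X]||$ with $||A[\lambda\alpha B/X]||$ when $\mathcal{B}(m) = ||B[m/\alpha]||$; invoking it in the $\forall_2$-elimination case (to justify that the stack $\sigma \in ||\subst{C}[\lambda\alpha\subst{B}/X]||$ lies in $||\subst{C}[\cand{B}/X]||$, which is what Proposition \ref{proposition-redelim2} actually produces) closes the loop. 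Once this bookkeeping is in place, the remaining verifications are the same as in the first-order proof, and the theorem follows.
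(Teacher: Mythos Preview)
Your proposal is correct and follows essentially the same route as the paper: induction on the derivation, first-order cases inherited from Theorem~\ref{Adequacy Theorem}, and the two new second-order cases handled via Propositions~\ref{proposition-redelim2} and~\ref{proposition-somecases2}, with the side condition on $X$ in the $\forall_2$-introduction case used exactly as you describe. One small remark: your final paragraph slightly misplaces where Comprehension (Proposition~\ref{proposition-comprehension}) is invoked---it is already absorbed into the proof of Proposition~\ref{proposition-redelim2}, whose conclusion is stated directly as $t(\lambda\alpha B)\Vdash A[\lambda\alpha B/X]$, so no further appeal to Comprehension is needed in the $\forall_2$-elimination case of the main induction.
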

\proof
For any term $v$, we define 
$$\subst{v}:=v[{r}_1/\alpha_1\cdots
{r}_k/\alpha_k\ \lambda \beta_{1} B_{1}/X_{1}\cdots \lambda \beta_{m} B_{m}/X_{m}][t_1/x_1^{\subst{A}_{1}}\cdots
t_n/x_n^{\subst{A}_{n}}]$$
 We proceed by induction on $w$. Consider the last rule $\mathscr{R}$ in the derivation of $w: A$: we just have to deal with the second-order cases, the other ones have been settled in the proof of Theorem \ref{Adequacy Theorem}.
 \begin{enumerate}
 \item If $\mathscr{R}$ is the second-order $\forall E$ rule, then $w=u\, (\lambda\alpha B)$, $A=C[\lambda \alpha B /X]$
and $u: \forall X\, C$. So,
$\subst{w}=\subst{u}\,(\lambda \alpha \subst{C})$.  By inductive hypothesis  $\subst{u}\red
\forall X^{}\, \subst{C}$ and so $\subst{u}\,(\lambda \alpha\subst{B})\red \subst{C}[\lambda \alpha \subst{B}/X]$ by Proposition \ref{proposition-redelim2}. 

\item
If $\mathscr{R}$ is the second-order $\forall I$ rule, then $w=\Lambda X\, u$, $A=\forall X\, B$ and $u: B$ (with $X$ not occurring free in the types $A_{1}, \ldots, A_{n}$ of the free variables of $u$). So, $\subst{w}=\Lambda X\, \subst{u}$, since we may assume $X\neq X_1, \ldots, X_m$. By Proposition \ref{proposition-somecases2}, it is enough to prove that $\subst{u}[\lambda \alpha B/X]\red \subst{B}[\cand{B}/X]$ for every realizability opponent $\mathcal{B}$ of type $\lambda \alpha B$, which amounts to showing that the induction hypothesis can be applied to $u$. For this purpose, we observe that, since $X\neq X_1, \ldots, X_m$, for $i=1, \ldots, n$ we have
\[t_i\red \subst{A}_i=\subst{A}_i[\cand{B}/X]\eqno{\qEd}\]
 \end{enumerate}\medskip

\noindent As consequence of the Adequacy Theorem \ref{Adequacy Theorem2}, we obtain that every typed term of $\LCS$ is normalizable by head reduction. 
\begin{cor}[Normalization for $\LCS$]\label{theorem-sn2order} Suppose $ t: A$ in $\LCS$. Then $t\in\sn$.
\end{cor}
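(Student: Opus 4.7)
The plan is to mimic the proof of Corollary \ref{theorem-snNEM} for first-order $\LC$, but invoking the second-order Adequacy Theorem \ref{Adequacy Theorem2} instead. Let $t : A$ be a typed proof term of $\LCS$, and let $x_1 : A_1, \ldots, x_n : A_n$ enumerate its free proof-term variables.

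First I would verify that every proof-term variable realizes its own type: for every $\sigma \in ||A_i||$, the term $x_i\, \sigma$ is neutral and has no head redex (no binder/constructor faces the stack), hence $x_i\, \sigma \in \hnf \subseteq \snn$. By Definition \ref{definition-reducibility}, this gives $x_i \red A_i$. Next, to apply Theorem \ref{Adequacy Theorem2}, I would instantiate the first-order and second-order substitution parameters trivially, i.e.\ take $k = 0$ first-order substitutions and, for each free second-order variable $X_j$ occurring in $A$ or in any $A_i$, pick a fresh atomic formula $B_j$ together with the canonical realizability opponent $\mathcal{B}_j$ of the appropriate type so that $\mathcal{B}_j = m \mapsto ||B_j[m/\alpha]||$. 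The substitutions then leave the types essentially unchanged up to the association $X_j \leftrightarrow \cand{B}_j$, and $x_i \red \subst{A}_i$ continues to hold by the same neutrality argument.

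With these realizers in hand, Theorem \ref{Adequacy Theorem2} yields $\subst{t} \red \subst{A}$, where $\subst{t}$ differs from $t$ only by the harmless renaming induced by the chosen substitutions. Since $\epsilon \in ||\subst{A}||$ (this holds for every formula by inspection of the clauses defining $||\cdot||$), we conclude $\subst{t} = \subst{t}\, \epsilon \in \snn$. Finally, head reductions of $t$ and of $\subst{t}$ are in one-to-one correspondence (the substitutions only rename first- and second-order parameters and cannot create or destroy head redexes of the lambda calculus), so any infinite head reduction of $t$ would lift to one of $\subst{t}$, contradicting $\subst{t} \in \snn$. Hence $t \in \snn$, and since $\LCS$ is a sublanguage of $\ALCS$ with the same reduction rules, $t$ is normalizable in $\LCS$ too.

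The main ``obstacle'' is only notational: one must be careful in selecting the realizability opponents for the free second-order variables so that the hypothesis of Theorem \ref{Adequacy Theorem2} is satisfied while leaving the residual type $\subst{A}$ well-formed and the canonical realizers $x_i \red \subst{A}_i$ available. The computational content of the argument is otherwise identical to the first-order case and requires no new combinatorial work — all the heavy lifting has already been done by Theorem \ref{Adequacy Theorem2} and by Proposition \ref{proposition-parallel} which handled the $\D$-rule.
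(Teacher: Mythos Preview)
Your approach is correct and follows the paper's intended argument, which is left implicit and simply mirrors the first-order Corollary~\ref{theorem-snNEM}. However, you take an unnecessary detour: there is no need to substitute anything for the free second-order variables at all. By Definition~\ref{definition-languagear} a formula $X(m)$ with $X$ a predicate variable is atomic, and hence $||X(m)|| = \{\epsilon\}$ by the first clause of Definition~\ref{definition-realizability}. Therefore Theorem~\ref{Adequacy Theorem2} can be applied directly with $k = m = 0$ and $t_i = x_i$, yielding $t \red A$ and thus $t = t\,\epsilon \in \snn$ immediately---no $\subst{t}$ versus $t$ bookkeeping, no ``harmless renaming'' argument, and no need to transfer normalization back along a correspondence of head reductions. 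Your route works, but the extra machinery it introduces (choosing fresh atoms, canonical opponents, and arguing the bijection of head reductions) is exactly what taking the empty second-order substitution lets you skip.
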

We can finally prove that second-order Dummett's logic $\LCS$ is Herbrand constructive. 
 
\begin{thm}[Second-Order Herbrand Disjunction Extraction]\label{theorem-extraction}
Let $\exists\alpha\,A$ be any formula. Suppose 
$$\LCS \vdash t:  \exists \alpha\, A$$ Then there is a proof term $u$ such that $t\succ^{*} u\in\hnf$, $\LCS \vdash u: \exists \alpha\, A$ and $u$ is an Herbrand normal form   
$$u=(m_{0}, v_{0})\parallel_{a_{1}} (m_{1}, v_{1})\parallel_{a_{2}}\ldots \parallel_{a_{k}} (m_{k}, v_{k})$$
Moreover, 
$$\LCS \vdash A[m_{1}/\alpha]\lor \dots \lor A[m_{k}/\alpha]$$
\end{thm}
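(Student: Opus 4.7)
The plan is to mirror the first-order Corollary~\ref{theorem-extraction2} verbatim, plugging in the second-order machinery where appropriate. First, I would apply the second-order Adequacy Theorem~\ref{Adequacy Theorem2} to $t$ with empty substitution lists: since $t$ is closed (no free proof-term variables, no free first- or second-order variables of interest), the theorem yields $t\red \exists \alpha\, A$. Because $\epsilon\in ||\exists\alpha\, A||$, this realizability claim immediately entails $t\in\snn$, so $t$ head-reduces to some $u\in\hnf$; Subject Reduction (whose proof extends to $\LCS$ by routine checking of the two new clauses for $\Lambda X\, u$ and $u\,(\lambda\alpha B)$) gives $\LCS\vdash u:\exists\alpha\,A$.

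Next I would invoke a second-order analogue of the Head Normal Form Property (Proposition~\ref{proposition-hnf}). The proof of that proposition is purely syntactic and the only change required is enlarging the set of values to include $\Lambda X\, w$ and noting that the second-order elimination redex $(\Lambda X\, w)(\lambda\alpha B)$ is treated just like $(\lambda\alpha\, w)\, m$. With that extension, decomposing $u$ into its elementary parallel processes
\[u=u_{0}\parallel_{a_{1}}u_{1}\parallel_{a_{2}}\ldots\parallel_{a_{k}}u_{k},\]
each $u_{i}$ must be either of the form $x\,\sigma$ with $x\notin\{a_{1},\ldots,a_{k}\}$, or $a_{j}$, or a value (the $\abort$ case cannot occur since $t$ contains no $\abort$ and reduction does not introduce it). As in the proof of Theorem~\ref{theorem-extraction0}, closedness of $u$ rules out the free-variable case, and the constraint $u_{i}:\exists\alpha\, A$ together with inspection of the value forms ($\lambda x\,w$, $\Lambda X\,w$, $\lambda\alpha\,w$, $\pair{w_{1}}{w_{2}}$, $\inj_{i}(w)$, $\abort$, $(m,w)$) forces $u_{i}=(m_{i},v_{i})$. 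So $u$ is an Herbrand normal form.

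Finally, Proposition~\ref{proposition-herbrandforms} is stated for $\LC$, but its proof uses only $\D$, disjunction introduction, and the typing of existentials — all of which are inherited unchanged by $\LCS$. Applied to $u$, it yields a proof term $u^{+}$ witnessing $\LCS\vdash A[m_{1}/\alpha]\lor\cdots\lor A[m_{k}/\alpha]$, which is the remaining half of the conclusion.

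The only step that requires genuine care is verifying that Proposition~\ref{proposition-hnf} really carries over: one must ensure that the new head-redex clause for $(\Lambda X\, w)(\lambda\alpha B)\,\sigma$ fits into the inductive case analysis without creating new normal-form shapes, and in particular that a neutral term headed by a second-order application is always a redex (so it cannot appear in $u$). This is immediate from the fact that $\Lambda X\, w$ is always a value and the head-reduction rule triggers for any adjacent type-argument, so no new residual normal-form case is introduced. Everything else is a straightforward port of the first-order argument.
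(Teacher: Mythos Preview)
Your proposal is correct and follows exactly the route the paper intends: the paper's own proof is the single line ``As the proof of Theorem~\ref{theorem-extraction0}'', which implicitly presupposes the second-order Adequacy Theorem, Subject Reduction, the Head Normal Form Property, and Proposition~\ref{proposition-herbrandforms} in precisely the way you spell out. You are simply more explicit than the paper about the routine second-order adaptations (adding $\Lambda X\, w$ to the values, handling the new head-redex clause), all of which are indeed straightforward.
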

\begin{proof}
As the proof of Theorem \ref{theorem-extraction0}.
\end{proof}

\section*{Acknowledgments} I would like to thank Agata Ciabattoni: this work arose, and greatly benefited, from conversations with her. I would also like to thank Francesco Genco for interesting exchanges about the topic.

\appendix


\end{document}